\keywords{adaptive evaluation, incremental maintenance}
\pgfplotsset{compat=1.16}
\newtheoremstyle{cited}%
{.5\baselineskip\@plus.2\baselineskip
    \@minus.2\baselineskip}
{.5\baselineskip\@plus.2\baselineskip
    \@minus.2\baselineskip}
{\itshape}
{\parindent}
{}
{.}
{.5em}
{\textsc{\thmname{#1}} \thmnote{\normalfont#3}}
\theoremstyle{cited}
\newtheorem{citedprop}{Proposition}
\def\punto{$\hspace*{\fill}\Box$}
\definecolor{light-gray}{gray}{0.7.2}
\definecolor{goodgreen}{rgb}{0.1, 0.5, 0.1}
\definecolor{burntorange}{rgb}{0.8, 0.33, 0.0}
\newcommand{\JOIN}{\text{\larger[1]$\Join$}}
\newcommand{\vars}{\mathit{vars}}
\newcommand{\free}{\mathit{free}}
\newcommand{\atoms}{\mathit{atoms}}
\newcommand{\dep}{\textit{dep}}
\newcommand{\sibling}{\textsf{has\_sibling}}
\newcommand{\bigO}[1]{\mathcal{O}(#1)}
\newcommand{\VO}{\mathsf{VO}}
\newcommand{\canonicalVO}{\mathsf{canonVO}}
\newcommand{\freeTopVO}{\mathsf{freeTopVO}}
\newcommand{\fw}{\mathsf{w}}
\newcommand{\dfw}{\delta}
\newcommand{\linenumber}{\makebox[2ex][r]{\rownumber\TAB}}
\newcommand{\bound}{\mathit{bound}}
\newcommand{\eps}{\epsilon}
\newcommand{\Dom}{\mathsf{Dom}}
\newcommand{\inst}[1]{\mathbf{#1}}
\newcommand{\tup}[1]{\mathbf{#1}}
\newcommand{\anc}{\mathsf{anc}}
\newcommand{\floor}[1]{\left\lfloor #1 \right\rfloor}
\newcommand{\calT}{\mathcal{T}}
\newcommand{\calB}{\mathcal{B}}
\newcommand{\calF}{\mathcal{F}}
\newcommand{\calS}{\mathcal{S}}
\newcommand{\calX}{\mathcal{X}}
\newcommand{\calY}{\mathcal{Y}}
\newcommand{\calZ}{\mathcal{Z}}
\newcommand{\TAB}{\makebox[2.5ex][r]{}}%
\newcommand{\STAB}{\makebox[1.5ex][r]{}}%
\newcommand{\LET}{\textbf{let}\xspace}%
\newcommand{\IF}{\textbf{if}\xspace}%
\newcommand{\ELSE}{\textbf{else}\xspace}%
\newcommand{\WHILE}{\textbf{while}\xspace}%
\newcommand{\FOREACH}{\textbf{foreach}\xspace}%
\newcommand{\DO}{\textbf{do}\xspace}%
\newcommand{\RETURN}{\textbf{return}\xspace}%
\newcommand{\MATCH}{\textbf{switch}\xspace}%
\newcommand{\EOF}{\textbf{EOF}\xspace}%
\newcounter{magicrownumbers}
\newcommand\rownumber{\footnotesize\stepcounter{magicrownumbers}\arabic{magicrownumbers}}
\theoremstyle{plain} 
\newtheoremstyle{cited}%
{.5\baselineskip\@plus.2\baselineskip
    \@minus.2\baselineskip}
{.5\baselineskip\@plus.2\baselineskip
    \@minus.2\baselineskip}
{\itshape}
{\parindent}
{}
{.}
{.5em}
{\textsc{\thmname{#1}} \thmnote{\normalfont#3}}
\theoremstyle{cited}
\newtheorem{citedthm}{Theorem}
\begin{document}

\title[Trade-offs in Static and Dynamic Evaluation of Hierarchical Queries]{Trade-offs in Static and Dynamic Evaluation of Hierarchical Queries}

\author[A.~Kara]{Ahmet Kara\lmcsorcid{0000-0001-8155-8070}}[a]	
\author[M.~Nikolic]{Milos Nikolic\lmcsorcid{0000-0002-1548-6803}}[b]
\author[D.~Olteanu]{Dan Olteanu\lmcsorcid{0000-0002-4682-7068}}[a]
\author[H.~Zhang]{Haozhe Zhang\lmcsorcid{0000-0002-0930-1980}}[a]

\address{University of Zurich}	
\email{kara@ifi.uzh.ch, olteanu@ifi.uzh.ch, zhang@ifi.uzh.ch}  

\address{University of Edinburgh}	
\email{milos.nikolic@ed.ac.uk}  






\begin{abstract}
  \noindent We investigate trade-offs in static and dynamic evaluation of hierarchical queries with arbitrary free variables. In the static setting, the trade-off is between the time to partially compute the query result and the delay needed to enumerate its tuples. In the dynamic setting, we additionally consider the time needed to update the query result under single-tuple inserts or deletes to the database.

Our approach observes the degree of values in the database and uses different computation and maintenance strategies for high-degree (heavy) and low-degree (light) values. For the latter it partially computes the result, while for the former it computes enough information to allow for on-the-fly enumeration.

We define the preprocessing time, the update time, and the enumeration delay as functions of the light/heavy threshold. By appropriately choosing this threshold, our approach recovers a number of prior results when restricted to hierarchical queries. 

We show that for a restricted class of hierarchical queries, our approach achieves worst-case optimal update time and enumeration delay  conditioned on the Online Matrix-Vector Multiplication Conjecture.
\end{abstract}

\maketitle


\section{Introduction}
\label{sec:introduction}

The problems of static evaluation, i.e., computing the result of a query~\cite{Yannakakis:VLDB:81,OlteanuZ15,Khamis:PODS:17,Ngo:JACM:18}, and dynamic evaluation, i.e., maintaining the result of a query under inserts and deletes of tuples to the input relations~\cite{Koch:ring:PODS:2010,Chirkova:Views:2012:FTD,DBT:VLDBJ:2014,BerkholzKS17,Idris:dynamic:SIGMOD:2017,Kara:ICDT:19}, are fundamental to relational databases. 

We consider a refinement of these two problems that decomposes the overall evaluation time into the {\em preprocessing time}, which is used to compute a data structure that represents the query result, the {\em update time}, which is the time to update the data structure under inserts and deletes to the input data, and the {\em enumeration delay}, which is the time between the start of the enumeration process and the output of the first tuple in the query result, the time between outputting any two consecutive tuples, and the time between outputting the last tuple and the end of the enumeration process~\cite{DurandFO07}. In this paper we investigate the relationship between preprocessing, update, and delay and answer questions such as, how much preprocessing time is needed to achieve sublinear enumeration delay.

We consider the static and dynamic evaluation of a subclass of $\alpha$-acyclic queries called hierarchical queries:
\begin{defiC}[\cite{Suciu:PDB:11, BerkholzKS17}]\label{def:hierarchical}
A conjunctive query is {\em hierarchical} if for any two variables, their sets of atoms in the query are either disjoint or one is contained in the other.
\end{defiC}
For instance, the query $Q(\calF) = R(A,B),S(B,C)$ is hierarchical, while $Q(\calF) =$ $R(A,B),$ $S(B,C),$ $T(C)$ is not, for any $\calF\subseteq\{A,B,C\}$.
In our study, we do not set any restriction on the set of free variables of a hierarchical query.

Hierarchical queries enjoy highly desirable tractability properties in a variety of computational settings, making them an important yardstick for database computation.
The notion of hierarchical queries used in this paper 
 has been initially introduced in the context of probabilistic databases~\cite{Suciu:PDB:11}.
The Boolean conjunctive queries without repeating relation symbols that can be computed in polynomial time on tuple-independent probabilistic databases are hierarchical; non-hierarchical queries are hard for \#P~\cite{Suciu:PDB:11}. This dichotomy was extended to non-Boolean queries with negation~\cite{Fink:TODS:16}. 
Hierarchical queries are the conjunctive queries whose provenance admits a factorized representation where each input tuple occurs a constant number of times; any factorization of the provenance of a non-hierarchical query would require a number of occurrences of the provenance of some input tuple dependent on the input database size~\cite{Olteanu:ICDT:12}. For hierarchical queries without self-joins, this read-once factorized representation explains their tractability for exact probability computation over probabilistic databases.
In the Massively Parallel Computation (MPC) model, the hierarchical queries admit parallel evaluation with one communication step~\cite{Koutris:PODS:11}. The $r$-hierarchical queries, which are conjunctive queries that become hierarchical by repeatedly removing the atoms whose complete set of variables occurs in another atom, can be evaluated in the MPC model using a constant number of steps and optimal load on every single database instance~\cite{Hu:PODS:19}. 
Hierarchical queries also admit one-step streaming evaluation in the finite cursor model~\cite{Grohe:TCS:09}. 
Under updates, the $q$-hierarchical queries are the conjunctive queries that admit constant-time update and delay~\cite{BerkholzKS17}. The $q$-hierarchical queries are a proper subclass of both the free-connex $\alpha$-acyclic and hierarchical queries. Besides being hierarchical, the following condition holds on the free variables of a $q$-hierarchical query: if the set of atoms of a free variable is strictly contained in the set of another variable, then the latter must also be free.

In this paper we characterize trade-offs in the static and dynamic evaluation of hierarchical queries. 
In the static setting, we are interested in the trade-off between preprocessing time and enumeration delay.
In the dynamic case, we additionally consider the update time. 
Section~\ref{sec:trade_offs_static_evaluation} states our main result in the static setting 
and explains how it recovers prior results on static query evaluation.
Section~\ref{sec:trade_offs_dynamic_evaluation} gives our main result in the dynamic setting
and discusses its implications. 
These two sections also overview prior work on static and dynamic query evaluation. 
Section~\ref{sec:preliminaries} introduces the basic notions underlying our approach. 
Sections~\ref{sec:preprocessing}-\ref{sec:updates} detail the preprocessing, enumeration, and 
update stages of our approach.
Section~\ref{sec:lower_bound} shows that for a restricted class of hierarchical queries, our approach achieves worst-case optimal update time and enumeration delay, conditioned on the Online Matrix-Vector Multiplication Conjecture.
We illustrate our approach using two detailed examples 
in Section~\ref{sec:examples} and conclude in 
Section~\ref{sec:conclusion}.
The proofs of the main theorems in Sections~\ref{sec:trade_offs_static_evaluation} and \ref{sec:trade_offs_dynamic_evaluation} and the propositions in Sections~\ref{sec:preprocessing}-\ref{sec:updates} are deferred to Appendices~\ref{appendix:main_results_static}-\ref{appendix:updates}. 
The proofs of the propositions in Sections~\ref{sec:preliminaries} and 
\ref{sec:lower_bound} can be found in the technical report~\cite{Trade_Offs_arxiv}.

A preliminary version of this work appeared in PODS 2020~\cite{KaraNOZ2020}. We extended it 
as follows. 
We overviewed in greater depth and breadth the related work for a more complete picture of the state of the art
 (Sections~\ref{sec:introduction}-\ref{sec:trade_offs_dynamic_evaluation}).
We added new motivating examples to demonstrate that our approach achieves better overall evaluation time than existing approaches both in the static and dynamic cases  
(Sections~\ref{sec:trade_offs_static_evaluation} and~\ref{sec:trade_offs_dynamic_evaluation}).
We included necessary background on the computational model and width measures  (Section~\ref{sec:preliminaries}).
We added a detailed description of the algorithms (\textsc{Union} and \textsc{Product}) used by the enumeration procedure
of our approach (Section~\ref{sec:enumeration}). 
We included the procedures for major and minor rebalancing in case of updates and gave the procedure for the maintenance of a query result under sequences of updates (Section~\ref{sec:updates}). 
Finally, we included complete proofs of the main results and the main statements on the preprocessing, enumeration, 
and update stages of our approach (Appendices~\ref{appendix:main_results_static}-\ref{appendix:updates}).

\section{Trade-offs in Static Query Evaluation}
\label{sec:trade_offs_static_evaluation}

\begin{figure*}[t]
\scalebox{0.85}{
	\begin{small}
		\hspace{-0.6cm}
		\begin{tikzpicture}

			\node at(-1,3.65){\bf Complexities for};
			\node at(-1,3.25){\bf Hierarchical Queries};
			
		
			\node at(-0.5,0){
				\begin{tikzpicture}[scale=1.1]
					\begin{axis}[
							legend style={cells={align=left}},
							grid=none,
							xmin=0, xmax=1.1, ymin=0, ymax=2.68,
							every axis plot post/.append style={mark=none},
							xtick ={0, 1},
							ytick ={0, 1, 1.5, 2.5},
							xticklabels={\footnotesize{$0$}, \footnotesize{$1$}},
							yticklabels={$ $,\footnotesize{$1$},\footnotesize{$\fw-1$},\footnotesize{$\fw$}},
							y=1.2cm,
							x=2.3cm,
							axis lines=middle,
							axis line style={->},
							x label style={at={(axis description cs:1.16,-0.06)}},
							xlabel={\small{$\eps$}},
							y label style={at={(axis description cs:-0.3,1.17)},align=center},
							ylabel=\footnotesize{$\log_N$ time},
							axis x line*=bottom,
							axis y line*=left,
							legend style={at={(-0.2,0.5)},draw=none},
						]

						\addplot[color=goodgreen,mark=none,domain=0:1,line width = 1pt,dotted] coordinates{
								(0, 1)
								(1, 2.5)
							};

						\addplot[color=red,mark=none,domain=0:1,thick,dashed] coordinates{
								(0, 0)
								(1, 2.5)
							};

						\addplot[color=violet,mark=none,domain=0:1,thick] coordinates{
								(0, 1)
								(1, 0)
							};

						\addplot[color=red,mark=none,domain=0:1,thick,dashed] coordinates{
								(0, 0)
								(1, 1.5)
							};

					\end{axis}

					\draw[dotted, line width = 0.6pt] (2.3,0) -- (2.3,3);

					\draw[dotted, line width = 0.6pt] (0,1.2) -- (2.3,1.2);

					\draw[dotted, line width = 0.6pt] (0,1.8) -- (2.3,1.8);

					\draw[dotted, line width = 0.6pt] (0,3) -- (2.3,3);

					\draw[dotted, line width = 1pt, color= goodgreen] (-0.8,-0.8) -- (-0.3,-0.8);
					\node at (2,-0.8){ \footnotesize{preprocessing time}
						\footnotesize{\ ${1+(\fw-1)\eps}$}};

					\draw[dashed, line width = 0.7pt, color= red] (-0.8,-1.2) -- (-0.3,-1.2);
					\node at (0.9,-1.2){\footnotesize{update time}
						\footnotesize{\ $\dfw\eps$}};

					\draw[line width = 0.7pt, color= violet] (-0.8,-1.6) -- (-0.3,-1.6);
					\node at (0.65,-1.6){ \small{delay}
						\footnotesize{\ $1 - \eps$}};

				\end{tikzpicture}

			};

			\node at(4.5,3.65){\bf Trade-offs in};
			\node at(4.5,3.25){\bf Static Query Evaluation};
			
			\node at (4.5,0){

				\vspace{-1cm}
				\begin{tikzpicture}
					\begin{axis}[
							grid=none,
							xmin=0, xmax=3.1, ymin=0, ymax=1.2,
							every axis plot post/.append style={mark=none},
							xtick ={0,  1, 2.8},
							ytick ={0,  1},
							xticklabels={\footnotesize{$0$},\footnotesize{$1$}, \footnotesize{$\fw$}},
							yticklabels={\footnotesize{$ $},\footnotesize{$1$}},
							y=1.5cm,
							x=1.2cm,
							axis lines=middle,
							axis line style={->},
							x label style={at={(axis description cs:1.18,-0.06)}},
							xlabel={},
							y label style={at={(axis description cs:-0.9,1.1)},align=center},
							axis x line*=bottom,
							axis y line*=left
						]

					\end{axis}
					\node at(-0.1,2.1) {\footnotesize{$\log_N$ delay}};
					\node at(2.4,-0.6) {\footnotesize{$\log_N$ preprocessing time}};

					\filldraw[blue] (3.4,0) circle (2pt) node {};
					\node at(4.05,0.25) { {conjunctive}};

					\filldraw[blue] (1.2,1.5) circle (2pt) node {};

					\node at(1.3,1.8) {{$\alpha$-acyclic}};

					\filldraw[blue] (1.2,0) circle (2pt) node {};
					\node at(1.3,0.3) { {free-connex}};

					\draw[line width = 1pt, color= blue] (1.2,1.5) -- (3.4,0);
					\node[rotate = 326] at(2.38,1) {{\color{blue} hierarchical}};

					\draw[dotted, line width = 0.6pt] (0,1.5) -- (3.4,1.5);

					\draw[dotted, line width = 0.6pt] (1.2,0) -- (1.2,1.5);

					\draw[dotted, line width = 0.6pt] (3.4,0) -- (3.4,1.5);

				\end{tikzpicture}

			};


			\node at(10.5,3.65){\bf Trade-offs in};
			\node at(10.5,3.25){\bf Dynamic Query Evaluation};

			\node at (10.3,0) {

				\tdplotsetmaincoords{75}{170}
				\begin{tikzpicture}[xscale=1.8, yscale=0.9,tdplot_main_coords]
					\coordinate (O) at (0,0,0);


					\draw[thick,->] (0,0,0) -- (1.2,0,0) node[anchor=north]{\footnotesize $\log_{N}${delay}\phantom{ABC}};
					\draw[thick,->] (0,0,0) -- (0,5,0);
					\draw[] (0,2.5,0) -- (0,4,0);
					\draw[very thick,->] (0,4,0) -- (0,5,0) node[anchor=north]{\footnotesize $\log_{N}${preprocessing time}};

					\draw[thick,->] (0,0,0) -- (0,0,3.5) node[anchor=south]{\footnotesize $\log_{N}${update time}};

					\node[] at (0.06,0,0.15) () {\footnotesize $0$};
					\coordinate (P1) at (1,1,0);
					\node[] at (0.9,1,-0.3) () {\footnotesize $(1,0,1)$};
					\coordinate (P1x) at (1,0,0);
					\node[] at (0.9,-0.6,0) () {\footnotesize $1$};

					\coordinate (P1y) at (0,1,0);
					\node[] at (0.05,1,-0.2) () {\footnotesize $1$};

					\coordinate (P2) at (0,4,3);
					\coordinate (P2xy) at (0,4,0);
					\coordinate (P2y) at (0,4,0);

					\coordinate (P2z) at (0,0,3);

					\coordinate (P3) at (0,4,2);
					\coordinate (P3z) at (0,0,2);

					\draw[dotted, color=black, line width =0.6] (P1x) -- (P1);
					\draw[dotted, color=black,line width =0.6] (P1y) -- (P1);
					\node[] at (-0.05,3.2,-0.3) () {\footnotesize $\fw$};
					\node[] at (0.1,0,3) () {\footnotesize $\dfw$};

					\draw[dotted, color=black,line width =0.6] (P2) -- (P2y);
					\draw[dotted, color=black,line width =0.6] (P2) -- (P2z);

					\draw[dotted, color=black, line width=0.6pt] (0,1,0.8) -- (0,0,0.8);
					\draw[dotted, color=black] (0,1,0.8) -- (0,1,0);
					\filldraw[blue] (0,1,.8) ellipse(1pt and 2pt);
					\node[] at(-0.7,0,0.7) () { free-connex};
					\node[] at(0.057,0,0.65) () {\footnotesize 1};

					\node[] at(-0.3,4,3.3) () { conjunctive};

					\node[] at(-0.77,0,-0.12) () { q-hierarchical};

					\draw[color=white, very thick, opacity=0.8] (0,0,0.7) -- (0,0,1.4);
					\draw[color=blue, thick] (P1) -- (P2);

					\node[color=blue,rotate = 37] at(0.3,1.8,1.7) { hierarchical};
					\filldraw [color=blue] (0,4,3) ellipse(1pt and 2pt);
					\filldraw[color=blue] (0,1,0) ellipse(1pt and 2pt);

				\end{tikzpicture}

			};
		\end{tikzpicture}
	\end{small}
}	
	\caption{
		Left: Preprocessing time, enumeration delay, and amortized update time for a hierarchical query with
		static width $\fw$ and dynamic width $\dfw$ ($\dfw$ can be $\fw$ or $\fw-1$, hence the two red
		lines for the update time).
		Middle and right: Trade-offs in static and dynamic evaluation.
		Our approach achieves each blue point and
		each point on the blue lines. Prior approaches are represented by the blue points.}
	\label{fig:tradeoffs_3_dim}
	\vspace*{-.2em}
\end{figure*}

Our main result for the static evaluation of hierarchical queries is stated next. 
\begin{thm}\label{thm:main_static}
Given a hierarchical query with static width $\fw$, a database  
of size $N$, and $\eps \in [0,1]$, the query result can be enumerated with $\bigO{N^{1-\eps}}$ delay after $\bigO{N^{1 + (\fw -1)\eps}}$ 
preprocessing time.
\end{thm} 

The measure $\fw$, previously introduced as $s^\uparrow$~\cite{OlteanuZ15}, generalizes the fractional hypertree width~\cite{Marx:TRANSALG:10} from Boolean to arbitrary conjunctive queries. This is equivalent to the FAQ-width in case of  Functional Aggregate Queries over a single semiring~\cite{FAQ:PODS:2016}. In this paper, we refer to this measure as the static width of the query (Definition~\ref{def:fac_width}).

Theorem~\ref{thm:main_static} expresses the runtime components as functions of a parameter $\eps$.
The dotted green line and the purple line in the left plot in Figure~\ref{fig:tradeoffs_3_dim} depict 
the preprocessing time and respectively the enumeration delay. 
The middle plot in Figure~\ref{fig:tradeoffs_3_dim} visualizes the trade-off between the two components. 
Our approach achieves each blue point and each point on the blue line. 
Prior approaches are represented by the blue points in the trade-off space.
By appropriately setting $\eps$, our approach recovers prior results restricted to hierarchical queries.
For $\eps=0$, both the preprocessing time and the delay become $O(N)$, as for $\alpha$-acyclic queries~\cite{BaganDG07}.
For $\eps=1$, we obtain $\bigO{N^\fw}$ preprocessing time and $O(1)$ delay as for conjunctive queries~\cite{OlteanuZ15}. 
Free-connex acyclic queries are a special class of queries that enjoy linear preprocessing time and constant delay~\cite{BaganDG07}. We recover this result as follows. First, we observe that any free-connex hierarchical query has static width $\fw=1$. This means that the preprocessing time remains $O(N)$ regardless of $\eps$; we then choose $\eps=1$ to obtain $O(1)$ delay.
For bounded-degree databases, i.e., where each value appears at most $c$ times for some constant $c=N^\beta$,
first-order queries admit $O(N)$ preprocessing time and $O(1)$ delay~\cite{DurandFO07,Kazana11}. We recover the $O(1)$ delay using $\eps=1$.
The preprocessing time becomes $\bigO{N\cdot (N^\beta)^{\fw-1}}=\bigO{N}$ if our approach uses the constant upper bound $c$ instead of the upper bound $N^{\eps}$ on the degrees. The left Venn diagram in Figure~\ref{fig:venn_diagrams} 
depicts the relationship of our result in Theorems~\ref{thm:main_static} with prior results.

\begin{figure*}[t]
	\newcommand{\sspace}{\hspace{0.13em}}
	\begin{small}
\scalebox{0.79}{
		\begin{tikzpicture}[scale=1.1]
			\node at(0,4.05){\bf Static Query Evaluation};
			\node at(0,3.65){\textit{preprocessing time{\sspace}/{\sspace}enumeration delay}};
			\draw[line width = 0.5pt] (0,-0.25) ellipse (2.3cm and 3.1cm);
			\node at (0,2.47)  {\bf conjunctive};
			\node at (0,2.1)  {$\bigO{N^{\fw}}${\sspace}/{\sspace}$\bigO{1}$};
			\node at (0,1.7){\cite{OlteanuZ15}};

			\draw[line width = 0.5pt] (-0,-0.85) ellipse (2cm and 2.3cm);
			\node at (0,1.1)  {\bf $\alpha$-acyclic};
			\node at (0,0.7)  {$\bigO{N}${\sspace}/{\sspace}$\bigO{N}$};
			\node at (0,0.35)  {\cite{BaganDG07}};

			\draw[line width = 0.5pt] (0,-0.8) ellipse (1.85cm and 0.9cm);
			\node at (0,-0.2)  {\color{blue} \bf hierarchical};
			\node at (0,-0.63)  {\color{blue} $\bigO{N^{1 + (\fw -1 )\eps}}${\sspace}/{\sspace}$\bigO{N^{1 - \eps}}$};
			\node at (0,-1.03)  {\color{blue} $\eps \in [0,1]$};

			\draw[line width = 0.5pt]  (0,-2.15) ellipse (1.1cm and 0.8cm);
			\node at (0,-1.85)  {\bf free-connex};
			\node at (0,-2.25)  {$\bigO{N}${\sspace}/{\sspace}$\bigO{1}$};
			\node at (0,-2.6)  {\cite{BaganDG07}};

		\end{tikzpicture}
}		
\hspace{-0.7cm}		
		\scalebox{0.79}{
		\begin{tikzpicture}[scale=1.2]
			\node at(0,4.05){\bf Dynamic Query Evaluation};
			\node at(0,3.65){\textit{preprocessing time{\sspace}/{\sspace}enumeration delay{\sspace}/{\sspace}update time}};
			\draw[line width = 0.5pt] (0,0.3) ellipse (5.5cm and 3.1cm);
			\node at (0,3.15)  {\bf conjunctive};
			\node at (0,2.75)  {$\bigO{N^{\fw}}${\sspace}/{\sspace}$\bigO{1}${\sspace}/{\sspace}$\bigO{N^{\dfw}}$
			\ \ \cite{Nikolic:SIGMOD:18}};

			\node at (0,2.25)  {triangle query\ \ \ $\bigO{N^{\frac{3}{2}}}${\sspace}/{\sspace}$\bigO{1}${\sspace}/{\sspace}$\bigO{N^{\frac{1}{2}}}^{\ast}$ \ \ \cite{Kara:ICDT:19}};

			\draw[line width = 0.5pt] (0,-0.2) ellipse (5.2cm and 2.2cm);
			\node at (0,1.7)  {\bf $\alpha$-acyclic};

			\draw[line width = 0.5pt]  (2.15,-0.28) ellipse (2.8cm and 1.45cm);
			\node[rotate=-15] at (3.5,0.63)  {\bf free-connex};
			\node[rotate=-15] at (3.5,0.2)  {$\bigO{N}${\sspace}/{\sspace}$\bigO{1}${\sspace}/{\sspace}$\bigO{N}$};
			\node at (3.25,-0.35){\cite{Idris:dynamic:SIGMOD:2017}};

			\node at (0.9,0.3)  {\bf q-hierarchical};
			\node at (0.9,0.05)  {$=$};
			\node at (0.9,-0.15)  {\color{blue}\bf $\dfw_0$-hierarchical};
			\node at (1,-0.53)  {\color{blue}
				$\fw = 1$, $\dfw = 0$};
			\node at (1,0.65){\cite{BerkholzKS17}};

			\draw[line width = 0.5pt] (-0.05,-0.7505) -- (2.32,-0.7505);

			\draw[line width = 0.5pt] (0,-0.75) arc(100:273:5cm and 0.452cm);

			\draw[line width = 0.5pt] (-1.25,-0.3) ellipse (3.7cm and 1.75cm);
			\node[rotate = 15] at (-2.6,0.9)  {\color{blue} \bf hierarchical};

			\node[rotate = 15] at (-2.3,0.4)  {\color{blue}
			$\bigO{N^{1 + (\fw -1 )\eps}}${\sspace}/{\sspace}$\bigO{N^{1 - \eps}}${\sspace}/{\sspace}$\bigO{N^{\dfw\eps}}^{\ast}$};

			\node[rotate = 15] at (-1.9,0.1)  {\color{blue}
				$\eps \in [0,1]$};

			\node at (-0.7,-1.05)  {\color{blue}\bf $\dfw_1$-hierarchical};

			\node at (-0.7,-1.4)  {\color{blue}
				$\fw \in \{1,2\}$, $\dfw =1$};

		\end{tikzpicture}
		}
	\end{small}
	\caption{Landscape of static and dynamic query evaluation.
		$\fw$: static width; $\dfw$: dynamic width; *:~amortized time.}
	\label{fig:venn_diagrams}
\end{figure*}

The next example demonstrates how the complexities of our approach in the static case 
imply lower overall evaluation time than existing approaches.
\begin{exa}
\label{ex:static_overall_time} 
Consider the hierarchical query 
$Q(A,C) = R(A,B), S(B,C).$ Let us assume that the input relations are of size $N$. Then, it takes quadratic time to compute the list of tuples in the query result of $Q$. (As it will become clearer later, this query has static width $2$, which explains the $O(N^2)$ time complexity for the evaluation of $Q$.)

An {\em eager} evaluation does just this: It readily computes the list of tuples in the query result of $Q$. This requires quadratic preprocessing time, after which the tuples in the query result can be enumerated with constant delay~\cite{Olteanu:ICDT:12}.

In contrast, a {\em lazy} evaluation approach computes the first tuple in the query result then the second tuple and so on. This can be done using linear preprocessing time followed by linear enumeration delay for each tuple in the result~\cite{BaganDG07}. It is conjectured that the delay cannot be lowered to constant after linear-time preprocessing for $Q$~\cite{BaganDG07}. (The explanation is that $Q$ is not free-connex, a notion we will introduce in Section~\ref{sec:preliminaries}.) 

Our approach achieves $\bigO{N^{1+\eps}}$ preprocessing time and 
  $\bigO{N^{1-\eps}}$ enumeration delay for any $\eps \in [0,1]$.
The complexities of the eager, lazy, and our approach are as follows:
     \begin{center}
	\renewcommand{\arraystretch}{1.1}
	\begin{tabular}{l| l l }
	approach  & preprocessing & delay \\\hline  
		lazy  & $\bigO{N}$   & $\bigO{N}$\\
		eager & $\bigO{N^2}$  & $\bigO{1}$\\		
		ours  & $\bigO{N^{1+\eps}}$    & $\bigO{N^{1-\epsilon}}$
	\end{tabular} 
	\end{center}
 Our approach recovers the lazy approach at $\eps = 0$ and the eager approach at $\eps =1$.
 For any $\eps \in (0,1)$, it achieves new trade-offs between preprocessing time and enumeration delay.

Given that the input relations have size $N$, the AGM bound~\cite{AtseriasGM13}
implies that  the result of the query has at most $N^2$ tuples. 
Assume that we want to enumerate $N^{\gamma}$ tuples from the result, for some $0 \leq \gamma \leq 2$.
In the following table, the second to  fourth rows give the exponents of the overall evaluation times 
achieved by the lazy, eager, and our approaches for different values of $\gamma$.
The last row gives the $\eps$ values at which we achieve the complexities of our approach.
   \begin{center}
	\renewcommand{\arraystretch}{1.2}
   \begin{tabular}{l| l l l l l l}
	$\gamma$  & $0$ & $\frac{1}{2}$ & $1$ & $1\frac{1}{2}$ & $2$   \\[2pt]\hline 
	  lazy   & $1$ & $1\frac{1}{2}$ & $2$ & $2\frac{1}{2}$ & $3$     \\[1pt]
	eager  & $2$ & $2$ & $2$ & $2$ &   $2$   \\[2pt] \hline
              ours & \cellcolor{yellow!50}$1$ & \cellcolor{teal!50!yellow!50}$1\frac{1}{4}$ & \cellcolor{teal!50!yellow!50}$1\frac{1}{2}$ &\cellcolor{teal!50!yellow!50} $1\frac{3}{4}$ &\cellcolor{yellow!50}$2$  \\[1pt]
                 $\eps$ & $0$ & $\frac{1}{4}$ & $\frac{1}{2}$ & $\frac{3}{4}$ & $1$ 
\end{tabular}
\end{center}
For instance, if $\gamma = 1\frac{1}{2}$, the lazy approach requires $\bigO{N + N^{1+\frac{1}{2}}\cdot N} = \bigO{N^{2+\frac{1}{2}}}$, 
the eager approach requires {$\bigO{N^2 + N^{1+\frac{1}{2}}\cdot 1} = \bigO{N^{2}}$}, and our approach needs
only  {$\bigO{N^{1+\frac{3}{4}} + N^{1+\frac{1}{2}}N^{\frac{1}{4}}} = \bigO{N^{1+\frac{3}{4}}}$} time
at $\eps = \frac{3}{4}$. 
In case $\gamma$ is equal to $\frac{1}{2}$, $1$, or $1\frac{1}{2}$, the overall computation time of our approach 
(highlighted in green) is strictly lower than
the eager and lazy approaches. For the other two cases shown in the table, our approach recovers the lower complexity of the 
prior approaches (highlighted in yellow). 
  \qed
\end{exa}

\subsection{Further Prior Work on Static Query Evaluation}
We complement our discussion with 
further prior work on static query evaluation.
Figure~\ref{fig:taxonomy} gives a  taxonomy of works in this area.

\begin{figure*}[t]
\footnotesize
\begin{tabular}{@{\hskip 0.0in}l@{\hskip 0.08in}l@{\hskip 0.08in}l@{\hskip 0.08in}l@{\hskip 0.08in}l@{\hskip 0.0in}}
\toprule
Class of Queries & Preprocessing & Delay & Extra Space & Source \\
\midrule

f.c. $\alpha$-acyclic CQ$^{\neq}$ & $\bigO{N}$ & $\bigO{1}$ & $\bigO{N}$ & \cite{BaganDG07} \\
f.c. $\beta$-acyclic negative CQ & $\bigO{N}$ & $\bigO{1}$ & -- & \cite{BraultPhD13,Brault2012} \\
f.c. signed-acyclic CQ & $\bigO{N\,(\log N)^{|Q|}}$ & $\bigO{1}$ & -- & \cite{BraultPhD13} \\
Acyclic CQ$^{\neq}$ & $\bigO{N}$ & $\bigO{N}$ & $\bigO{N}$  & \cite{BaganDG07} \\
CQ$^{\neq}$ of f.c. treewidth $k$ & $\bigO{|\text{Dom}|^{k+1} + N}$ & $\bigO{1}$ & -- & \cite{BaganDG07} \\
CQ & $\bigO{N^{\fw(Q)}}$ & $\bigO{1}$ & $\bigO{N^{\fw(Q)}}$ & \cite{OlteanuZ15,FAQ:PODS:2016}\\
Full CQ with access patterns & $\bigO{N^{\rho^{*}(Q)}}$ & $\bigO{\tau}$ & $\bigO{N + N^{\rho^{*}(Q)}/\tau}$ & \cite{Deep:2018}\\[0.5ex]
\midrule
CQ on \underbar{X}-structures (trees, grids) & $\bigO{N}$ & $\bigO{N}$ & -- & \cite{BaganDFG10} \\
FO on bound. degree & $\bigO{N}$ & $\bigO{1}$ & -- & \cite{DurandFO07,Kazana11} \\
FO on bound. expansion   & $\bigO{N}$ & $\bigO{1}$ & -- & \cite{Kazana13PODS} \\
FO on local bounded expansion & $\bigO{N^{1+\gamma}}$ & $\bigO{1}$ & -- & \cite{Segoufin17} \\
FO on low degree  & $\bigO{N^{1+\gamma}}$ & $\bigO{1}$ & $\bigO{N^{2+\gamma}}$ & \cite{Durand14} \\
FO on nowhere dense & $\bigO{N^{1+\gamma}}$ & $\bigO{1}$ & $\bigO{N^{1+\gamma}}$ & \cite{Schweikardt18} \\
MSO on Bounded treewidth & $\bigO{N}$ & $\bigO{1}$ & -- & \cite{Bagan06,Kazana13TOCL} \\
\bottomrule
\end{tabular}

\caption{Prior work on the trade-off between preprocessing time, enumeration delay, and extra space  for different classes of queries (Conjunctive Queries, First-Order, Monadic Second-Order) and static databases under data complexity; f.c. stands for  free-connex.  Parameters: Query $Q$ with factorization width $\fw$~\cite{OlteanuZ15} and fractional edge cover number $\rho^*$~\cite{AtseriasGM13}; database of size $N$; slack $\tau$ is a function of $N$ and $\rho^*$; $\gamma > 0$. Most works do not discuss the extra space utilization (marked by --).}
\label{fig:taxonomy}
\end{figure*}

Prior work exhibits a dependency between the space and enumeration delay for conjunctive queries with access patterns~\cite{Deep:2018}.
It constructs a succinct representation of the query result that allows for enumeration of tuples over some variables under value bindings for all other variables. It does not support enumeration for queries with projection, as addressed in our work. It also states Example~\ref{ex:intro} as an open problem.

The result of any $\alpha$-acyclic conjunctive query can be enumerated with constant delay after linear-time preprocessing if and only if it is free-connex. This is under the conjecture that Boolean multiplication of $n\times n$ matrices cannot be done in $O(n^2)$ time~\cite{BaganDG07}. 
More recently, this was shown to hold also under the hypothesis that the existence of a triangle in a hypergraph of $n$ vertices cannot be tested in time $\bigO{n^2}$ and that for any $k$, testing the presence of a $k$-dimensional tetrahedron cannot be decided in linear time~\cite{BraultPhD13}. 
The free-connex characterization generalizes in the presence of functional dependencies~\cite{CarmeliK18}. An in-depth pre-2015 overview on constant-delay enumeration is provided by Segoufin~\cite{Segoufin15}. 

There are also enumeration algorithms for document spanners~\cite{AmarilliBMN19} and satisfying valuations of circuits~\cite{AmarilliBJM17}.   

\section{Trade-offs in Dynamic Query Evaluation}
\label{sec:trade_offs_dynamic_evaluation}
Our main result for the dynamic query evaluation generalizes the static case.

\begin{thm}\label{thm:main_dynamic}
Given a hierarchical query with static width $\fw$ and dynamic width $\dfw$, a database of size $N$, and $\eps \in [0,1]$, 
the query result can be enumerated with $\bigO{N^{1-\eps}}$ delay after $\bigO{N^{1 + (\fw -1)\eps}}$ preprocessing time and 
$\bigO{N^{\dfw\eps}}$ amortized update time for single-tuple updates.
\end{thm}

The left plot in Figure~\ref{fig:tradeoffs_3_dim} depicts  
the preprocessing time (dotted green line), the update time (dashed red lines), and the enumeration delay (purple line) of our approach in the dynamic case.
For hierarchical queries, the dynamic width $\dfw$ can be equal to either the static width $\fw$ or $\fw-1$ (Proposition~\ref{prop:width_delta_inequal}). The plot hence shows two dashed red lines for the update time.
The right plot in Figure~\ref{fig:tradeoffs_3_dim} depicts the trade-off between the three components.
Our approach can achieve sublinear amortized update time and delay for hierarchical queries with arbitrary free variables 
(Figure~\ref{fig:tradeoffs_3_dim} left and right).
For any $\eps=\frac{1}{\dfw+\alpha}>0$ with $\alpha > 0$, our algorithm has update time $\bigO{N^{1-\alpha\cdot\frac{1}{\dfw+\alpha}}}$ and delay $\bigO{N^{1-\frac{1}{\dfw+\alpha}}}$.

The update time for a single tuple is at most the preprocessing time: $\dfw\eps\leq \fw\eps\leq \eps+(\fw-1)\eps \leq 1+(\fw-1)\eps$. 
If $\dfw=\fw-1$, then $\dfw\eps = (\fw-1)\eps$, i.e., the update time is an $\bigO{N}$ factor less than the preprocessing time. 
The complexity of preprocessing thus amounts to inserting $N$ tuples in an initially empty database using our update mechanism. If $\dfw=\fw$, then inserting $N$ tuples would need $\bigO{N^{1 + (\fw-1)\eps + \eps}}$ time, which is an $\bigO{N^\eps}$ factor more than the complexity of one bulk update using our preprocessing algorithm. This suggests a gap between single-tuple updates and bulk updates. 
A similar gap highlighting a fundamental limitation of single-tuple updates has been shown for the Loomis-Whitney query that generalizes the triangle query from a join of three binary relations to a join of $n$ $(n-1)$-ary relations: The amortized update time for single-tuple updates is $\bigO{N^{1/2}}$, which is worst-case optimal unless the {Online Matrix-Vector Multiplication} conjecture fails~\cite{ExtendedTriangleCount}. Inserting $N$ tuples in the empty database would cost $\bigO{N^{3/2}}$, yet the query can be computed 
in the static setting in time $\bigO{N^{\frac{n}{n-1}}}$~\cite{Ngo:JACM:18}.

Amortized $\bigO{N^{\delta\eps}}$ update time means that, given any sequence of updates, the {\em average} cost of a single 
update is  $\bigO{N^{\delta\eps}}$. Since updates can change the data structure, our approach needs to do a rebalancing 
step whenever the data structure gets out of balance.  
The time needed for a single update without rebalancing is $\bigO{N^{\delta\eps}}$ in the worst case. 
A rebalancing step can require super-linear time (Propositions~\ref{prop:major_time}
and \ref{prop:minor_time}). We show that for any update sequence, the overall time needed for the updates and rebalancing steps, when averaged over the number of updates in the sequence, remains $\bigO{N^{\delta\eps}}$ in the worst case (Proposition~\ref{prop:amortized_update_time}).
Using classical de-amortization techniques~\cite{KosarajuP98}, we can adapt our update mechanism to obtain non-amortized $\bigO{N^{\delta\eps}}$ update time. 
The de-amortization strategy is analogous to the one used for the update mechanism  of triangle queries (Section 10 in \cite{Kara:TODS:2020}), which performs more frequent but less time-consuming rebalancing steps.

Theorem~\ref{thm:main_dynamic} recovers prior work on conjunctive queries~\cite{Nikolic:SIGMOD:18}, free-connex acyclic queries\cite{Idris:dynamic:SIGMOD:2017}, and q-hierarchical queries~\cite{BerkholzKS17} by setting $\eps = 1$ (Figure~\ref{fig:tradeoffs_3_dim} right).
For hierarchical queries in general, our approach achieves the same complexities as prior work on conjunctive queries when restricted to hierarchical queries. 
For free-connex queries, we obtain linear-time preprocessing and update and constant-time delay since $\fw = 1$ (Proposition~\ref{prop:free-connex_static_one}) and then $\dfw \in \{0,1\}$ (Proposition~\ref{prop:width_delta_inequal}) for these queries.
For q-hierarchical queries, we obtain linear-time preprocessing and constant-time update and delay since $\fw = 1$ and $\dfw = 0$.
Existing maintenance approaches, e.g, classical first-order IVM~\cite{Chirkova:Views:2012:FTD} and higher-order recursive IVM~\cite{DBT:VLDBJ:2014}, 
DynYannakakis~\cite{Idris:dynamic:SIGMOD:2017}, and F-IVM~\cite{Nikolic:SIGMOD:18}, can achieve constant delay for general hierarchical queries yet after at least linear-time updates.
The  {right Venn diagram in Figure~\ref{fig:venn_diagrams}} relates Theorem~\ref{thm:main_dynamic}  with prior results.

The next example illustrates that our approach achieves better overall evaluation time than existing approaches when considering 
a sequence of updates.

\begin{exa}
\label{ex:dynamic_overall_time} 
Let us consider the (free-connex hierarchical) query 
$Q(A) = R(A,B), S(B).$
The query has static  and dynamic width $1$. 
We assume that the input relations are of size $N$ and consider the 
dynamic setting. 

A {\em lazy} evaluation approach requires no preprocessing: For each single-tuple update, it only updates the input relations
without propagating the changes to the query result. Before enumerating the $A$-values in the query result, 
it first scans the relation $R$ to collect all $A$-values that are paired with $B$-values contained in $S$.
This takes linear time. Afterwards, the approach  can enumerate the $A$-values with constant delay. 

An {\em eager} evaluation approach precomputes the initial result in linear time.
On a single-tuple update, it computes the delta query obtained by fixing the variables of one relation
to constants. For an update $\delta R(a,b)$ to $R$, the delta query $\delta Q(a) = \delta R(a,b), S(b)$ can be computed in constant time. For an update $\delta S(b)$ 
 to $S$, the delta query $\delta Q(A) = R(A,b),\delta S(b)$ can be computed in linear time. In general, the update time is linear. Since the query result is materialized and eagerly maintained, the $A$-values in the result can be enumerated after an update with constant delay.   

For this query, our approach achieves $\bigO{N}$ preprocessing time,
 $\bigO{N^{\eps}}$ update time, and 
  $\bigO{N^{1-\eps}}$ enumeration delay for any $\eps \in [0,1]$. 
The following table summarizes the preprocessing-update-delay trade-off achieved by the three approaches:

\begin{center}	
	\renewcommand{\arraystretch}{1.1}
	\begin{tabular}{l| l l l }
	approach  & preprocessing & update     & delay \\\hline
		lazy  & $\bigO{1}$    & $\bigO{1}$ & $\bigO{N}$\\	 
		eager & $\bigO{N}$  & $\bigO{N}$ & $\bigO{1}$\\
		ours  & $\bigO{N}$    & $\bigO{N^{\epsilon}}$ & $\bigO{N^{1-\epsilon}}$
	\end{tabular} 
	\end{center}

\smallskip	
Our approach recovers the lazy and eager approaches  by setting $\epsilon$ to $0$ and respectively $1$, with one exception: it cannot recover the constant preprocessing time in the lazy approach as it requires one pass over the input data.


Consider now a sequence of $N^m$ updates, each followed by one access request to enumerate $N^{\gamma}$ values out of the at most $N$ $A$-values in the query result, for $m, \gamma \in [0,1]$.
With the eager and lazy approaches, this sequence takes time (excluding preprocessing) $\bigO{N^m(N + N^{\gamma})}$, which is $\bigO{N^{m+1}}$ since $\gamma \leq 1$. 
With our approach, the sequence takes  $\bigO{N^{m} (N^\eps + N^{\gamma}N^{1-\eps})}$
$=$ $\bigO{N^{m+\eps} + N^{m+\gamma +1 -\eps}}$. Depending on the values of $m$ and $\gamma$, we can tune our approach (by appropriately setting $\eps$) to minimize the overall time to execute the bulk of updates and access requests. For $\gamma < 1$ and any $m$, our approach has consistently lower complexity than the lazy/eager approaches, while for $\gamma = 1$ and any $m$ it matches that of the lazy/eager approaches. The complexity of processing the sequence of updates and access requests is shown in the  next table for various values of $m$ and $\gamma$:
\begin{center}	
\begin{tikzpicture}
\node at (-1,2){\small our approach};
\draw [decorate,
    decoration = {brace,
        amplitude=5pt},line width=0.7pt] (-3.1,1.5) --  (0.3,1.5);
        
        \begin{scope}[xshift=3.5cm]
        \node at (-0.8,2){\small eager/lazy approaches};
\draw [decorate,
    decoration = {brace,
        amplitude=5pt},line width=0.7pt] (-2.7,1.5) --  (0.7,1.5);
        \end{scope}
\node at (0,0){
\begin{minipage}{0.95\textwidth}
\begin{center}	
{\renewcommand{\arraystretch}{1.2}
\begin{tabular}{ l | l l l l l || l l l l l}
	\diagbox{$m$}{$\gamma$} & 0   & $\frac{1}{4}$ & $\frac{1}{2}$ & $\frac{3}{4}$   & 
	$1$ & $0$  & $\frac{1}{4}$ & $\frac{1}{2}$ & $\frac{3}{4}$ & $1$ \\\hline
	0   & $\cellcolor{teal!50!yellow!50}\frac{1}{2}$ & $\cellcolor{teal!50!yellow!50}\frac{5}{8}$ & $\cellcolor{teal!50!yellow!50}\frac{3}{4}$   & 
	$\cellcolor{teal!50!yellow!50}\frac{7}{8}$ & $\cellcolor{yellow!50}1$  & $1$ & $1$ & $1$ & $1$ & $\cellcolor{yellow!50}1$\\
	 $\frac{1}{2}$ & $\cellcolor{teal!50!yellow!50}1$   & $\cellcolor{teal!50!yellow!50}1\frac{1}{8}$ & 
	 $\cellcolor{teal!50!yellow!50}1\frac{1}{4}$ & $\cellcolor{teal!50!yellow!50}1\frac{3}{8}$   & $\cellcolor{yellow!50}1\frac{1}{2}$ & $1\frac{1}{2}$ & $1\frac{1}{2}$ & $1\frac{1}{2}$ & $1\frac{1}{2}$ & $\cellcolor{yellow!50}1\frac{1}{2}$\\
$1$   & $\cellcolor{teal!50!yellow!50}1\frac{1}{2}$ & $\cellcolor{teal!50!yellow!50}1\frac{5}{8}$ & $\cellcolor{teal!50!yellow!50}1\frac{3}{4}$   & $\cellcolor{teal!50!yellow!50}1\frac{7}{8}$ & $\cellcolor{yellow!50}2$  & $2$ & $2$ & $2$ & $2$  & $\cellcolor{yellow!50}2$ \\\hline
	$\epsilon$ & $\frac{1}{2}$ & $\frac{5}{8}$ & $\frac{3}{4}$ & $\frac{7}{8}$ & 1 & & & & & \\
\end{tabular} 
}
\end{center}	
\end{minipage}
};
\end{tikzpicture}
\end{center}
The middle five columns (highlighted by green and yellow) show the complexities for our approach. The last row states the values of $\epsilon$ for which the complexities in the same columns are obtained. The rightmost five columns show the complexities for the lazy/eager approaches for $\gamma \in \{0,\frac{1}{4}, \frac{1}{2}, \frac{3}{4}\}$. They are all higher than for our approach, except for the last column for which $\gamma=1$: Regardless of $m$, the complexity gap is 
$\bigO{N^{\frac{1}{2}}}$ for $\gamma=0$,
$\bigO{N^{\frac{3}{8}}}$ for $\gamma=\frac{1}{4}$,
$\bigO{N^{\frac{1}{4}}}$ for $\gamma=\frac{1}{2}$, and
$\bigO{N^{\frac{1}{8}}}$ for $\gamma=\frac{3}{4}$
For $\gamma = 1$, our approach defaults to the eager approach and achieves the lowest complexities for $\epsilon=1$. \punto
\end{exa}

\subsection{Further Prior Work on Dynamic Query Evaluation}
We discuss further prior work on dynamic query evaluation.
Figure~\ref{fig:dynamic_taxonomy} gives a taxonomy of works in this field.

\begin{figure*}[t]
    \footnotesize
\begin{tabular}{@{\hskip 0in}l@{\hskip 0.05in}l@{\hskip 0.05in}l@{\hskip 0.05in}l@{\hskip 0.05in}l@{\hskip 0.05in}l@{\hskip 0in}}
\toprule
Class of Queries & Preprocessing & Update & Delay & Extra Space & Source \\
\midrule

$q$-hierarchical CQ  & 
$\bigO{N}$ & 
$\bigO{1}$ & 
$\bigO{1}$ &
-- & 
\cite{BerkholzKS17,Idris:dynamic:SIGMOD:2017} \\

Triangle count & 
$\bigO{N^{\frac{3}{2}}}$ & 
$\bigO{N^{\max\{\eps, 1-\eps \}}}^{\text{\Cross}}$ & 
$\bigO{1}$ &
$\bigO{N^{1 + \min\{\eps, 1-\eps\}}}$ & 
\cite{Kara:ICDT:19} \\

Full triangle query & 
$\bigO{N^{\frac{3}{2}}}$ & 
$\bigO{N^{\frac{1}{2}}}^{\text{\Cross}}$ & 
$\bigO{1}$ &
$\bigO{N^{\frac{3}{2}}}$ & 
\cite{Kara:TODS:2020} \\ 
 
$q$-hierarchical UCQ & 
$\bigO{N}$ & 
$\bigO{1}$ & 
$\bigO{1}$ &
 -- & 
 \cite{BerkholzKS18} \\

\midrule

FO+MOD on bound. degree  
 & 
$\bigO{N}$ & 
$\bigO{1}$ & 
$\bigO{1}$ &
 -- & 
 \cite{BerkholzKS17ICDT} \\  
 
MSO on Strings 
 & 
$\bigO{N}$ & 
$\bigO{\log N}$ & 
$\bigO{1}$ &
 -- & 
 \cite{NiewerthS18} \\  

\bottomrule
\end{tabular}

\caption{Prior work on the trade-off between preprocessing time, update time, enumeration delay, and extra space  for different classes of queries (Conjunctive Queries, Count Queries, First-Order Queries with modulo-counting quantifiers, Monadic Second Order Logic) and databases under updates in data complexity. Parameters: Query $Q$; database of size $N$; $\eps \in [0,1]$. Most works do not discuss the extra space utilization (marked by --). \Cross: amortized update time.}
\label{fig:dynamic_taxonomy}
\end{figure*}

The q-hierarchical queries are the conjunctive queries that admit linear-time preprocessing and constant-time update and delay~\cite{BerkholzKS17,Idris:dynamic:SIGMOD:2017}. 
If a conjunctive query without repeating relation symbols is not q-hierarchical, there is no $\gamma > 0$ such that 
the query result can be enumerated with $\bigO{N^{\frac{1}{2}-\gamma}}$ delay and update time, unless 
the Online Matrix Vector Multiplication conjecture fails.  
The constant delay and update time carry over to first-order queries with modulo-counting quantifiers on bounded degree databases, unions of q-hierarchical queries~\cite{BerkholzKS18}, and q-hierarchical queries  with small domain constraints~\cite{BerkholzKS17ICDT}.

Prior work characterizes the preprocessing-space-update trade-off for counting triangles under updates~\cite{Kara:ICDT:19}. 
A follow-up work generalizes this approach to the triangle queries with arbitrary free variables, adding the enumeration delay to the trade-off space~\cite{Kara:TODS:2020}. 
In this work, we consider arbitrary hierarchical queries instead of the triangle queries, and 
we use a less trivial adaptive maintenance technique, where the same relation may be subject to partition on different tuples of variables and where the overall number of cases for each partition is reduced to only two: the all-light case and the at-least-one-heavy case.

MSO queries on strings admit linear-time preprocessing, constant delay, and logarithmic update time. Here, updates can relabel, insert, or remove positions in the string. Further work considers MSO queries on trees under updates~{\cite{LosemannM14, AmarilliBM18}}.

DBToaster~\cite{DBT:VLDBJ:2014}, F-IVM~\cite{Nikolic:SIGMOD:18}, and DynYannakakis~\cite{Idris:dynamic:SIGMOD:2017, IdrisUVVL18} are recent systems implementing incremental view maintenance approaches. 

\section{Preliminaries}
\label{sec:preliminaries}

\paragraph{Data Model}
A schema $\calX = (X_1, \ldots, X_n)$ is a non-empty tuple of distinct variables. 
Each variable $X_i$ has a discrete domain $\Dom(X_i)$. 
We treat schemas and sets of variables interchangeably, assuming a fixed ordering of variables.
A tuple $\inst{x}$ of data values over schema $\mathcal{X}$ is an element from $\Dom(\calX) = \Dom(X_1) \times \dots \times \Dom(X_n)$.

A relation $R$  over schema $\mathcal{X}$ is a function 
$R: \Dom(\mathcal{X}) \to \mathbb{Z}$  such that   
the multiplicity $R(\inst{x})$ is non-zero for finitely many tuples $\inst{x}$. 
A tuple $\inst{x}$ is in $R$, denoted by $\inst{x} \in R$, if $R(\inst{x}) \neq 0$. 
The notation $\exists{R}$ denotes the use of $R$ with set semantics: $\exists{R}(\inst{x})$ equals $1$ if $\inst{x} \in R$ and $0$ otherwise; also, $\nexists{R}(\inst{x}) = 1 - \exists{R}(\inst{x})$.
The size $|R|$ of $R$ is the size of the set $\{ \inst{x} \mid \inst{x} \in R \}$. 
A database is a set of relations and has size given by the sum of the sizes of its relations. 

Given a tuple $\inst{x}$ over schema  $\mathcal{X}$ and  $\mathcal{S}\subseteq\mathcal{X}$, 
$\inst{x}[\mathcal{S}]$ denotes the restriction of 
$\inst{x}$ to $\calS$ such that the values in 
$\inst{x}[\mathcal{S}]$ follow the ordering in $\calS$.
For instance, $(a,b,c)[(C,A)] = (c,a)$ for the tuple $(a,b,c)$ over the schema $(A,B,C)$.
For a relation $R$ over $\mathcal{X}$, 
schema $\mathcal{S} \subseteq \calX$, 
and tuple $\inst{t} \in \Dom(\mathcal{S})$, 
$\sigma_{\mathcal{S} = \inst{t}} R = 
\{\, \inst{x} \,\mid\, \inst{x} \in R \land \inst{x}[\mathcal{S}] = \inst{t} \,\}$ denotes the set of tuples in $R$
that agree with $\inst{t}$ on the variables in $\calS$, while
$\pi_{\mathcal{S}}R = \{\, \inst{x}[\mathcal{S}] \,\mid\, \inst{x} \in R \,\}$ denotes the set of restrictions of the tuples in $R$ to the variables in $\mathcal{S}$.

\paragraph*{Computational Model.}
We consider the RAM model of computation
{where schemas and data values are of constant size.}
{We assume that each relation 
$R$  over schema $\mathcal{X}$ is implemented by a data structure that stores key-value entries $(\inst{x},R(\inst{x}))$ for each tuple $\inst{x}$ with $R(\inst{x}) \neq 0$ and needs $O(|R|)$ space.} 
This data structure can:
(1) look up, insert, and delete entries in constant time,
(2) enumerate all stored entries in $R$ with constant delay, and
(3) report $|R|$ in constant time.
For a schema $\mathcal{S} \subset \mathcal{X}$, 
we use an index data structure that for any $\tup{t} \in \Dom(\mathcal{S})$ can: 
(4) enumerate all tuples in $\sigma_{\mathcal{S}=\tup{t}}R$ with constant delay,
(5) check $\tup{t} \in \pi_{\mathcal{S}}R$ in constant time; 
(6) return $|\sigma_{\mathcal{S}=\tup{t}}R|$ in constant time; and
(7) insert and delete index entries in constant time.

{
In an idealized setting, 
the above requirements can be ensured using hashing.
In practice, hashing can only achieve {\em amortized} constant time 
for some of the above operations.
In our paper, whenever we claim constant time for hash operations, we mean amortized constant time.
We give a hash-based example data structure that supports the above operations in amortized constant time.}
Consider a relation $R$ over schema $\mathcal{X}$. 
A hash table with chaining stores key-value entries $(\inst{x},R(\inst{x}))$ for each tuple $\inst{x}$ over $\mathcal{X}$ with $R(\inst{x}) \neq 0$. 
The entries are doubly linked to support enumeration with constant delay. 
The hash table can report the number of its entries in constant time and supports lookups, inserts, and deletes in {amortized} constant time.
%
To support index operations on a schema $\mathcal{F} \subset \mathcal{X}$, 
we create another hash table with chaining where each table entry stores an $\calF$-value 
$\tup{t}$ as key and a doubly-linked list of pointers to the entries in $R$ having 
$\tup{t}$ as $\mathcal{F}$-value.
Looking up an index entry given $\tup{t}$ takes {amortized} constant time,
and its doubly-linked list enables enumeration of the matching entries in $R$ with constant delay. 
Inserting an index entry into the hash table additionally prepends a new pointer to the doubly-linked list for a given $\tup{t}$; overall, this operation takes {amortized} constant time.
For efficient deletion of index entries, each entry in $R$ also stores back-pointers to its index entries (one back-pointer per index for $R$). 
When an entry is deleted from $R$, locating and deleting its index entries in doubly-linked lists takes constant time per index.
{An alternative data structure that can meet our requirements is a tree-structured index
such as a B$^+$-tree. This would, however, require {\em worst-case} logarithmic time and imply an additional logarithmic factor in our complexity results. 
}

\paragraph*{Modeling Updates Using Multiplicities.}
We restrict multiplicities of tuples in the {input relations to be strictly positive}. Multiplicity 0 means the tuple is not present. A single-tuple update to a relation $R$ is expressed as $\delta R = \{\tup{x} \rightarrow m\}$. 
The update is an insert of the tuple $\tup{x}$ in $R$ if the multiplicity $m$ is strictly positive. It is a delete of $\tup{x}$ from $R$ if $m$ is negative. Such a delete is rejected if the existing multiplicity of $\tup{x}$ in $R$ is less than $|m|$.
{
  A batch update may consist of both inserts and deletes. 
  Applying $\delta{R}$ to $R$ means creating a new version of $R$ that is the union 
  of  $\delta{R}$ and $R$.}

\paragraph{Partitioning.}
{The number of occurrences of a value in a relation is called the degree of the value in the relation.} 
We partition relations based on value degree. 

\begin{defi}
\label{def:partition}
Given a relation $R$ over schema $\mathcal{X}$, a schema $\calS \subset\mathcal{X}$, and a threshold $\theta$,
the pair $(H, L)$ of relations is a {\em partition} of $R$ on $\calS$ with threshold $\theta$ 
if it satisfies the following four conditions:
\\[4pt]
\begin{tabular}{@{\hskip 0in}rl}
{(union)} & $R(\tup{x}) = H(\tup{x}) + L(\tup{x})$ for $\tup{x} \in \Dom(\calX)$ \\[4pt]
{(domain partition)} & 
$\pi_{\calS}H \cap \pi_{\calS}L = \emptyset$ \\[4pt]
(heavy part) & for all $\tup{t} \in \pi_{\calS} H$:
 $|\sigma_{\calS= \tup{t}} H| \geq \frac{1}{2}\theta$\\[4pt]
(light part) & for all $\tup{t} \in \pi_{\calS} L$:
 $|\sigma_{\calS=\tup{t}} L| < \frac{3}{2}\theta$
\end{tabular}\\[6pt]
The pair $(H, L)$ is a {\em strict} partition of $R$ on $\calS$ with threshold 
$\theta$ if it satisfies the union and 
domain partition conditions and strict versions
of the heavy and light part conditions: 
\\[6pt]
\begin{tabular}{@{\hskip 0.2in}rl}
{(strict heavy part)} & for all $\tup{t} \in \pi_{\calS}H:\; |\sigma_{\calS=\tup{t}} H| \geq 
\theta$ \\[4pt]
{(strict light part)} & for all $\tup{t} \in \pi_{\calS}L:\; |\sigma_{\calS=\tup{t}} L| < \theta$
\end{tabular}\\[6pt]
The relations $H$ and $L$ are the {\em heavy} and {\em light} parts of $R$.
\end{defi}

Assuming $|R|=N$ and the strict partition $(H,L)$ of $R$ on $\mathcal{S}$ with threshold $\theta=N^\eps$ for $\eps\in[0,1]$, we have:
$\forall\tup{t} \in \pi_{\calS} L: |\sigma_{\calS=\tup{t}} L| < \theta = N^\eps$;
 and $|\pi_{\calS} H| \leq \frac{|R|}{\theta} = N^{1-\eps}$.     
We subsequently denote the light part of $R$ on $\calS$ by $R^{\calS}$.

\paragraph{Queries.} 
A conjunctive query (CQ) has the form 
$$Q(\calF) = R_1(\calX_1), \ldots, R_n(\calX_n).$$
We denote by: 
$(R_i)_{i\in[n]}$ the relation symbols;
$(R_i(\calX_i))_{i\in[n]}$ the atoms;
$\vars(Q) = \bigcup_{i\in[n]}\calX_i$ the set of variables;
{$\free(Q)=\calF \subseteq \vars(Q)$} the set of {\em free} variables; 
$\bound(Q)=\vars(Q)-\free(Q)$ the set of {\em bound} variables;
$\atoms(Q)=\{R_i(\calX_i) \mid i\in[n]\}$ the 
set of the atoms;
and $\atoms(X)$ the set of the atoms containing $X$.
The query $Q$ is {\em full} if $\free(Q)=\vars(Q)$. 

The hypergraph $G=(\vars(Q),\atoms(Q))$ of a query $Q$ has one node per variable and one hyperedge per atom that covers all nodes representing its variables. 
{A  {\em join tree} for $Q$ is a tree with the following properties:
 (1) Its nodes are exactly the atoms of $Q$; (2) if any two nodes have variables in common, then all nodes along the path between them also have these variables. The query $Q$ is called {\em $\alpha$-acyclic} if it has a join tree.}
{It is {\em free-connex} if it is $\alpha$-acyclic and remains $\alpha$-acyclic when we add to its body a fresh atom over its free variables~\cite{BraultPhD13}}.
It is {\em hierarchical} if for any two of its variables, either their sets of atoms are disjoint or one is contained in the other. 
It is {\em q-hierarchical} if it is hierarchical and for every variable $A\in\free(Q)$, if there is a variable $B$ such that $\atoms(A)\subset\atoms(B)$ then $B\in\free(Q)$~\cite{BerkholzKS17}.

\begin{exa}\label{ex:query-prelim}
\rm
The following query is $\alpha$-acyclic:
$$Q(A,C,F)=R(A,B,C),S(A,B,D),T(A,E,F),U(A,E,G)$$ 
A join tree is the path $U(AEG)-T(AEF)-R(ABC)-S(ABD)$.
It is free-connex since we can extend this join tree as follows: $U(AEG)-T(AEF)-Q(ACF)-R(ABC)-S(ABD)$.
It is also hierarchical but not q-hierarchical: The bound variables $B$ and $E$ dominate the free variables $C$ and respectively $F$.
\qed
\end{exa}

\paragraph{Variable Orders.}
{Two variables  {\em depend} on each other if they occur in the same atom.}

\begin{defi}[adapted from \cite{OlteanuZ15}]
A {\em variable order} $\omega$ for a conjunctive query 
$Q$ is a pair
$(T, \dep_\omega)$ such that the following holds:  
\begin{itemize}[leftmargin=*]
\item {$T$ is a rooted forest with one node per variable in $Q$.}
The variables of each atom in $Q$ lie along the same root-to-leaf path in $T$.

\item The function $\dep_\omega$ maps each variable 
$X$ to the subset of its ancestor variables in $T$
on which the variables in the subtree rooted at $X$
depend, i.e., for every variable $Y$ that is a child 
of variable $X$, $\dep_\omega(Y) \subseteq \dep_\omega(X) \cup \{X\}$.
\end{itemize}  
\end{defi}

{
An {\em extended} variable order is a variable order where we add as new leaves the atoms corresponding to relations. 
We add each atom as the child of its variable placed lowest in the variable order. 
Whenever we refer to a variable order, we mean its extension with atoms at leaves. 
For ease of presentation, we often use $\omega$ to refer to the tree of $\omega$.}

{
The subtree of a variable order $\omega$ rooted at $X$ is denoted by $\omega_X$.}
The sets $\vars(\omega)$, $\atoms(\omega)$, and $\anc(X)$ consist of 
all variables of $\omega$, the atoms at the leaves of $\omega$, and  
the variables on the path from $X$ to the root excluding $X$, respectively. 
The flag $\sibling(X)$ is true if $X$ has siblings.
The variable order $\omega$ is {\em free-top} if no bound variable is an ancestor of a free variable 
(called d-tree extension~\cite{OlteanuZ15}).
It is {\em canonical} if the variables of the leaf atom of each root-to-leaf path are the inner nodes of the path.
The sets $\freeTopVO(Q)$, $\canonicalVO(Q)$, and $\VO(Q)$ consist of free-top, canonical, and all variable orders of $Q$.

{
\begin{figure}[t]
  \small
  \centering
  \begin{minipage}[b]{0.49\linewidth}
    \centering
    \begin{tikzpicture}[xscale=1.3, yscale=0.8]
      \node at (0.4, 0.0) (A) {{\color{black} $\underline{A}$}};
      \node at (-1, -1.0) (B) {{\color{black} $B$}} edge[-] (A);
      \node at (1.8, -1.0) (E) {{\color{black} $E$}} edge[-] (A);
      \node at (-1.7, -2.0) (C) {{\color{black} $\underline{C}$}} edge[-] (B);
      \node at (-0.3, -2.0) (D) {{\color{black} $D$}} edge[-] (B);
      \node at (1.1, -2.0) (F) {{\color{black} $\underline{F}$}} edge[-] (E);
            \node at (2.5, -2.0) (G) {{\color{black} $G$}} edge[-] (E);
      \node at (-0.3, -3.0) (S) {{\color{black} $S(\underline{A},B,D)$}} edge[-] (D);
      \node at (-1.7, -3.0) (R) {{\color{black} $R(\underline{A},B,\underline{C})$}} edge[-] (C);
          \node at (1.1, -3.0) (T) {{\color{black} $T(\underline{A},E, \underline{F})$}} edge[-] (F);
            \node at (2.5, -3.0) (U) {{\color{black} $U(\underline{A},E,G)$}} edge[-] (G);
    \end{tikzpicture}
  \end{minipage}
  \hspace{0.05cm}
  \begin{minipage}[b]{0.49\linewidth}
    \centering
    \begin{tikzpicture}[xscale=1.3, yscale=0.8]
      \node at (0.4, 0.0) (A) {{\color{black} $\underline{A}$}};
      \node at (-1, -0.6) (C) {{\color{black} $\underline{C}$}} edge[-] (A);
      \node at (1.8, -0.6) (F) {{\color{black} $\underline{F}$}} edge[-] (A);
      \node at (-1, -1.5) (B) {{\color{black} $B$}} edge[-] (C);
      \node at (1.8, -1.5) (E) {{\color{black} $E$}} edge[-] (F);
      \node at (-0.3, -2.0) (D) {{\color{black} $D$}} edge[-] (B);
            \node at (2.5, -2.0) (G) {{\color{black} $G$}} edge[-] (E);
      \node at (-0.3, -3.0) (S) {{\color{black} $S(\underline{A},B,D)$}} edge[-] (D);
      \node at (-1.6, -3.0) (R) {{\color{black} $R(\underline{A},B,\underline{C})$}} edge[-] (B);
          \node at (1.1, -3.0) (T) {{\color{black} $T(\underline{A},E, \underline{F})$}} edge[-] (E);
            \node at (2.5, -3.0) (U) {{\color{black} $U(\underline{A},E,G)$}} edge[-] (G);
    \end{tikzpicture}
  \end{minipage}
  \caption{{Canonical (left) and non-canonical but free-top (right) variable order for the query  
  $Q(A,C,F)=$ $R(A,B,C),$ $S(A,B,D),$ $T(A,E,F),$ $U(A,E,G)$ in Example~\ref{ex:query-prelim}.
  Free variables are underlined.}} 
  \label{fig:canonical_free_top}
\end{figure}
}

{
\begin{exa}
The left variable order in Figure~\ref{fig:canonical_free_top} is a canonical 
variable order for the query from Example~\ref{ex:query-prelim}.
This variable order is not free-top since the bound variables $B$ and $E$ sit on top of the free variables $C$ and respectively $F$. 
The right variable order in Figure~\ref{fig:canonical_free_top} is a free-top variable order 
for the query. 
This variable order is not canonical: the atom at the leaf of the path $A-C-B-D-S(ABD)$ does not have the variable $C$.\qed
\end{exa}
}

Hierarchical queries admit canonical variable orders, while q-hierarchical queries admit canonical free-top variable orders.
The canonical variable order of a hierarchical query is unique 
up to orderings of variables sharing the same set of atoms.

\paragraph*{Width Measures.}
Given a conjunctive query $Q$ and $\calF \subseteq \vars(Q)$,    
a {\em fractional edge cover}
of $\calF$ is a solution 
$\boldsymbol{\lambda} = (\lambda_{R(\calX)})_{R(\calX) \in \atoms(Q)}$ to the following 
linear program \cite{AtseriasGM13}: 
\begin{align*}
\text{minimize} & \TAB\sum_{R(\calX) \in\, \atoms(Q)} \lambda_{R(\calX)} && \\[3pt]
\text{subject to} &\TAB \sum_{{R(\calX)\in\, \atoms(Q) \text{ s.t. } X \in \calX}}\hspace{-0.5cm} \lambda_{R(\calX)} \geq 1 && \text{ for all } X \in \calF \text{ and } \\[3pt]
& \TAB\lambda_{R(\calX)} \in [0,1] && \text{ for all } R(\calX) \in \atoms(Q)
\end{align*}
The optimal objective value of the above program 
is called the {\em fractional edge cover number} of the variable set $\calF$ 
and is denoted as $\rho_{Q}^{\ast}(\calF)$.  
An {\it integral edge cover} of $\calF$ is a feasible solution 
to the variant of the above program with 
$\lambda_{R(\calX)}\in\{0,1\}$ for each $R(\calX) \in \atoms(Q)$.
The optimal objective value of this program 
is called the {\em integral edge cover number} of $\calF$
and is denoted as $\rho_{Q}(\calF)$.
If $Q$ is clear from the context, we omit 
the index $Q$ in the expressions $\rho_{Q}^{\ast}(\calF)$
and $\rho_{Q}(\calF)$.
For a database of size $N$, the result of the query $Q$ can be computed in time 
$\bigO{N^{\rho^*}}$~\cite{Ngo:JACM:18}.

For hierarchical queries, the integral and fractional edge cover 
numbers are equal.
{The proofs of the following propositions in this section are given in the technical report~\cite{Trade_Offs_arxiv} (Appendices B and C).}

\begin{prop}
\label{prop:rho_rhostar}
For any hierarchical query $Q$ and $\calF \subseteq \vars(Q)$, it holds
$\rho^{\ast}(\calF) = \rho(\calF)$.
\end{prop}

\begin{defi}
\label{def:fac_width}
The {\em static width} of a conjunctive query $Q$ is
\begin{align*}
\fw(Q) &= \min_{\omega\in\freeTopVO(Q)} \fw(\omega)\\
\fw(\omega) &= \max_{X \in \vars(Q)} \rho^*(\{X\} \cup \dep_\omega(X))
\end{align*}
\end{defi}
If $Q$ is Boolean, then $\fw$ is the {\em fractional hypertree width}~\cite{Marx:TRANSALG:10}. {\em FAQ-width} generalizes $\fw$ to queries over several semirings~\cite{FAQ:PODS:2016}
\footnote{To simplify presentation, we focus on queries that contain 
at least one atom with non-empty schema. This implies that the static width 
of queries is at least $1$. Queries where all atoms 
have empty schemas
obviously admit constant preprocessing time,
update time, and enumeration delay.}.

\begin{defi}
\label{def:delta_width}
The {\em dynamic width} of a conjunctive query $Q$ is
\begin{align*}
\dfw(Q) &= \min_{\omega\in\freeTopVO(Q)} \dfw(\omega)\\
\dfw(\omega) &= 
\max_{X \in \vars(Q)} \
\max_{R(\calY) \in \atoms(\omega_X)}
 \rho^*((\{X\} \cup \dep_{\omega}(X)) - \calY)
\end{align*}
\end{defi}
While the static width of a free-top variable order 
$\omega$ is defined over the sets 
$\{X\} \cup \dep_\omega(X)$ with $X \in \vars(Q)$, 
the dynamic width 
of $\omega$  is defined over 
restrictions of these sets obtained 
by dropping the variables in the schema of one atom. 
For any canonical variable order $\omega$, variable $X$
in $\omega$, and atom $R(\calY)$ in $\atoms(\omega_X)$,
the set $(\{X\} \cup \dep_{\omega}(X)) - \calY$ is empty.
Hence, queries that admit {\em canonical} free-top variable orders have dynamic width $0$.

\begin{prop}
  \label{prop:width_delta_inequal}
  Given a hierarchical query with static width 
  $\fw$ and dynamic width $\dfw$, it holds that
  $\dfw=\fw$ or $\dfw=\fw-1$.
  \end{prop}
  
Free-connex hierarchical queries have static width 1.

\begin{prop}
  \label{prop:free-connex_static_one}
  Any free-connex hierarchical query has static width 1.
  \end{prop} 

We give a syntactic classification of hierarchical queries based on their dynamic width.

\begin{defi}
\label{def:delta_hierarchical}
A hierarchical query is  {\it $\dfw_i$-hierarchical}  for $i \in {\mathbb{N}}$ if $i$ is the smallest number such that for each bound variable $X$ and atom $R(\calY)$ of $X$, there are $i$ atoms $R_1(\calY_1),  \ldots ,R_i(\calY_i)$ such that all free variables in the atoms of $X$ are included in $\calY \cup \bigcup_{j \in [i]}\calY_j$.
\end{defi}

For instance, the query $Q(Y_0, \ldots , Y_i) = R_0(X,Y_0), \ldots, R_i(X,Y_i)$ is
a $\dfw_i$-hierarchical query for $i \in  {\mathbb{N}}$.
The class of hierarchical queries can be partitioned into subclasses of
 $\dfw_i$-hierarchical queries
for $i \in {\mathbb{N}}$. Then, the $\dfw_0$-hierarchical queries are precisely the q-hierarchical
queries from prior work~\cite{BerkholzKS17}. 

\begin{prop}
\label{prop:delta_0_q_hierarchical}
A query is q-hierarchical if and only if 
it is $\delta_0$-hierarchical.
\end{prop}

As depicted in Figure~\ref{fig:venn_diagrams} (right), all free-connex hierarchical queries  
are either $\dfw_0$- or $\dfw_1$-hierarchical.

\begin{prop}
\label{prop:free-connex_dynamic_0_1}
Any free-connex hierarchical  query is 
$\dfw_0$- or $\dfw_1$-hierarchical.
\end{prop}

The following proposition relates $\delta_i$-hierarchical queries 
to their dynamic width.

\begin{prop}
\label{prop:delta_hierarchical_dynamic_width}
A hierarchical query is 
$\dfw_i$-hierarchical for $i \in {\mathbb{N}}$ if and only if 
it has dynamic width $i$.
\end{prop} 

Proposition~\ref{prop:delta_hierarchical_dynamic_width} and Theorem~\ref{thm:main_dynamic}
imply the following corollary.

\begin{cor}
\label{theo:delta_hierarchical_queries}

Given a $\dfw_i$-hierarchical query with $i \in {\mathbb{N}}$ and static width $\fw$, a database of size $N$, and $\eps \in [0,1]$, the query result can be enumerated with $\bigO{N^{1-\eps}}$ delay after $\bigO{N^{1 + (\fw -1)\eps}}$ preprocessing time and $\bigO{N^{i\eps}}$ amortized time for single-tuple updates.
\end{cor}


\section{Preprocessing}
\label{sec:preprocessing}
In the preprocessing stage, we construct a data structure that represents the result of a given hierarchical query. 
{The data structure consists of a set of view trees, where each view tree computes one part of the query result.
A view tree is a tree-shaped hierarchy of materialized views with input relations as leaves and upper views defined in terms of their child views.}
The construction of view trees exploits the structure of the query and the degree of data values in base relations. We construct different sets of view trees for the static and dynamic evaluation of a given hierarchical query. 

We next assume that the canonical variable order of the given hierarchical query consists of a single connected component. For several connected components, the preprocessing procedure is executed on each connected component separately. 

\subsection{View Trees Encoding the Query Result}
\label{sec:preprocessing-factorization}

Given a hierarchical query $Q(\mathcal{F})$ and a canonical variable order $\omega$ for $Q$, the function \textsc{BuildVT} in Figure~\ref{fig:factorized_view_tree_algo} constructs a view tree that encodes the query result. 
The function proceeds recursively on the structure of $\omega$ and constructs a view {over schema $\mathcal{F}_X$ at each inner node $X$; the leaves correspond to the atoms in the query. 
The view is defined over the join of its child views projected onto $\mathcal{F}_X$ (Figure~\ref{fig:view_creation}). The schema $\mathcal{F}_X$ includes the ancestors of $X$ in $\omega$ since they are needed for joins at nodes above $X$. Each constructed view has a name to help us identify the place and purpose of the view in the view tree.}

If $X$ is free, then it is included in the schema of the view constructed at $X$ (and not included if bound). 
It is also kept in the schemas of the views on the path to the root until it reaches a view whose schema does not have bound variables.
{
  The constructed view tree has the upper levels only with views over the free variables. The hierarchy of such views represents the query result and allows its enumeration with constant delay.
}

In the dynamic case, at each child $Z$ of $X$ we construct a view with schema $\anc(Z)$ on top of the view created at $Z$ (Figure~\ref{fig:aux_view_creation}). This auxiliary view aggregates away $Z$ from the latter view. 
The children of the view created at $X$ then share the same schema $\mathcal{F}_X$. 
This property enables the efficient maintenance of the view at $X$ since processing a change coming from any child view requires only constant-time lookups into that child's sibling views.

Our preprocessing is particularly efficient for free-connex hierarchical queries in the static case and for their strict subclass of $\dfw_0$-hierarchical queries in the dynamic case.

\begin{figure}[t]
\centering
\setlength{\tabcolsep}{3pt}
\renewcommand{\linenumber}{\makebox[2ex][r]{\rownumber\TAB}}
\setcounter{magicrownumbers}{0}
\begin{tabular}[t]{@{}c@{}c@{}l@{}}
    \toprule
    \multicolumn{3}{l}{\textsc{BuildVT}({\text{string} $\mathit{prefix}$}, \text{variable order} $\omega$, {schema} $\mathcal{F}$) : view tree} \\
    \midrule
    \multicolumn{3}{l}{\MATCH $\omega$:}\\
    \midrule
    \phantom{ab} & $R(\calY)$\hspace*{2.5em} & \linenumber \RETURN $R(\calY)$\\[2pt]
    \cmidrule{2-3} \\[-6pt]
    &    
    \begin{minipage}[t]{1.5cm}
        \vspace{-0.32cm}
        \hspace*{-0.55cm}
        \begin{tikzpicture}[xscale=0.5, yscale=1]
            \node at (0,-2)  (n4) {$X$};
            \node at (-1,-3)  (n1) {$\omega_1$} edge[-] (n4);
            \node at (0,-3)  (n2) {$\ldots$};
            \node at (1,-3)  (n3) {$\omega_k$} edge[-] (n4);
        \end{tikzpicture}
    \end{minipage}
    &
    \begin{minipage}[t]{8.8cm}
        \vspace{-0.4cm}
        \linenumber \LET $T_i = \textsc{BuildVT}(V, \omega_i, \mathcal{F}), \forall i\in[k]$ \\[0.5ex]
        \linenumber \LET ${\mathit{viewname} = \mathit{prefix} + \text{``\_''} + X.\mathit{name}}$ \\[0.5ex]
        \linenumber \IF $(\anc(X) \cup \{X\}) \subseteq \mathcal{F}$ \quad\quad  \\[0.5ex]
        \linenumber \TAB \LET $\mathcal{F}_X = \anc(X) \cup \{X\}$ \\[0.5ex]
        \linenumber \TAB \LET $subtrees = \{\, \textsc{AuxView}(\text{root of } \omega_i, T_i) \,\}_{i\in[k]}$ \\[0.5ex]
        \linenumber \TAB \RETURN $\textsc{NewVT}({\mathit{viewname}}, \mathcal{F}_X, subtrees)$ \\[0.5ex]
        \linenumber \LET $\mathcal{F}_X = \anc(X) \cup (\mathcal{F} \cap \vars(\omega))$ \\[0.5ex]
        \linenumber \LET $\mathit{subtrees} = \{ T_i \}_{i\in[k]}$ \\[0.5ex]
        \linenumber \RETURN $\textsc{NewVT}({\mathit{viewname}}, \mathcal{F}_X, \mathit{subtrees})$
    \end{minipage}\\[2.75ex]
    \bottomrule
\end{tabular}\vspace{-0.1em}
\caption{Construction of a view tree for a canonical variable order $\omega$ of a hierarchical query with free variables $\mathcal{F}$. {View names share a given $\mathit{prefix}$.}}
\label{fig:factorized_view_tree_algo}
\end{figure}

\begin{figure}[t]
\centering
\setlength{\tabcolsep}{3pt}
\renewcommand{\linenumber}{{\rownumber\TAB}} 
\setcounter{magicrownumbers}{0}
\begin{tabular}[t]{l}
    \toprule
    \textsc{NewVT}({string $\mathit{viewname}$}, schema $\calS$, view trees $T_1, \ldots, T_k$) : view tree \\
    \midrule
    \linenumber \LET $V_i(\calS_i)$ = root of $T_i, \forall i\in[k]$ \\[0.5ex]
    \linenumber \LET $V(\calS) = {\text{ join of } V_1(\calS_1),\ldots,V_k(\calS_k) \text{ projected onto } \calS}$ \\[0.5ex]
    \linenumber {$V.\mathit{name} := viewname$} \\[0.5ex]
    \linenumber \RETURN $\left\{
      \begin{array}{@{~~}c@{~~}l@{~~}}
        T_1 &\text{,\; $k = 1 \land \calS = \calS_1$}\\
        \raisebox{-4 ex}{\tikz {
            \node at (0,-1)  (n4) {$V(\calS)$};
            \node at (-0.5,-1.75)  (n1) {$T_1$} edge[-] (n4);
            \node at (0,-1.75)  (n2) {$\ldots$};
            \node at (0.5,-1.75)  (n3) {$T_k$} edge[-] (n4);
        }} & \text{,\; otherwise}
      \end{array}\right.$ \\[0.5ex]
    \bottomrule
\end{tabular}
\vspace{-0.1em}
\caption{Construction of a view tree with a {given root view name}, root schema $\mathcal{S}$, and children $T_1, \ldots, T_k$.}
\label{fig:view_creation}
\end{figure}

\begin{figure}[t]
\centering
\setlength{\tabcolsep}{3pt}
\renewcommand{\linenumber}{{\rownumber\TAB}}
\setcounter{magicrownumbers}{0}
\begin{tabular}[t]{l}
    \toprule
    \textsc{AuxView}(node $Z$, view tree $T$) : view tree \phantom{\hspace{2.65cm}} \\
    \midrule
    \linenumber \LET $V(\calS) = \text{ root of } T$ \\[0.5ex]
    \linenumber {\LET $\mathit{viewname} = V.\mathit{name} \,+\, \text{``\,'\,''}$} \\[0.5ex]
    \linenumber \IF $mode = \text{`dynamic'} \land \sibling(Z) \land \anc(Z) \subset \calS$  \\[0.5ex]
    \linenumber \TAB \RETURN $\textsc{NewVT}({\mathit{viewname}}, \anc(Z), \{T\})$ \\[0.5ex]
    \linenumber \RETURN $T$\\[0.5ex]
    \bottomrule
\end{tabular}\vspace{-0.2em}
\caption{A tree $T$ constructed at variable $Z$ is extended with a new root view that aggregates away $Z$.}
\label{fig:aux_view_creation}
\vspace*{-0.2em}
\end{figure}

For a canonical variable order of a hierarchical query, the free-connex property fails if there are free variables such that they are below a bound join variable and are not covered by one atom.
Indeed, assume two branches out of a bound join variable $X$ and with free variables $Y$ and respectively $Z$. Then, there are two atoms in $Q$ whose sets of variables include $\{X,Y\}$ and respectively $\{X,Z\}$, while $\{Y,Z\}$ are included in the head atom of $Q$. This creates a cycle in the hypergraph of $Q$, which means that $Q$ is not free-connex. 

For $\dfw_0$-hierarchical queries, there is no bound variable whose set of atoms strictly contains the atoms of a free variable. Such queries thus admit canonical free-top variable orders where all free variables occur above the bound ones.

For any free-connex hierarchical query, each view created by \textsc{BuildVT} is defined over variables from one atom of the query and can be materialized in linear time. We can thus recover the linear-time preprocessing for such queries used for static~\cite{BaganDG07} and dynamic~\cite{BerkholzKS17,Idris:dynamic:SIGMOD:2017} evaluation.

\begin{exa}\label{ex:free-connex-view-tree}
\rm
Consider the free-connex query 
$$Q(A,D,E)=R(A,B,C),S(A,B,D),T(A,E)$$
 and its canonical variable order in Figure~\ref{fig:free-connex-view-tree}.
We construct the view tree bottom-up as follows. At $C$, we create the view $V_C(A,B)$ that aggregates away the bound variable $C$ but keeps its ancestors $A$ and $B$ to define views up in the tree. 
{Since $D$ is free and has only one child,
we skip creating a view at $D$; see the first case in Line 4 of \textsc{NewVT} from
Figure~\ref{fig:view_creation}.} Similarly, no view is created at $E$. At $B$, we create the view $V_B(A,D) = V_C(A,B), S(A,B,D)$, which keeps $D$ as it is free and $A$ as the ancestor of $B$. 
At $A$, we create the views $V_A(A) = V_B(A,D), T(A,E)$ in the static case 
and $V_A(A) = V'_B(A), T'(A)$ in the dynamic case, where $V'_B(A) = V_B(A,D)$ and $T'(A) = T(A,E)$.
Each view can be computed in linear time by aggregating away variables and semi-join reduction.
The result of $Q$ can be enumerated using $V_A(A)$, $V_B(A,D)$, and $T(A,E)$ with constant delay.\qed
\end{exa}

\begin{figure}[t]
  \small
  \centering
  \begin{minipage}[b]{0.49\linewidth}
    \centering
    \begin{tikzpicture}[xscale=1.4, yscale=0.8]
      \node at (0.25, 0.0) (A) {{\color{black} $\underline{A}$}};
      \node at (-0.5, -1.0) (B) {{\color{black} $B$}} edge[-] (A);
      \node at (1.5, -1.0) (E) {{\color{black} $\underline{E}$}} edge[-] (A);
      \node at (1.5, -3.0) (T) {{\color{black} $T(\underline{A},\underline{E})$}} edge[-] (E);
      \node at (-1.3, -2.0) (C) {{\color{black} $C$}} edge[-] (B);
      \node at (0.3, -2.0) (D) {{\color{black} $\underline{D}$}} edge[-] (B);
      \node at (0.3, -3.0) (S) {{\color{black} $S(\underline{A},B,\underline{D})$}} edge[-] (D);
      \node at (-1.3, -3.0) (R) {{\color{black} $R(\underline{A},B,C)$}} edge[-] (C);
    \end{tikzpicture}
  \end{minipage}
  \begin{minipage}[b]{0.49\linewidth}
    \small
    \centering
    \begin{tikzpicture}[xscale=1.4, yscale=0.8]
      \node at (0.25, 0.0) (A) {{\color{black} $V_A(\underline{A})$}};
      \node at (-1, -0.75) (Bp) {{\color{black} $V'_B(\underline{A})$}} edge[-, shorten >=-0.05cm, shorten <=-0.05cm] (A);
      \node at (-1, -1.5) (B) {{\color{black} $V_B(\underline{A},\underline{D})$}} edge[-, shorten >=-0.05cm, shorten <=-0.05cm] (Bp);
      \node at (1.5, -0.75) (Tp) {{\color{black} $T'(\underline{A})$}} edge[-, shorten >=-0.05cm, shorten <=-0.05cm] (A);
      \node at (1.5, -3.0) (T) {{\color{black} $T(\underline{A},\underline{E})$}} edge[-] (Tp);
      \node at (-2, -2.25) (C) {{\color{black} $V_C(\underline{A},B)$}} edge[-, shorten >=-0.05cm, shorten <=-0.05cm] (B);
      \node at (0.0, -3.0) (S) {{\color{black} $S(\underline{A},B,\underline{D})$}} edge[-] (B);
      \node at (-2, -3.0) (R) {{\color{black} $R(\underline{A},B,C)$}} edge[-, shorten >=-0.05cm] (C);

      \draw[black, dashed] (-1.5,-0.43) rectangle (-0.5,-1.07);
      \draw[black, dashed] (1,-0.43) rectangle (2,-1.07);
    \end{tikzpicture}
  \end{minipage}
  \caption{Canonical variable order and view tree for $Q(A,D,E) = R(A,B,C), S(A,B,D), T(A,E)$ in Example~\ref{ex:free-connex-view-tree}. The views $V'_B$ and $T'$ are created in the dynamic case. Free variables are underlined.} 
  \label{fig:free-connex-view-tree}
\end{figure}

\subsection{Skew-Aware View Trees}
\label{sec:skew_aware_trees}
For free-connex queries, the procedure \textsc{BuildVT} constructs in linear time a data structure  
that allows for constant-time enumeration delay (Proposition~\ref{prop:enumeration} and Lemma~\ref{lem:fact_view_tree}). 
For $\dfw_0$-hierarchical queries, it also admits constant-time updates (Lemma~\ref{lem:fact_view_tree_update}).
We now focus on the bound join variables that violate the free-connex property in the static case or the $\dfw_0$-hierarchical property in the dynamic case. For each such violating bound variable $X$, we use two evaluation strategies.
 
The first strategy materializes a subset of the query result obtained for the {\it light} values over the set of variables $\anc(X)\cup\{X\}$ in the variable order. It also aggregates away the bound variables in the subtree rooted at $X$. Since the light values have a bounded degree, this materialization is inexpensive.

The second strategy computes a compact representation of the rest of the query result obtained for those values over $\anc(X)\cup\{X\}$ that are {\it heavy} (i.e., have high degree) in at least one relation.
This second strategy treats $X$ as a free variable and proceeds recursively to resolve further bound variables located below $X$ in the variable order and to potentially fork into more strategies.

The union of these strategies precisely cover the entire query result, yet not necessarily disjointly. 
To enumerate the distinct tuples in the query result, we then use an adaptation of the union algorithm~\cite{Durand:CSL:11} where the delay is given by the number of heavy values of the variables we partitioned on and by the number of strategies. 

\paragraph{Heavy and Light Indicators.}
{We consider a bound join variables that violates the free-connex property in the static case or the $\dfw_0$-hierarchical 
property in the dynamic case.}
We compute heavy and light indicator views consisting of disjoint sets of values for each such variable $X$. 
The heavy indicator has the values that exist in all relations and are heavy in at least one relation. 
The light indicator has the values that exist in all relations and are light in all relations.
Indicator views have set semantics. They allow us to rewrite the query into an equivalent union of two queries.

Partitioning the query result only based on the degree of $X$-values  may blow up the enumeration delay: 
the path from $X$ to the root may contain several bound join variables, each creating buckets of values per bucket of their ancestors, thus leading to an explosion of the number of buckets that need to be unioned together during enumeration. 
However, one remarkable property holds for hierarchical queries: each base relation located in the subtree rooted at $X$ contains $X$ but also all the ancestors of $X$.
Thus, by partitioning each relation jointly on $X$ and its ancestors, we can ensure the enumeration delay remains linear in the number of distinct heavy values over $\anc(X)\cup\{X\}$.

\begin{figure}[t]
\centering
\setlength{\tabcolsep}{3pt}
\renewcommand{\linenumber}{{\rownumber\TAB}}
\renewcommand{\arraystretch}{1.05}
\setcounter{magicrownumbers}{0}
\begin{tabular}[t]{l}
    \toprule
    \textsc{IndicatorVTs}(variable order $\omega$) : triple of view trees \\
    \midrule
    \linenumber \LET $X = \text{root of } \omega$ \\[0.2ex]
    \linenumber \LET $keys = \anc(X) \cup \{X\}$ \\[0.2ex]
    \linenumber \LET $alltree = \textsc{BuildVT}({\text{``All\,''}}, \omega, keys)$ \\[0.2ex]
    \linenumber \LET $ltree = \textsc{BuildVT}({\text{``\,L\,''}}, \omega^{keys}, keys)$ \\[0.2ex]
    \linenumber \LET $allroot = \text{root of } alltree$ \\[0.2ex]
    \linenumber \LET $lroot = \text{root of } ltree$ \\[0.2ex]
    \linenumber \LET $htree = \textsc{NewVT}({\text{``\,H\_\,''} + {X}.\mathit{name}}, keys, \{ allroot, \nexists{lroot}\})$ \\[0.2ex]
    \linenumber \RETURN $(alltree, ltree, htree)$ \\[0.2ex]
    \bottomrule
\end{tabular}
\caption{Construction of the heavy and light indicator view trees for a canonical variable order $\omega$ of a hierarchical query. The variable order $\omega^{keys}$ has the  structure of $\omega$ but each atom $R(\calY)$ is replaced with the light part $R^{keys}(\calY)$ of relation $R$ partitioned on $keys$.
{The view $\nexists{lroot}$ maps all tuples contained in $lroot$  
to $0$ and all other tuples to $1$}.}
\label{fig:skew_aware_views}
\end{figure}

Figure~\ref{fig:skew_aware_views} shows how to construct a triple of view trees for computing the indicators for $\anc(X)\cup\{X\}$, where $X$ is the root of a variable order $\omega$ that is a subtree in the variable order of a hierarchical query (thus $\anc(X)$ may be non-empty).
We first construct a view tree that computes the tuples of values for variables $keys = \anc(X) \cup \{X\}$  over the join of the relations from $\omega$.
We then build a similar view tree for the light indicator for $keys$ using a modified variable order $\omega^{keys}$ of the same structure as $\omega$ but with each relation $R$ replaced by the light part of $R$ partitioned on $keys$. 
Finally, the view tree for the heavy indicator computes the difference of all $keys$-values and those from the light indicator.

\begin{figure}[t]
\centering
\setlength{\tabcolsep}{3pt}
\renewcommand{\linenumber}{\makebox[2ex][r]{\rownumber\STAB\,}}
\renewcommand{\arraystretch}{1.05}
\setcounter{magicrownumbers}{0}
\begin{tabular}[t]{@{}c@{}cl@{}}
    \toprule
    \multicolumn{3}{l}{$\tau(\text{variable order } \omega, \text{free variables } \mathcal{F})$ : set of view trees} \\
    \midrule
    \multicolumn{3}{l}{\MATCH $\omega$:}\\
    \midrule
    \phantom{abc} & $R(\calY)$\hspace*{2.1em} & 
    \linenumber \RETURN $\{R(\calY)\}$\\[2pt]
    \cmidrule{2-3} \\[-6pt]
    &    
    \begin{minipage}[t]{1.7cm}
        \hspace*{-0.4cm}        
        \begin{tikzpicture}[xscale=0.45, yscale=1]
            \node at (0,-2)  (n4) {$X$};
            \node at (-1,-3)  (n1) {$\omega_1$} edge[-] (n4);
            \node at (0,-3)  (n2) {$\ldots$};
            \node at (1,-3)  (n3) {$\omega_k$} edge[-] (n4);
        \end{tikzpicture}        
    \end{minipage}
    &
    \begin{minipage}[t]{11.25cm}
        \vspace{-1.55cm}
        \linenumber \LET $keys = \anc(X) \cup \{X\}$ \\[0.5ex]
        \linenumber \LET $\mathcal{F}_X = \anc(X) \cup (\mathcal{F} \cap \vars(\omega))$ \\[0.5ex]
        \linenumber \LET $Q_X(\mathcal{F}_X) = \text{join of } \atoms(\omega)$ \\[0.5ex]
        \linenumber \IF $(mode = \text{`static'} \land Q_X(\mathcal{F}_X) \text{ is free-connex})\,\lor$ \\[0.5ex]
        \linenumber \TAB $(mode = \text{`dynamic'} \land Q_X(\mathcal{F}_X) \text{ is $\dfw_0$-hierarchical})$ \\[0.5ex]
        \linenumber \TAB \RETURN $\{\, \textsc{BuildVT}({\text{``\,V\,''}}, \omega, \mathcal{F}_X) \,\}$ \\[0.5ex]
        \linenumber \IF $X \in \mathcal{F}$ \\[0.5ex]
        \linenumber \TAB \RETURN $\{\, \textsc{NewVT}({\text{``\,V\_\,''} + X.\mathit{name}}, keys, \{\hat{T}_1, {.}{.}{.}, \hat{T}_k\})$ \\[0.5ex] 
        \linenumber \TAB\TAB\TAB\TAB\STAB $\mid 
        T_i \in \tau(\omega_i, \mathcal{F})_{i \in [k]},$ \\[0.5ex]
        \linenumber \TAB\TAB\TAB\TAB\TAB $\hat{T}_i = \textsc{AuxView}(\text{root of } \omega_i, T_i )_{i\in[k]} \}$ \\[0.5ex]        
        \linenumber \LET $(\_,\_,H_{X}) = \text{roots of } \textsc{IndicatorVTs}(\omega)$ \\[0.5ex]
        \linenumber \LET $htrees = \{\, \textsc{NewVT}({\text{``\,V\_\,''} + X.\mathit{name}}, keys, \{\exists H_{X}, \hat{T}_1, {.}{.}{.}, \hat{T}_k\})$ \\[0.5ex] 
        \linenumber \TAB\TAB\TAB\TAB\TAB\; $\mid T_i \in \tau(\omega_i, \mathcal{F})_{i\in[k]},$\\[0.5ex] 
        \linenumber \TAB\TAB\TAB\TAB\TAB\TAB $\hat{T}_i = \textsc{AuxView}(\text{root of } \omega_i, T_i )_{i\in[k]}\}$ \\[0.5ex]
        \linenumber \LET $ltree = \textsc{BuildVT}({\text{``\,V\,''}}, \omega^{keys}, \mathcal{F}_X)$ \\[0.5ex]
        \linenumber \RETURN $htrees \cup \{\, ltree \,\}$\\[-1.5ex]
    \end{minipage}\\
    \bottomrule
\end{tabular}
\caption{Construction of skew-aware view trees for a canonical variable order $\omega$ of a hierarchical query with free variables $\mathcal{F}$. The global parameter $mode \in \{\text{`static'},\text{`dynamic'}\}$ specifies the evaluation mode.
The variable order $\omega^{keys}$ has the structure of $\omega$ but each atom $R(\calY)$ is replaced by the light part $R^{\hspace{0.2mm}keys}(\calY)$ of relation $R$ partitioned on $keys$.
{The view $\exists H_{X}$ maps all tuples contained in $H_X$  
to $1$ and all other tuples to $0$.}}
\label{fig:view_forest_main_algo}
\end{figure}

\begin{figure*}[t]
\begin{center}
  \begin{minipage}[b]{0.43\textwidth}
    \scalebox{0.8}{
    \begin{tikzpicture}[xscale=1.1, yscale=1]
      \node at (0.0, 0.0) (A) {{\small \color{black} $A$}};
      \node at (-1.8, -0.8) (B) {{\small\color{black} $B$}} edge[-] (A);
      \node at (1.8, -0.8) (C) {{\small\color{black} $\underline{C}$}} edge[-] (A);
      \node at (-2.7, -1.6) (D) {{\small\color{black} $\underline{D}$}} edge[-] (B);
      \node at (-1, -1.6) (E) {{\small\color{black} $\underline{E}$}} edge[-] (B);
      \node at (0.9, -1.6) (F) {{\small\color{black} $\underline{F}$}} edge[-] (C);
      \node at (2.7, -1.6) (G) {{\small\color{black} $G$}} edge[-] (C);
      \node at (-2.7, -2.4) (R) {{\small \color{black} $R(A,B,\underline{D})$}} edge[-] (D);
      \node at (-1, -2.4) (S) {{\small \color{black} $S(A,B,\underline{E})$}} edge[-] (E);
      \node at (0.9,  -2.4)(T) {{\small \color{black} $T(A,\underline{C},\underline{F})$}} edge[-] (F);
      \node at (2.7,  -2.4)(U) {{\small \color{black} $U(A,\underline{C},G)$}} edge[-] (G);
    \end{tikzpicture}
    }
  \end{minipage}
  \begin{minipage}[b]{0.45\textwidth}
    \small
    \scalebox{0.8}{
    \begin{tikzpicture}[xscale=1.4, yscale=1]

        \node at (0.0, -0.65) (A) {{\small \color{black} $V_{A}({ \underline{C},\underline{D},\underline{E}, \underline{F}})$}};
        \node at (-1.4, -1.5) (B) {{\small \color{black} $V_B(A,\underline{D},\underline{E})$}} edge[-] (A);
        \node at (1.4, -1.5) (C) {{\small \color{black} $V_C(A,\underline{C},\underline{F})$}} edge[-] (A);
        \node at (2.175, -2.25) (G) {{\small \color{black} $V_G(A,\underline{C})$}} edge[-] (C);
        \node at (-2.175, -3.0) (R) {{\small \color{black} $R^A(A,B,\underline{D})$}} edge[-] (B);
        \node at (-0.75, -3.0) (S) {{\small \color{black} $S^A(A,B,\underline{E})$}} edge[-] (B);
        \node at (0.7, -3.0) (T) {{\small \color{black} $T^A(A,\underline{C},\underline{F})$}} edge[-] (C);
        \node at (2.175, -3.0) (U) {{\small \color{black} $U^A(A,\underline{C},G)$}} edge[-] (G);        
    \end{tikzpicture}
    }
  \end{minipage}
  
  \smallskip
  \begin{minipage}[b]{0.43\textwidth}
    \small
    \scalebox{0.8}{
    \begin{tikzpicture}[xscale=1.35, yscale=1]
      \node at (0.5, 0.0) (V_{A}) {{\small \color{black} $V_{A}({\color{black} A})$}};
      \node at (-1.1, -0.75) (H_{A}) {{\small \color{black} $\exists H_{A}({ A})$}} edge[-] (V_{A});
      \draw[black, dashed] (-0.05,-0.5) rectangle (1.05,-1.0);
      \node at (0.5, -0.75) (Vp_{B}) {{\small \color{black} $V'_{B}({ A})$}} edge[-] (V_{A});
      \node at (0.5, -1.5) (V_{B}) {{\small \color{black} $V_{B}({ A,\underline{D},\underline{E}})$}} edge[-] (Vp_{B});

      \node at (-1.1, -3.0) (R) {{\small \color{black} $R^{AB}(A,B,\underline{D})$}} edge[-] (V_{B});
      \node at (0.5, -3.0) (S) {{\small \color{black} $S^{AB}(A,B,\underline{E})$}} edge[-] (V_{B});

      \draw[black, dashed] (1.37,-0.5) rectangle (2.47,-1.0);
      \node at (1.92, -0.75) (Vp_{C}) {{\small \color{black} $V'_{C}({ A})$}} edge[-] (V_{A});
      \node at (1.92, -1.5) (V_{C}) {{\small \color{black} $V_{C}({ A,\underline{C}})$}} edge[-] (Vp_{C});
      \draw[black, dashed] (1.37,-2.0) rectangle (2.47,-2.5);
      \node at (1.92, -2.25) (VFp) {{\small \color{black} $T'(A,\underline{C})$}} edge[-] (V_{C});
      \node at (3.2, -2.25) (VG) {{\small \color{black} $V_G(A,\underline{C})$}} edge[-] (V_{C});
      \node at (1.92, -3.0) (T) {{\small \color{black} $T(A,\underline{C},\underline{F})$}} edge[-] (VFp);
      \node at (3.2, -3.0) (U) {{\small \color{black} $U(A,\underline{C},G)$}} edge[-] (VG);
    \end{tikzpicture}
    }
  \end{minipage}
  \begin{minipage}[b]{0.56\textwidth}
    \small
    \scalebox{0.8}{
    \begin{tikzpicture}[xscale=1.65, yscale=1]
      \node at (0.0, 0.0) (V_{A}) {{\small \color{black} $V_{A}({\color{black} A})$}};
      \node at (-1.8, -0.75) (H_{A}) {{\small \color{black} $\exists H_{A}({ A})$}} edge[-] (V_{A});
      \draw[black, dashed] (-0.55,-0.5) rectangle (0.55,-1.0);
      \node at (0.0, -0.75) (Vp_{B}) {{\small \color{black} $V'_{B}({\color{black} A})$}} edge[-] (V_{A});
      \node at (0.0, -1.5) (V_{B}) {{\small \color{black} $V_{B}({\color{black} A,B})$}} edge[-] (Vp_{B});
      \node at (-1.6, -1.8) (H_{AB}) {{\small \color{black} $\exists H_{B}({ A,B})$}} edge[-] (V_{B});
      \draw[black, dashed] (-1.05,-2.0) rectangle (-0.15,-2.5);
      \node at (-0.6, -2.25) (V_{D}) {{\small \color{black} $R'({ A,B})$}} edge[-] (V_{B});
      \node at (-0.6, -3.0) (R) {{\small \color{black} $R(A,B,\underline{D})$}} edge[-] (V_{D});
      \draw[black, dashed] (0.15,-2.0) rectangle (1.05,-2.5);
      \node at (0.6, -2.25) (V_{E}) {{\small \color{black} $S'({ A,B})$}} edge[-] (V_{B});
      \node at (0.6, -3.0) (S) {{\small \color{black} $S(A,B,\underline{E})$}} edge[-] (V_{E});

      \draw[black, dashed] (1.25,-0.5) rectangle (2.35,-1.0);
      \node at (1.8, -0.75) (Vp_{C}) {{\small \color{black} $V'_{C}({ A})$}} edge[-] (V_{A});
      \node at (1.8, -1.5) (V_{C}) {{\small \color{black} $V_{C}({ A,\underline{C}})$}} edge[-] (Vp_{C});
      \draw[black, dashed] (1.3,-2.0) rectangle (2.3,-2.5);
      \node at (1.8, -2.25) (VFp) {{\small \color{black} $T'(A,\underline{C})$}} edge[-] (V_{C});
      \node at (2.85, -2.25) (VG) {{\small \color{black} $V_G(A,\underline{C})$}} edge[-] (V_{C});
      \node at (1.8, -3.0) (T) {{\small \color{black} $T(A,\underline{C},\underline{F})$}} edge[-] (VFp);
      \node at (2.85, -3.0) (U) {{\small \color{black} $U(A,\underline{C},G)$}} edge[-] (VG);
    \end{tikzpicture}
    }
  \end{minipage}

\smallskip
  \begin{minipage}[b]{0.45\linewidth}
    \small
    \scalebox{0.8}{
    \begin{tikzpicture}[xscale=1.13, yscale=1]
      \node at (0.0, 0.0) (All_{A}) {{\small \color{black} $All_{A}({ A})$}};
      \node at (-1.9, -0.8) (All_{B}) {{\small \color{black} $All_{B}({ A})$}} edge[-] (All_{A});
      \node at (1.9, -0.8) (All_{C}) {{\small \color{black} $All_{C}({ A})$}} edge[-] (All_{A});
      \node at (-2.7, -1.6) (All_{D}) {{\small \color{black} $All_{D}({ A,B})$}} edge[-] (All_{B});
      \node at (-1.2, -1.6) (All_{E}) {{\small \color{black} $All_{E}({ A,B})$}} edge[-] (All_{B});
      \node at (1.1, -1.6) (All_{F}) {{\small \color{black} $All_{F}({ A,\underline{C}})$}} edge[-] (All_{C});
      \node at (2.7, -1.6) (All_{G}) {{\small \color{black} $All_{G}({ A,\underline{C}})$}} edge[-] (All_{C});
      \node at (-2.7, -2.4) (R) {{\small \color{black} $R(A,B,\underline{D})$}} edge[-] (All_{D});
      \node at (-1.2, -2.4) (S) {{\small \color{black} $S(A,B,\underline{E})$}} edge[-] (All_{E});
      \node at (1.1,  -2.4)(T) {{\small \color{black} $T(A,\underline{C},\underline{F})$}} edge[-] (All_{F});
      \node at (2.7,  -2.4)(U) {{\small \color{black} $U(A,\underline{C},G)$}} edge[-] (All_{G});
    \end{tikzpicture}
    }
  \end{minipage}
  \hspace{0.5cm}
  \begin{minipage}[b]{0.45\linewidth}
    \small
    \scalebox{0.8}{
    \begin{tikzpicture}[xscale=1.3, yscale=1]
      \node at (0.0, 0.0) (L_{A}) {{\small \color{black} $L_{A}({ A})$}};
      \node at (-1.2, -0.8) (All_{B}) {{\small \color{black} $L_{B}({ A})$}} edge[-] (L_{A});
      \node at (1.2, -0.8) (All_{C}) {{\small \color{black}  $L_{C}({ A})$}} edge[-]  (L_{A});

      \node at (-2.25,  -1.6) (All_{D}) {{\small \color{black} $L_{D}({ A,B})$}} edge[-] (All_{B});
      \node at (-0.75, -1.6) (All_{E}) {{\small \color{black} $L_{E}({ A,B})$}} edge[-] (All_{B});
      \node at (0.75, -1.6) (All_{F}) {{\small \color{black}  $L_{F}({ A,\underline{C}})$}} edge[-] (All_{C});
      \node at (2.25,  -1.6) (All_{G}) {{\small \color{black}  $L_{G}({ A,\underline{C}})$}} edge[-] (All_{C});

      \node at (-2.25,  -2.4) (R^L) {{\small \color{black} $R^{A}(A,B,\underline{D})$}} edge[-]  (All_{D});
      \node at (-0.75, -2.4) (S^L) {{\small \color{black} $S^{A}(A,B,\underline{E})$}} edge[-]  (All_{E});
      \node at (0.75,  -2.4) (T^L) {{\small \color{black}  $T^{A}(A,\underline{C},\underline{F})$}} edge[-] (All_{F});
      \node at (2.25,   -2.4) (U^L) {{\small \color{black}  $U^{A}(A,\underline{C},G)$}} edge[-] (All_{G});
    \end{tikzpicture}
    }
  \end{minipage}
  
  \smallskip
  \begin{minipage}[b]{0.24\linewidth}
    \small
    \scalebox{0.8}{
    \begin{tikzpicture}[xscale=1.15, yscale=1]
      \node at (0.0, 0.0) (H_{B}) {{\small \color{black} $H_{A}({ A})$}};
      \node at (-0.6, -1.6) (All_{B}) {{\small \color{black} $All_A({ A})$}} edge[-] (H_{B});
      \node at (0.6, -1.6) (L_{B}) {{\small \color{black} $\nexists L_{A}({ A})$}} edge[-] (H_{B});
    \end{tikzpicture}
    }
  \end{minipage}
  \hspace{-1cm}
  \begin{minipage}[b]{0.24\linewidth}
    \small
    \centering
    \scalebox{0.8}{
    \begin{tikzpicture}[xscale=1.15, yscale=1]
      \node at (0.0, 0.0) (All_{A}) {{\small \color{black} $All_{B}({ A,B})$}};
      \node at (-0.8, -0.8) (V_{B}) {{\small \color{black} $All_{D}({ A,B})$}} edge[-] (All_{A});
      \node at ( 0.8, -0.8) (V_{C}) {{\small \color{black} $All_{E}({ A,B})$}} edge[-] (All_{A});

      \node at (-0.8, -1.6) (R) {{\small \color{black} $R(A,B,\underline{D})$}} edge[-] (V_{B});
      \node at ( 0.8, -1.6) (S) {{\small \color{black} $S(A,B,\underline{E})$}} edge[-] (V_{C});
      
    \end{tikzpicture}
    }
  \end{minipage}
\hspace{0.2cm}
  \begin{minipage}[b]{0.26\linewidth}
    \small
    \centering
    \scalebox{0.8}{
    \begin{tikzpicture}[xscale=1.3, yscale=1]
      \node at (0.0, 0.0) (L_{A}) {{\small \color{black} $L_{B}({ A,B})$}};
      \node at (-0.85, -0.8) (V_{B}) {{\small \color{black} $L_{D}({ A,B})$}} edge[-] (L_{A});
      \node at ( 0.85, -0.8) (V_{C}) {{\small \color{black} $L_{E}({ A,B})$}} edge[-] (L_{A});
      \node at (-0.85, -1.6) (R^L) {{\small \color{black} $R^{{AB}}(A,B,\underline{D})$}} edge[-] (V_{B});
      \node at ( 0.85, -1.6) (S^L) {{\small \color{black} $S^{{AB}}(A,B,\underline{E})$}} edge[-] (V_{C});
    \end{tikzpicture}
    }
  \end{minipage}
\hspace{0.4cm}
  \begin{minipage}[b]{0.24\linewidth}
    \small
    \centering
    \scalebox{0.8}{
    \begin{tikzpicture}[xscale=1.15, yscale=1] 
      \node at (0.0, 0.0) (H_{B}) {{\small \color{black} $H_{B}({ A,B})$}};
      \node at (-0.8, -1.6) (All_{B}) {{\small \color{black} $All_{B}({ A,B})$}} edge[-] (H_{B});
      \node at (0.8, -1.6) (L_{B}) {{\small \color{black} $\nexists L_{B}({ A,B})$}} edge[-] (H_{B});
    \end{tikzpicture}
    }
  \end{minipage}

  \end{center}
  \caption{
    Canonical variable order for the query $Q(C,D,E,F)\allowbreak =\allowbreak R(A,B,D),\allowbreak S(A,B,E),\allowbreak T(A,C,F),\allowbreak U(A,C,G)$ (top left). 
    The three view trees constructed for the query (top right and second row).
    The indicator view trees for computing $H_A$  and $H_B$ (third and fourth row).
    The views with a dashed box are only needed for dynamic query evaluation.}
  \label{fig:view_tree_example_21}
\end{figure*}

\paragraph{View Trees with Indicators.}
Figure~\ref{fig:view_forest_main_algo} gives the algorithm for constructing the view trees for a variable order $\omega$ of a hierarchical query $Q(\mathcal{F})$.
The algorithm traverses the variable order $\omega$ top-down, maintaining the invariant that all ancestors of a node are free variables (or treated as such in case of bound join variables whose values are heavy).

The free variables at node $X$ are the ancestors of $X$ and the free variables in the subtree rooted at $X$ (Line 3). 
If the residual query $Q_X$ at node $X$ (Line 4) is free-connex in the static case or $\dfw_0$-hierarchical in the dynamic case, we return a view tree for $Q_X$ (Lines 5-7).
If $X$ is free, we recursively compute a set of view trees for each child of $X$. 
We may extend the root of each child tree with an auxiliary view in the dynamic mode to support constant-time propagation of updates coming via the siblings of $X$.
For each combination of the child view trees, we form a new view joining the roots of the child view trees and using $X$ and its ancestors as free variables (Lines 8-11).
If $X$ is bound, we create two evaluation strategies for the residual query $Q_X$ based on the degree of values of $X$ and its ancestors in the relations of $Q_X$.
We construct the indicator view trees for $X$ and its ancestors (Line 12). 
The heavy indicator restricts the joins of the child views to only heavy values for the tuple of $X$ and its ancestors (Lines 13-15). 
We also construct a view tree over the light parts of the relations in $\omega$ (Line 16). 

{The algorithm from Figure 11 uses different criteria for the static and dynamic cases (Lines 5-6) to decide on whether to stop recursively traversing the variable order. Since the class of $\delta_0$-hierarchical queries is a proper subset of the class of free-connex queries, 
the algorithm may partition input relations on more attributes and create more view trees in the dynamic case than in the static case for the same variable order and free variables.}

We next showcase our approach on a non-free-connex query. 
Section~\ref{sec:examples} provides additional examples with $\dfw_1$-hierarchical queries.

\begin{exa}
\rm
Figure~\ref{fig:view_tree_example_21} shows the view trees for the query $$Q(C,D,E,F) = R(A,B,D),\, S(A,B,E),\, T(A,C,F),\, U(A,C,G).$$

We start from the root $A$ in the variable order. Since $Q$ is not free-connex (and also  not $\dfw_0$-hierarchical) and $A$ is bound, we create the view trees for the indicators $H_A(A)$ and $L_A(A)$. Materializing the views in these view trees takes linear time. 

In the light case for $A$, we create a view tree with the root $V_A(C,D,E,F)$ and the leaves being the light parts of the input relations partitioned on $A$ (bottom-left). 
Computing $V_G(A,C)$ and $V_C(A,C,F)$ takes linear time. 
We compute the view $V_B(A,D,E)$ in time $\bigO{N^{1+\eps}}$: For each $(a,b,d)$ tuple in $R^A$, we iterate over at most $N^\eps$ $(a,b,e)$ values in $S^{A}$. The view $V_B(A,D,E)$ contains at most $N^{1+\eps}$ tuples.
Similarly, we compute $V_A(C,D,E,F)$ in time $\bigO{N^{1+2\eps}}$: For each $(a,d,e)$ tuple in $V_B$, we iterate over at most $N^\eps$ $(a,c,f)$ values in $V_C$. 
The view $V_A(C,D,E,F)$ allows constant delay enumeration of its result.

In the heavy case for $A$, we recursively process the subtrees of $A$ in $\omega$ and treat $A$ as free.
The right subquery, $Q_C(A,C,F)=T(A,C,F),U(A,C,G)$ is free-connex and 
$\dfw_0$-hierarchical, thus we compute its view tree with the root $V_C(A,C)$ in the static case and the root $V'_C(A)$ in the dynamic case (view trees in the second row) in linear time.
The left subquery $Q_B(A,D,E) = R(A,B,D),S(A,B,E)$, however, is neither free-connex nor $\dfw_0$-hierarchical. Since $B$ is bound, we create the indicator relations $H_B(A,B)$ and $L_B(A,B)$ in linear time. We distinguish two new cases:
In the light case for $(A,B)$, we construct a view tree with the root $V_B(A,D,E)$ $= R^{AB}(A,B,D),S^{AB}(A,B,E)$ (second row left) and compute $V_B(A,D,E)$ in time $\bigO{N^{1+\eps}}$ by iterating over $R^{AB}$ and,  for each $(a,b,d)$, iterating over at most $N^\eps$ $E$-values in $S^{AB}$. 
In the heavy case for $(A,B)$, we process the subtrees of $B$ considering $B$ as free variable. The two subqueries, $Q_D(A,B,D) = R(A,B,D)$ and $Q_E(A,B,E) = S(A,B,E)$, are $\dfw_0$-hierarchical.

Overall, we create three view trees for $Q$ and two sets of view trees for the indicator relations at $A$ and $B$. The time needed to compute these view trees is $\bigO{N^{1+2\eps}}$.
\qed
\end{exa}

Given a hierarchical query, our algorithm  effectively rewrites it into an equivalent union of queries, with one query defined by the join of the leaves of a view tree.

\begin{prop}
\label{prop:equivalence}
Let $\{T_1,\ldots,T_k\} = \tau(\omega,\calF)$ 
be the set of view trees constructed by the algorithm in Figure~\ref{fig:view_forest_main_algo} for a given hierarchical query $Q(\mathcal{F})$ and a canonical variable order $\omega$ for $Q$.
Let $Q^{(i)}(\mathcal{F})$ be the query defined by the conjunction of the leaf atoms in $T_i$, $\forall i\in[k]$. Then, $Q(\mathcal{F}) \equiv \bigcup_{i\in[k]} Q^{(i)}(\mathcal{F})$.
\end{prop}


The preprocessing time of our approach is given by the time to materialize the views in the view trees.

\begin{prop}
\label{prop:preproc_time}
Given a hierarchical query $Q(\calF)$ with
 static width $\fw$, a canonical variable order 
$\omega$ for $Q$,
a database  of size $N$, and $\eps \in [0,1]$, 
the views in the set of view trees $\tau(\omega,\calF)$
can be materialized in $\bigO{N^{1+(\fw-1)\eps}}$ time.
\end{prop}

\section{Enumeration}
\label{sec:enumeration}
\begin{figure}[t]
{  
  \begin{center}
  \renewcommand{\arraystretch}{1.15}
  \renewcommand{\linenumber}{\makebox[3ex][r]{\rownumber\TAB}}
  \newcommand\hfilll{\hspace{0pt plus 1filll}}
  \setcounter{magicrownumbers}{0}
  \begin{tabular}{@{\hskip 0.1in}l}
  \toprule
  $T.\mathit{open}(\text{tuple } \mathit{ctx})$ \\
  \midrule
  \linenumber \LET $V(\calS) = $ root of $T$ \\
  \linenumber $V.\mathit{open}(\mathit{ctx})$\\
  \linenumber $T.\mathit{buckets} := \emptyset$ \\
  \linenumber \LET $\{T_1, \ldots, T_k\} = $ children of $T$ \\  
  \linenumber \IF $\exists i \in [k] \text{ such that } T_i = \exists H$ \qquad \hfilll \text{\color{gray} // heavy indicator as child}\\
  \linenumber \TAB $\exists H.\mathit{open}(\mathit{ctx})$ \\
  \linenumber \TAB \WHILE $(h := \exists H.\mathit{next}()) \neq \EOF$ \\
  \linenumber \TAB \TAB $T' := $ shallow copy of $T$ without $\exists H$ \\
  \linenumber \TAB \TAB $T'.open(\mathit{h})$\\
  \linenumber \TAB\TAB $T.\mathit{buckets} := T.\mathit{buckets} \cup \{T'\}$  \\
  \linenumber \ELSE \IF $\calS \subset \text{free variables in $T$}$ {\hfilll\text{\color{gray} // need to recurse}} \\
  \linenumber \TAB $T.\mathit{ctx} := V.\mathit{next}() $ {\hfilll\text{\color{gray} // current context for entire tree $T$}}\\
  \linenumber \TAB \FOREACH $i \in [k]$ \DO $T_i.\mathit{open}(T.\mathit{ctx})$\\    
  \linenumber $T.\mathit{next}()$ \qquad\qquad\qquad {\hfilll\text{\color{gray} // initializes $T.\mathit{tuple}$ to first tuple to be returned}}\\
  \bottomrule
  \end{tabular}
  \end{center}
}
  \caption{Open the view iterators in a view tree.}
  \label{fig:enumeration-open}
\end{figure}
For any hierarchical query, Section~\ref{sec:preprocessing} constructs a set of view trees that together represent the query result.
We now show how to enumerate the distinct tuples in the query result with their multiplicity using the $open/next/close$ iterator model for such view trees. 

{
Each view in a view tree follows the iterator model. The function $V.\mathit{open}(ctx)$ initializes the iterator on view $V$ using the tuple $ctx$ as context, setting the range of the iterator to those tuples that are consistent with $ctx$ in $V$, that is, $ctx$ is part of each such tuple in $V$.
The function $V.\mathit{next}()$ returns a tuple consistent with $ctx$ in $V$; or it returns \EOF if the tuples in the range of the iterator are exhausted.
The tuples returned by $V.\mathit{next}()$ are distinct.
Both functions operate in constant time, as per our computational model.
}

Given a subtree $T$ of a view tree and the current tuple $ctx$ in its parent view,
the call $\mathit{T.open(ctx)}$ described in Figure~\ref{fig:enumeration-open} 
sets the range of the iterator of $T$ to those tuples in its root view that agree with 
 $\mathit{ctx}$ and positions the iterator at the first tuple in this range. The $open$ call is recursively propagated down the view tree with an increasingly more specific context tuple. 
{A $T.close()$ call resets the iterators of tree $T$. Each subtree $T$ has an attribute $T.\mathit{tuple}$ storing the next tuple to be reported. The $\mathit{open}$ method ends with a call to $T.\mathit{next}()$ to set $T.\mathit{tuple}$ to the first tuple to be reported.}

There are two cases that need special attention. If the schema of a view $V$ includes all free variables in the subtree rooted at $V$, then there is no need to open the views in this subtree since $V$ already has the tuples over these free variables; e.g., this is the case of the view $V_A(C,D,E,F)$ in Figure~\ref{fig:view_tree_example_21}. 
The views with heavy indicators, e.g., the views $V_A(A)$ and $V_B(A,B)$ in Figure~\ref{fig:view_tree_example_21}, also require special treatment. If $V$ has as child a heavy indicator $\exists H$, the tree $T$ rooted at $V$ represents possibly overlapping relations in the contexts given by the different tuples $h\in\exists H$. 
We {\em ground} the heavy indicator by creating an iterator for each heavy tuple agreeing with the current tuple $\mathit{ctx}$ at the parent view of $V$ and keep this iterator in a {\it shallow} copy of $T$. {Creating a shallow copy of $T$ means creating a tree of iterators of the same structure as $T$ but without copying the content of views under $T$.}

\begin{figure}[t]
{  
  \begin{center}
  \renewcommand{\arraystretch}{1.15}
  \renewcommand{\linenumber}{\makebox[3ex][r]{\rownumber\TAB}}
  \newcommand\hfilll{\hspace{-3pt plus 1filll}}
  \setcounter{magicrownumbers}{0}
  \begin{tabular}{l}
  \toprule
  $T.\mathit{next}(\,) : \text{tuple}$ \\
  \midrule
  \linenumber \LET $V(\calS) = $ root of $T$ \\
  \linenumber \IF $(T \mbox{ has no children}) \lor (\calS = \mbox{free variables in }T)$ \qquad {\hfilll\text{\color{gray} // no need to recurse}}\\
  \linenumber \TAB $t := T.\mathit{tuple};\ T.\mathit{tuple} := V.\mathit{next}();\,$ \RETURN $t$ \\
  \linenumber \IF $T.\mathit{buckets} \neq \emptyset$ \\
  \linenumber \TAB $t := T.\mathit{tuple};\ T.\mathit{tuple} := \textsc{Union}(T.\mathit{buckets});\,$ \RETURN $t$\\
  \linenumber \LET $\{T_1, \ldots, T_k\} = $ children of $T$ \\
  \linenumber \WHILE ($T.\mathit{ctx} \neq $ \EOF) \DO \\  
  \linenumber \TAB \IF $(\mathit{n} := \textsc{Product}(T_1,\ldots,T_k, T.\mathit{ctx})) \neq $ \EOF \quad{\hfilll\text{\color{gray} // next tuple in Cartesian product}}\\
  \linenumber \TAB \TAB $t := T.\mathit{tuple};\ T.\mathit{tuple} := \mathit{n};\,$ \RETURN $t$ \\
  \linenumber \TAB $T.\mathit{ctx} := V.\mathit{next}()$ \qquad{\hfilll\text{\color{gray} // Cartesian product exhausted, next tree context}}\\
  \linenumber \TAB \FOREACH $i \in [k]$ \DO  $T_i.\mathit{close}();\;T_i.\mathit{open}(T.\mathit{ctx})$\\
  \linenumber $t := T.\mathit{tuple};\ T.\mathit{tuple} := \EOF;\,$ \RETURN $t$\\
  \bottomrule
  \end{tabular}
  \end{center}
}
  \caption{Find the next tuple in a view tree.}
  \label{fig:enumeration-next}
\end{figure}

After the first $open$ call for a view tree $T$, we can enumerate the distinct tuples from $T$ with their multiplicity by calling $T.next()$, see Figure~\ref{fig:enumeration-next}. 
The $next$ call propagates recursively down $T$ and observes the same cases as the $open$ call. If a view $V$ in $T$ already covers all free variables in $T$, then it suffices to enumerate from $V$. If $T$ has as child a heavy indicator, we return the next tuple and its multiplicity from the union of all its groundings using the \textsc{Union} algorithm (Section~\ref{sec:union_algorithm}). Otherwise, we synthesize the returning tuple out of the tuples at the iterators of $T$'s children.
{Given the current context at $T$'s view}, we return the next tuple and its multiplicity from the Cartesian product of the tuples produced by $T$'s children using the \textsc{Product} algorithm (Section~\ref{sec:product_algorithm}). 

{
  For a view tree with no heavy indicators, 
  calling $\mathit{open}$ and $\mathit{next}$ on the view tree translates to calling $\mathit{open}$ and $\mathit{next}$ on its views, where each such call on a view takes constant time and the number of such calls is independent of the size of the database. Thus, calling $\mathit{open}$ and $\mathit{next}$ on a view tree with no heavy indicators takes constant time. 

  In the presence of heavy indicators, 
  the time to initialize a view tree and produce the next tuple is dominated by the number of shallow view trees created in the grounding step.
  The delay of the \textsc{Union} algorithm is the sum of the delays of the grounded view trees. 
  Their number is determined by the size of the heavy indicators in the view tree, which is $\bigO{N^{1-\eps}}$. Thus, calling $\mathit{open}$ and $\mathit{next}$ on a view tree with heavy indicators takes $\bigO{N^{1-\eps}}$ time. 
}

So far we discussed the case of enumerating from one view tree. In case of a set of view trees we again use the \textsc{Union} algorithm.
In case the query has several connected components, i.e., it is a Cartesian product of hierarchical queries, we use the \textsc{Product} algorithm with an empty context.

The multiplicity for a tuple returned by the \textsc{Union} algorithm is the sum of the multiplicities of its occurrences across the buckets, while for a tuple returned by the \textsc{Product} algorithm it is the multiplication of the multiplicities of the constituent tuples. Since all tuples in the database have positive multiplicities, the derived multiplicities are always strictly positive and therefore the returned tuple is part of the result.

We next explain the \textsc{Union} and \textsc{Product} algorithms used by $T.next()$ in Figure~\ref{fig:enumeration-next}.

\subsection{The \textsc{Union} Algorithm}
\label{sec:union_algorithm}

\begin{figure}[t]
  \begin{center}
  \renewcommand{\arraystretch}{1.2}
  \renewcommand{\linenumber}{{\rownumber\TAB}}
  \setcounter{magicrownumbers}{0}
  \begin{tabular}{l@{}}
    \toprule
    \textsc{Union}$(\text{view trees } T_1, \ldots, T_n):$ tuple \\
    \midrule
    \linenumber \IF ($n=1$) \RETURN $T_n.next(\,)$ \\
    \linenumber \IF (${(t, m)} := $ \textsc{Union}($T_1,\ldots, T_{n-1}$)) $\neq$ \EOF \\
    \linenumber \TAB \IF $T_n.\mathit{lookup}\hspace{0.06em}({t}) \neq 0$ \\
    \linenumber \TAB\TAB $(t_n, m_n) := T_n.next(\,)$ \\
    \linenumber \TAB\TAB \RETURN ${(t_n, m_n + \sum_{i\in[n-1]} T_i.\mathit{lookup}\hspace{0.06em}(t_n))}$ \\  
    \linenumber \TAB \RETURN ${(t, m)}$ \\
    \linenumber \IF $((t_n, m_n) := T_n.next(\,)) \neq $ \EOF \\
    \linenumber \TAB \RETURN ${(t_n, m_n + \sum_{i\in[n-1]} T_i.\mathit{lookup}\hspace{0.06em}(t_n))}$ \\
    \linenumber \RETURN \EOF \\
    \bottomrule
  \end{tabular}
\end{center}
\caption{Find the next tuple in a union of view trees.}
\label{fig:enumeration-union}
\end{figure}
\begin{figure}[t]
  \begin{center}
  \renewcommand{\arraystretch}{1.15}
  \renewcommand{\linenumber}{\makebox[3ex][r]{\rownumber\TAB}}
  \setcounter{magicrownumbers}{0}
  \begin{tabular}{l@{}}  
  \toprule
  \textsc{Product}$(\text{view trees } T_1,\ldots,T_k, \text{tuple } \mathit{ctx}):$ tuple \\
  \midrule
  \linenumber \WHILE ($T_1.\mathit{tuple} \neq $ \EOF) \DO \\
              \TAB \TAB $\ldots$ \\      
  \linenumber \TAB \TAB \WHILE ($T_{k-1}.\mathit{tuple} \neq $ \EOF) \DO \\
  \linenumber \TAB \TAB \TAB \WHILE ($T_k.\mathit{tuple} \neq $ \EOF) \DO \\
  \linenumber \TAB \TAB \TAB \TAB \LET $(t_i, m_i) = T_i.\mathit{tuple}, \forall i\in[k]$ \\
  \linenumber \TAB \TAB \TAB \TAB \LET $(t_{\mathit{ctx}}, \_) = \mathit{ctx}, \text{ where } t_{\mathit{ctx}} \text{ is over schema } \calS$\\
  \linenumber \TAB \TAB \TAB \TAB $t := t_{\mathit{ctx}} \circ \bigcirc_{i\in[k]} \pi_{\text{free variables in }T_i-\calS\,}t_i$ \\
  \linenumber \TAB \TAB \TAB \TAB $m := \prod_{i\in[k]} m_i$ \\
  \linenumber \TAB \TAB \TAB \TAB $T_k.next()$ \\      
  \linenumber \TAB \TAB \TAB \TAB \RETURN $(t,m)$ \\
  \linenumber \TAB \TAB \TAB $T_k.\mathit{close}()$;\, $T_k.\mathit{open}(\mathit{ctx})$;\, $T_{k-1}.\mathit{next}()$\\
              \TAB \TAB $\ldots$ \\      
  \linenumber \TAB $T_2.\mathit{close}()$;\, $T_2.\mathit{open}(\mathit{ctx})$;\, $T_1.\mathit{next}()$\\
  \linenumber \RETURN \EOF \\
  \bottomrule
  \end{tabular}
  \end{center}
  \caption{Find the next tuple in a product of view trees. In case $k=1$, the innermost loop is executed.}
    \label{fig:enumeration-product}
\end{figure}

The \textsc{Union} algorithm is given in Figure~\ref{fig:enumeration-union}. It
is an adaptation of prior work~\cite{Durand:CSL:11}. 
It takes as input $n$ view trees that represent possibly overlapping sets of tuples over the same relation and returns a tuple and its multiplicity in the union of these sets, where the tuple is distinct from all tuples returned before. 

We first explain the algorithm on two views $T_1$ and $T_2$ that have been already open and with their iterators positioned at the first respective tuples. On each call, we return one tuple together with its multiplicity or $\tup{EOF}$. We check whether the next tuple $t_1$ in $T_1$ is also present in $T_2$. If so, we return the next tuple in $T_2$ and its total multiplicity from $T_1$ and $T_2$; otherwise, we return $t_1$ and its multiplicity in $T_1$. If $T_1$ is exhausted, we return the next tuple in $T_2$ and its total multiplicity from $T_1$ and $T_2$, or $\tup{EOF}$ if $T_2$ is also exhausted.

In case of $n>2$ views, we consider one view defined by the union of the first $n-1$ views and a second view defined by $T_n$, and we then reduce the general case to the previous case of two views. 

The delay of this algorithm is given by the delay of iterating over each view, the cost of lookups into the views, and the cost of computing output multiplicities. The lookup costs are constant when using a hierarchy of materialized views for representing the query result~\cite{OlteanuZ15}.
Given $n$ views, computing an output multiplicity takes $\bigO{n}$ time.
The overall delay is the sum of the delays of the $n$ views, which is $\bigO{n}$. 

In our paper, we employ the \textsc{Union} algorithm in two cases: (1) on the set of view trees obtained after grounding the heavy indicators; and (2) on the set of view trees obtained by using skew-aware indicators in the preprocessing stage. In the first case, the number of the view trees is in $\bigO{N^{1-\eps}}$, since the number of heavy tuples in any heavy indicator view is at most $N^{1-\eps}$. In the second case, the number of view trees does not depend on the database size $N$, but it may depend exponentially on the number of bound join variables in the input hierarchical query.

\subsection{The \textsc{Product} Algorithm}
\label{sec:product_algorithm}
The \textsc{Product} algorithm is given in Figure~\ref{fig:enumeration-product}. It takes as input a set of view trees $T_1,\ldots,T_k$ and a context, which is the current tuple in the parent view, and outputs the next tuple and its multiplicity in the Cartesian product of {the tuples returned by the $k$ view trees given the context. 
By construction, the parent view joins the roots of the $k$ view trees and thus yields only contexts for which each of the view trees produces a non-empty result.}

In case $k=1$, we execute the innermost loop for $T_k$: On a call, we take the current tuple in $T_k$ and project away the variables that are in common with the context tuple, retaining only the free variables in $T_k$. We concatenate this projection with the context tuple. The concatenation operator is $\circ$. Before we return this concatenated tuple and its multiplicity, we advance the iterator to the next tuple-multiplicity pair in $T_k$. Eventually, we reach the end of the iterator for $T_k$, in which case we return $\tup{EOF}$.

In case $k>1$, we hold the current tuple-multiplicity pairs for $T_1,\ldots,T_{k-1}$ and iterate over $T_k$. Whenever $T_k$ reaches  $\tup{EOF}$, we reset it and advance the iterator for $T_{k-1}$. We concatenate the context tuple and the current tuples of all iterators, projected onto the variables that are not in the schema of the context tuple (since those fields are already in the context).  We multiply the current multiplicities of all iterators and advance the iterator for $T_k$ before returning the concatenated tuple and its multiplicity. 

The delay for a \textsc{Product} call is given by the sum of the delays of the $k$ input view trees. 
{In the worst case, the algorithms makes $k-1$ $\mathit{open}$ calls and $k$ $\mathit{next}$ calls before returning the next tuple.}
We use this algorithm in two cases: (1) enumerating from a view with several children in a tree (in which case the context is given as the current tuple in the view); (2) a collection of view trees, one per connected component of the input query (in which case the context is the empty tuple). In both cases, the number of parameters to the \textsc{Product} call is independent of the size of the database and only dependent on the number of atoms and respectively of connected components in the input query. This means that the delay (in data complexity) is the maximum delay of any of its parameter view trees, which is $\bigO{N^{1-\eps}}$.

We next state the complexity of enumeration in our approach.

\begin{prop}
\label{prop:enumeration}
The tuples in the result of a hierarchical query $Q(\calF)$ over a database of size $N$ can be enumerated with $\bigO{N^{1-\eps}}$ delay using the view trees constructed by $\tau(\omega,\calF)$ for a canonical variable order $\omega$ for $Q$.
\end{prop}

\section{Updates}
\label{sec:updates}

We present our strategy for maintaining the views in the set of view trees $\tau(\omega, \mathcal{F})$ constructed for a canonical variable order $\omega$ of a hierarchical query $Q(\mathcal{F})$ under updates to input relations. 
We specify here the procedure for processing a single-tuple update to any input relation.
Processing a sequence of such updates builds upon this procedure and occasional rebalancing steps (Section~\ref{app:rebalancing}). 

We write $\delta{R} = \{ \tup{x} \rightarrow m \}$ to denote a single-tuple update $\delta{R}$ mapping the tuple $\tup{x}$ to the non-zero multiplicity $m \in \mathbb{Z}$ and any other tuple to 0; i.e., $|\delta{R}| = 1$. Inserts and deletes are updates represented as relations in which tuples have positive and negative multiplicities.
We assume that after applying an update to the database, all relations and views contain no tuples with negative multiplicities.

Compared to static evaluation, our strategy for dynamic evaluation may construct additional views to support efficient updates to {\em all} input relations.
In Figure~\ref{fig:view_tree_example_21}, the view tree created for the case of heavy $(A,B)$-values (second row right) has five such additional views, marked with dashed boxes. These views enable an update to any leaf view to be propagated to the root view in constant time. 
For instance, the views $R'$ and $S'$ eliminate the need to iterate over the $D$-values in relation 
$R$ for updates to relation $S$ and $\exists{H}_B$ and respectively over the $E$-values in $S$ for updates to $R$ and $\exists{H}_B$.
Figure~\ref{fig:aux_view_creation} gives the rule for creating such views: If node $Z$ has a sibling in the variable order, then we create an auxiliary view that aggregates away $Z$ to avoid iterating over the $Z$-values for updates coming via the (auxiliary) views constructed for the siblings of $Z$.


\begin{figure}[t]
  \centering
  \setlength{\tabcolsep}{3pt}
  \renewcommand{\arraystretch}{1.1}
  \setcounter{magicrownumbers}{0}
  \begin{tabular}[t]{@{}c@{}c@{}l@{}}
      \toprule
      \multicolumn{3}{l}{\textsc{Apply}(\text{view tree} $T$, \text{update} $\delta R$) : delta view} \\
      \midrule
      \multicolumn{3}{l}{\MATCH $T$:}\\
      \midrule
      \phantom{ab} & $K(\calX)$\hspace*{1.3em} 
      & \linenumber \IF $K = R$ \\
      & & \linenumber \TAB {$R(\calX) := R(\calX) + \delta R(\calX)$}\\
      & & \linenumber \TAB \RETURN $\delta R$\\
      & & \linenumber \RETURN $\emptyset$\\[0pt]
      \cmidrule{2-3} \\[-6pt]
      &    
      \begin{minipage}[t]{2.5cm}
          \vspace{-0.4cm}
          \hspace*{-0.25cm}
          \begin{tikzpicture}[xscale=0.65, yscale=1]
              \node at (0,-2)  (n4) {$V(\calX)$};
              \node at (-1,-3)  (n1) {$T_1$} edge[-] (n4);
              \node at (0,-3)  (n2) {$\ldots$};
              \node at (1,-3)  (n3) {$T_k$} edge[-] (n4);
          \end{tikzpicture}
      \end{minipage}
      &
      \begin{minipage}[t]{12cm}
          \vspace{-0.4cm}
          \linenumber \LET $V_i(\calX_i) = {\text{root of }} T_i, \text{ for } i\in[k]$ \\[0.25ex]
          \linenumber \IF $\exists \, j \in [k]$ such that $R \in T_j$ \\[0.25ex]
          \linenumber \TAB {$\delta V_j := \textsc{Apply}(T_j, \delta R)$}\\[0.25ex]
          \linenumber \TAB {\LET $\delta V(\calX) = \text{join of } V_1(\calX_1), \ldots, \delta V_j(\calX_j), \ldots, V_k(\calX_k)\text{ proj. onto $\calX$}$} \\[0.25ex]
          \linenumber \TAB {$V(\calX) := V(\calX) + \delta V(\calX)$}\\[0.25ex]
          \linenumber \TAB \RETURN $\delta V$\\[0.25ex]
          \linenumber \RETURN $\emptyset$\\[-6pt]
      \end{minipage}\\
      \bottomrule
  \end{tabular}
  \caption{
  Updating views in a view tree $T$ for a single-tuple update $\delta R$ to relation $R$. 
  }
  \label{fig:apply_update_algo}
\end{figure}

\subsection{Processing a Single-Tuple Update}

An update $\delta{R}$ to a relation $R$ may affect multiple view trees in the set of view trees constructed by our algorithm from Figure~\ref{fig:view_forest_main_algo}.\footnote{We focus here on updates to 
hierarchical queries without repeating relation symbols. In case a relation $R$ occurs several times in a query, we treat an update to $R$ as a sequence of updates to each occurrence of $R$.}
We apply $\delta{R}$ to each such view tree in sequence, by propagating changes along the path from the leaf $R$ to the root of the view tree. For each view on this path, we update the view result with the change computed using the standard delta rules~\cite{Chirkova:Views:2012:FTD} (see Example~\ref{ex:intro}). 
To simplify the reasoning about the maintenance task, we assume that each view tree has a copy of its base relations.
We use $\textsc{Apply}(T, \delta{R})$ from Figure~\ref{fig:apply_update_algo} to propagate an update $\delta{R}$ in a view tree $T$; if $T$ does not refer to $R$, the procedure has no effect.

Updates to indicator views, however, may trigger further changes in the views constructed over them.
Consider, for instance, the heavy indicator $H_B(A,B)$ constructed over the view $All_B(A,B)$ and the light indicator $\nexists L_B(A,B)$ in Figure~\ref{fig:view_tree_example_21}. An insert $\delta{R} = \{(a,b,d) \rightarrow 1\}$ into $R$ may change the multiplicity $H_B(a,b)$ from 0 to non-zero, thus changing $\exists H_B(A,B)$ and its dependent views: $V_B(A,B)$, $V'_B(A)$, and $V_A(A)$. But if the multiplicity $H_B(a,b)$ stays 0 or non-zero after applying $\delta{R}$, then $\exists H_B$ also stays unchanged.

\begin{figure}[t]
\centering
\setlength{\tabcolsep}{3pt}
\renewcommand{\arraystretch}{1.1}
\renewcommand{\linenumber}{{\rownumber\TAB}}
\setcounter{magicrownumbers}{0}
\begin{tabular}[t]{@{}l@{}}
    \toprule
    \textsc{UpdateIndTree}(indicator tree $T_{Ind}$, update $\delta{R}):$ indicator change \\
    \midrule
    \linenumber\LET $I(\calS) = {\text{root of }}T_{Ind}$ \\[0.15ex]
    \linenumber\LET $\textit{key} = \tup{x}[\calS], \text{ where } \delta{R} = \{ \tup{x} \rightarrow m \}$ \\[0.15ex]
    \linenumber\LET $\textit{\#before} = I(key)$ \\[0.15ex]
    \linenumber $\textsc{Apply}(T_{Ind}, \delta{R})$\\[0.15ex]
    \linenumber \IF $(\hspace{0.2mm}\textit{\#before} = 0\hspace{0.2mm}) \hspace{0.2mm}\land\hspace{0.2mm} (\hspace{0.2mm}I(key) > 0\hspace{0.2mm})$\, \RETURN $\{ key \rightarrow 1 \}$\\[0.15ex]
    \linenumber\IF $(\hspace{0.2mm}\textit{\#before} > 0\hspace{0.2mm}) \hspace{0.2mm}\land\hspace{0.2mm} (\hspace{0.2mm}I(key) = 0\hspace{0.2mm})$\, \RETURN $\{ key \rightarrow -1 \}$\\[0.15ex]
    \linenumber\RETURN $\emptyset$\\[0.15ex]
    \bottomrule
\end{tabular}
\caption{Updating an indicator view tree $T_{Ind}$ for a single-tuple update $\delta{R}$ to relation $R$.}
\label{fig:update_indicator_tree}
\end{figure}

\begin{figure}[t]
\centering
\setlength{\tabcolsep}{3pt}
\renewcommand{\linenumber}{\makebox[3ex][r]{\rownumber\TAB}}
\renewcommand{\arraystretch}{1.1}
\setcounter{magicrownumbers}{0}
\begin{tabular}[t]{@{}l@{}}
    \toprule
    \textsc{UpdateTrees}(view trees $\mathcal{T}$, indicator triples $\mathcal{T}_{Ind}$, update $\delta{R}$) \\
    \midrule
    \linenumber\FOREACH $T \in \mathcal{T}$ \DO $\textsc{Apply}(T, \delta{R})$ \\[0.2ex]
    \linenumber\FOREACH $(T_{All}, T_L, T_H) \in \mathcal{T}_{Ind}$ such that $R \in T_{All}$ \DO \\[0.2ex]
    \linenumber\TAB\LET $All(\calS) = {\text{root of }}T_{All},\; L(\calS) = {\text{root of }}T_{L},\; H(\calS) = {\text{root of }}T_{H}$ \\[0.2ex]
    \linenumber\TAB\LET $\textit{key} = \tup{x}[\calS], \text{ where } \delta{R} = \{ \tup{x} \rightarrow m \}$ \\[0.2ex]
    \linenumber\TAB \LET $\textit{\#before} = All(key)$  \\[0.2ex]
    \linenumber\TAB $\textsc{Apply}(T_{All}, \delta{R})$ \\[0.2ex]
    \linenumber\TAB\LET $\textit{\#change} = All(key) - \textit{\#before}$ \\[0.2ex]
    \linenumber\TAB\LET $\delta(\exists{H}) = \textsc{UpdateIndTree}(T_{H}, \delta{All} = \{ key \rightarrow \textit{\#change} \,\})$ \\[0.2ex]    
    \linenumber\TAB\FOREACH $T \in \mathcal{T}$ \DO $\textsc{Apply}(T, \delta(\exists{H}))$\\[0.2ex]
    \linenumber\TAB\IF $(\textit{key} \notin \pi_{\calS}R) \,\lor\, (\textit{key} \in \pi_{\calS}R^{\calS})$ \\[0.2ex]    
    \linenumber\TAB\TAB\FOREACH $T \in \mathcal{T}$ \DO $\textsc{Apply}(T, \delta{R}^{\calS} = \delta{R})$\\[0.2ex]
    \linenumber\TAB\TAB\LET $\delta(\exists L) = \textsc{UpdateIndTree}(T_{L}, \delta{R}^{\calS} = \delta{R})$ \\[0.2ex]
    \linenumber\TAB\TAB \LET $\delta(\exists{H}) = \textsc{UpdateIndTree}(T_{H}, \delta(\nexists{L}) = -\delta(\exists{L}))$ \\[0.2ex]
    \linenumber\TAB\TAB\FOREACH $T \in \mathcal{T}$ \DO $\textsc{Apply}(T, \delta(\exists{H}))$\\[0.2ex]    
    \bottomrule
\end{tabular}
\caption{Updating a set $\calT$ of view trees and a set $\mathcal{T}_{Ind}$ of triples of indicator trees  for a single-tuple update $\delta{R}$ to relation $R$. 
}
\label{fig:update_view_trees}
\end{figure}

Figure~\ref{fig:update_indicator_tree} shows the function \textsc{UpdateIndTree} that applies an update $\delta{R}$ to an indicator tree $T_{Ind}$ with a root view $I(\calS)$. The function returns the change $\delta(\exists{I})$ in the support of the indicator view $I$, to be further propagated to other views.
The free variables $\calS$ of $I$ appear in each input relation from $T_{Ind}$, and $\delta{R}$ fixes their values to constants; thus, $|\delta(\exists{I})| \leq 1$.

Figure~\ref{fig:update_view_trees} gives our algorithm for maintaining a set of view trees $\mathcal{T}$ and a set of indicator tress $T_{Ind}$ under an update $\delta{R}$. 
We first apply $\delta{R}$ to the view trees from $\mathcal{T}$ (Line 1). Then, we consider the triples $(T_{All}, T_L, T_H)$ of indicator trees from $\mathcal{T}_{Ind}$ that are affected by $\delta{R}$. 
We maintain the heavy indicator tree $T_{H}$ with the root $H(\calS) = All(\calS),\nexists{L(\calS)}$ for changes in both $All$ and $\nexists{L}$.
We apply $\delta{R}$ to $T_{All}$ (Line 6) and subsequently $\delta{All}$ to $T_H$ (Line 8).
The latter may trigger a change $\delta(\exists H)$ in the support of $H$, which we apply to the view trees from $\mathcal{T}$ (Line 9).
If the update $\delta{R}$ belongs to the light part $R^{\calS}$ (Line 10), we apply $\delta{R}^{\calS}$ to the view trees from $\mathcal{T}$ and to the light indicator tree $T_L$ (Lines 11-12). We then propagate the opposite change $\delta(\nexists{L})$ in the support of the root $L$ of $T_L$, if any, to $T_H$ and further to the view trees from $\mathcal{T}$ (Lines 13-14).

\begin{exa}
\rm
We analyze the time needed to maintain the views from Figure~\ref{fig:view_tree_example_21} under a single-tuple update to any input relation. 
For the view tree constructed for the case of heavy $(A,B)$-values (second row right),
propagating an update from any relation to the root view takes constant time. 
For instance, an update $\delta{R}$ to $R$ changes 
the view $R'(A,B)$ with $\delta{R'}(a,b)=\delta{R}(a,b,d)$;
the view $V_B(A,B)$ with $\delta{V_B}(a,b) = \exists{H_B}(a,b),\delta{R'}(a,b),S'(a,b)$; 
changes to the views $V'_B(A)$ and $V_A(A)$ are similar.
The auxiliary views $S'(A,B)$ and $V'_C(A)$ enable the constant-time updates in this case by
aggregating away the $E$-values in $S(A,B,E)$ and the $C$-values in $V_C(A,B)$.

Consider now the view tree defined over the light parts of input relations (bottom-left).
The update $\delta{R}=\{(a,b,d)\rightarrow m\}$ affects the light part $R^A$ of $R$ when $(a,b,d)\notin R$ or $a\in \pi_A R^A$. If so, computing $\delta V_B(a,d,E) = \delta{R}^A(a,b,d),S^A(a,b,E)$ takes $\bigO{N^\eps}$ time since $a$ is light in $S^A$. The size of $\delta V_B$ is also $\bigO{N^\eps}$. 
Computing $\delta V_A$ at the root 
requires pairing each $E$-value from $\delta V_B$ with the $(C,F)$-values in $V_C$ for the given $a$. 
Since $a$ is light in $T^A$, the number of such $(C,F)$-values in $T^A$ is $\bigO{N^\eps}$.
Thus, computing $\delta V_A$ takes $\bigO{N^{2\eps}}$ time.
A similar analysis shows that updates to $S^A$ and $T^A$ also take $\bigO{N^{2\eps}}$ time, while updates to $U^A$ take $\bigO{N^{3\eps}}$ time.

For the view tree constructed for the case of heavy $A$-values (bottom-middle), updates to $R^{AB}$ and $S^{AB}$ take $\bigO{N^\eps}$ time, while updates to $T$ and $U$ take constant time.
The indicator view trees (top and middle row) encode the results of $\dfw_0$-hierarchical queries,
thus maintaining their views takes constant time per update.

The indicator views $\exists H_A(A)$ and $\exists H_B(A,B)$ may change under updates to any relation and respectively under updates to $R$ and $S$. For instance, the update $\delta R$ can trigger a new single-tuple change in $\exists H_A$ when the multiplicity $H_A(a)$ increases from 0 to non-zero or vice versa. 
Applying this change $\delta(\exists H_A)$ to the view trees containing $\exists H_A$ takes constant time; the same holds for propagating a change $\delta(\exists H_B)$ to the view trees containing $\exists H_B$.

In conclusion, maintaining the views from Figure~\ref{fig:view_tree_example_21} under a single-tuple update to any relation takes $\bigO{N^{3\eps}}$ overall time.
\qed
\end{exa}


We next state the complexity of updates in our approach.

\begin{prop}
\label{prop:update_time}
Given a hierarchical query $Q(\calF)$ with dynamic width $\dfw$, a canonical variable order $\omega$ for $Q$,
a database of size $N$, and $\eps \in [0,1]$, 
maintaining the views in the set of view trees $\tau(\omega,\calF)$ under a single-tuple update to any input relation takes $\bigO{N^{\dfw\eps}}$ time.
\end{prop}

\subsection{Rebalancing Partitions}
\label{app:rebalancing}
As the database evolves under updates, we periodically rebalance the relation partitions and views to account for a new database size and updated degrees of data values. The cost of rebalancing is amortized over a sequence of updates. 

\paragraph{Major Rebalancing.}
We loosen the partition threshold to amortize the cost of rebalancing over multiple updates. Instead of the actual database size $N$, the threshold now depends on a number $M$ for which the invariant $\floor{\frac{1}{4}M} \leq N < M$ always holds.
If the database size falls below $\lfloor \frac{1}{4} M \rfloor$ or reaches $M$, we perform {\em major rebalancing}, where we halve or respectively double $M$, followed by strictly repartitioning the light parts of input relations with the new threshold $M^{\eps}$ and recomputing the views. Figure~\ref{fig:major_rebalancing} shows the major rebalancing procedure.

\begin{figure*}[t]
\centering
\setlength{\tabcolsep}{3pt}
\renewcommand{\linenumber}{{\rownumber\TAB}}
{\setcounter{magicrownumbers}{0}
\begin{tabular}[t]{l}
    \toprule
    \textsc{MajorRebalancing}(view trees $\mathcal{T}$, indicator triples $\mathcal{T}_{Ind}$, threshold $\theta$) \\
    \midrule
    \linenumber\FOREACH $(T_{All}, T_L, T_H) \in \mathcal{T}_{Ind}$ \DO\\[0.2ex]
    \linenumber\TAB\FOREACH $R^{\mathcal{F}} \in T_L, R \in T_{All}$  \DO\\[0.2ex]
    \linenumber\TAB\TAB $R^{\mathcal{F}} = \{ \tup{x} \rightarrow R(\tup{x}) \mid \tup{x} \in R,\, key = \tup{x}[\mathcal{F}],\, |\sigma_{\mathcal{F}=key}R| < \theta \}$\\[0.2ex]
    \linenumber\TAB $\textsc{Recompute}(T_{L}),\; \textsc{Recompute}(T_{H})$ \\[0.2ex]
    \linenumber\FOREACH $T \in \mathcal{T}$ \DO $\textsc{Recompute}(T)$ \\[0.2ex]
    \bottomrule
\end{tabular}}
\caption{Recomputing the light parts of base relations and affected views.}
\label{fig:major_rebalancing}
\end{figure*}

\begin{prop}
\label{prop:major_time}
Given a hierarchical query $Q(\calF)$ with static width $\fw$, a canonical variable order 
$\omega$ for $Q$,
a database of size $N$, and $\eps \in [0,1]$, 
major rebalancing of the views in the set of view trees $\tau(\omega,\calF)$ takes $\bigO{N^{1+(\fw-1)\eps}}$ time.
\end{prop}

The cost of major rebalancing is amortized over $\Omega{(M)}$ updates. After a major rebalancing step, it holds that $N = \frac{1}{2}M$ (after doubling), or $N = \frac{1}{2}M - \frac{1}{2}$ or $N = \frac{1}{2}M - 1$ (after halving).
To violate the size invariant $\floor{\frac{1}{4}M} \leq N < M$ and trigger another major rebalancing, the number of required updates is at least $\frac{1}{4}M$.
In the extended technical report,  we prove the amortized $\bigO{N^{(\fw-1)\eps}}$ time of major rebalancing
\cite{Trade_Offs_arxiv}.
By Proposition \ref{prop:width_delta_inequal}, we have $\dfw = \fw$ or $\dfw = \fw -1$; hence, the amortized major rebalancing time is $\bigO{M^{\delta\eps}}$.

\paragraph{Minor Rebalancing.}
After an update $\delta{R} = \{ \tup{x} \rightarrow m\}$ to relation $R$, we check the light part and heavy part conditions of each partition of $R$.
Consider the light part $R^{\calS}$ of $R$ partitioned on a schema $\calS$.
If the number of tuples in $R^{\calS}$ that agree with $\tup{x}$ on $\calS$ exceeds 
$\frac{3}{2}M^\eps$, then we delete those tuples from $R^{\calS}$.
If the number of tuples that agree with $\tup{x}$ on $\calS$ in $R^{\calS}$ is zero and in $R$ is below $\frac{1}{2}M^\eps$, then we insert those tuples into $R^{\calS}$. 
Figure~\ref{fig:minor_rebalancing} shows this {\em minor rebalancing} procedure.

\begin{figure*}[t]
\centering
\setlength{\tabcolsep}{3pt}
\renewcommand{\linenumber}{{\rownumber\TAB}}
{\setcounter{magicrownumbers}{0}
\begin{tabular}[t]{l}
    \toprule
    \textsc{MinorRebalancing}(trees $\mathcal{T}$, tree $T_L$, tree $T_H$, source $R$, $key$, \textit{insert}) \\
    \midrule
    \linenumber \LET $L(\mathcal{F}) = {\text{root of }}T_{L},\; H(\mathcal{F}) = {\text{root of }}T_{H}$ \\[0.2ex]    
    \linenumber \FOREACH $\tup{x} \in \sigma_{\mathcal{F}=key}R$ \DO\\[0.2ex]
    \linenumber \TAB \LET $cnt = $ \IF $(\textit{insert\,})\ R(\tup{x})$ \ELSE $-R(\tup{x})$\\[0.2ex]
    \linenumber \TAB\FOREACH $T \in \mathcal{T}$ \DO $\textsc{Apply}(T, \delta{R^{\mathcal{F}}} = \{ \tup{x} \rightarrow cnt \})$\\[0.2ex]    
    \linenumber \TAB \LET $\delta(\exists{L}) = \textsc{UpdateIndTree}(T_L, \delta{R^{\mathcal{F}}} = \{ \tup{x} \rightarrow cnt \})$ \\[0.2ex]
    \linenumber \TAB \LET $\delta(\exists{H}) = \textsc{UpdateIndTree}(T_H, \delta(\nexists{L}) = -\delta(\exists{L}))$ \\[0.2ex]
    \linenumber \TAB\FOREACH $T \in \mathcal{T}$ \DO $\textsc{Apply}(T, \delta(\exists{H}))$\\[0.2ex]
    \bottomrule
\end{tabular}}
\caption{Deleting heavy tuples from or inserting light tuples into the light part of relation $R$.}
\label{fig:minor_rebalancing}
\end{figure*}

\begin{prop}
\label{prop:minor_time}
Given a hierarchical query $Q(\calF)$ with dynamic width $\dfw$, a canonical variable order 
$\omega$ for $Q$,
a database of size $N$, and $\eps \in [0,1]$, 
minor rebalancing of the views in the set of view trees $\tau(\omega,\calF)$ takes $\bigO{N^{(\dfw+1)\eps}}$ time.
\end{prop}

The cost of minor rebalancing is amortized over $\Omega(M^{\eps})$ updates. 
This lower bound on the number of updates is due to the gap between the two thresholds in the heavy and light part conditions. The extended technical report 
proves the amortized $\bigO{N^{\dfw\eps}}$ time of minor rebalancing \cite{Trade_Offs_arxiv}.

\begin{figure*}[t]
\centering
\setlength{\tabcolsep}{3pt}
\renewcommand{\linenumber}{\makebox[3ex][r]{\rownumber\TAB}}
\setcounter{magicrownumbers}{0}
\begin{tabular}[t]{l}
    \toprule
    \textsc{OnUpdate}(view trees $\mathcal{T}$, indicator triples $\mathcal{T}_{Ind}$, update $\delta{R}$) \\
    \midrule
    \linenumber$\textsc{UpdateTrees}(\mathcal{T}, \mathcal{T}_{Ind}, \delta{R})$ \\[0.2ex]
    \linenumber\IF $(N = M)$\\[0.2ex]
    \linenumber\TAB $M = 2M$\\[0.2ex]
    \linenumber\TAB $\textsc{MajorRebalancing}(\mathcal{T}, \mathcal{T}_{Ind}, M^\eps)$\\[0.2ex]
    \linenumber\ELSE \IF $(N < \left\lfloor \frac{1}{4}M \right\rfloor)$\\[0.2ex]
    \linenumber\TAB $M = \left\lfloor \frac{1}{2}M \right\rfloor - 1$\\[0.2ex]
    \linenumber\TAB $\textsc{MajorRebalancing}(\mathcal{T}, \mathcal{T}_{Ind}, M^\eps)$\\[0.2ex]
    \linenumber\ELSE \\[0.2ex]

    \linenumber\TAB\FOREACH $(T_{All}, T_{L}, T_{H}) \in \mathcal{T}_{Ind}$ such that $R \in T_{All}$ \DO \\[0.2ex]
    \linenumber\TAB\TAB\LET $R^{\mathcal{F}} \in T_L \text{ be light part of $R$ partitioned on $\mathcal{F}$ }$\\[0.2ex]
    \linenumber\TAB\TAB\LET $\textit{key} = \tup{x}[\mathcal{F}], \text{ where } \delta{R} = \{ \tup{x} \rightarrow m \}$ \\[0.2ex]
    \linenumber\TAB\TAB \IF $(\,|\sigma_{\mathcal{F}=key}R^{\mathcal{F}}| = 0 \,\land\, \,|\sigma_{\mathcal{F}=key}R| < \frac{1}{2}M^\eps)$\\[0.2ex]
    \linenumber \TAB\TAB\TAB $\textsc{MinorRebalancing}(\mathcal{T}, T_L, T_H, R, key, \textsf{true})$\\[0.2ex]
    \linenumber\TAB\TAB\ELSE\IF $(\,|\sigma_{\mathcal{F}=key}R^{\mathcal{F}}| \geq \frac{3}{2}M^\eps\,)$\\[0.2ex]
    \linenumber \TAB\TAB\TAB $\textsc{MinorRebalancing}(\mathcal{T}, T_L, T_H, R, key, \textsf{false})$\\[0.2ex]
    \bottomrule
\end{tabular}
\caption{
Updating a set of view trees $\mathcal{T}$ and a set of triplets of indicator view trees $\mathcal{T}_{Ind}$ under a sequence of single-tuple updates to base relations.}
\label{fig:on_update}
\end{figure*}

\medskip
Figure~\ref{fig:on_update} gives the trigger procedure \textsc{OnUpdate} that maintains a set of view trees $\mathcal{T}$ and a set of indicator trees $\mathcal{T}_{Ind}$ under a sequence of single-tuple updates to input relations. We first apply an update $\delta{R}$ to the view trees from $\mathcal{T}$ and indicator trees from $\mathcal{T}_{Ind}$ using \textsc{UpdateTrees} from Figure~\ref{fig:update_view_trees}. If this update leads to a violation of the size invariant $\floor{\frac{1}{4}M} \leq N < M$, we invoke \textsc{MajorRebalancing} to recompute the light parts of the input relations and affected views. 
Otherwise, for each triple of indicator trees from $\mathcal{T}_{Ind}$ with the light part $R^{\mathcal{F}}$ partitioned on $\mathcal{F}$, 
we check if the heavy or light condition is violated;
if so, we invoke \textsc{MinorRebalancing} to move the $R$-tuples having the $\mathcal{F}$-values of the update $\delta{R}$ either into or from the light part $R^{\mathcal{F}}$ of relation $R$.

We state the amortized maintenance time of our approach under a sequence of single-tuple updates.

\begin{prop}
\label{prop:amortized_update_time}
Given a hierarchical query $Q(\calF)$ with dynamic width $\dfw$, a canonical variable order $\omega$ for $Q$, a database of size $N$, and $\eps \in [0,1]$, maintaining the views in the set of view trees $\tau(\omega,\calF)$ under a sequence of single-tuple updates
 takes $\bigO{N^{\dfw\eps}}$ amortized time per single-tuple update.
\end{prop}

The proof of Proposition~\ref{prop:amortized_update_time} is based on prior work (Section 4.1 in~\cite{Kara:ICDT:19}).
The adapted proof is given in the technical report (Section F.4 in \cite{Trade_Offs_arxiv}).

\section{Matching Lower Bound for $\dfw_1$-Hierarchical Queries}
\label{sec:lower_bound}
Corollary~\ref{theo:delta_hierarchical_queries} says that, 
given a database of size $N$ and $\eps \in [0,1]$,
any $\dfw_i$-hierarchical query with $i \in {\mathbb{N}}$ 
can be evaluated with $\bigO{N^{i\eps}}$ amortized update time and $\bigO{N^{1-\eps}}$ enumeration delay.  
For $\dfw_1$-hierarchical queries, this upper bound 
 is matched by a lower bound conditioned on the Online Matrix-Vector Multiplication Conjecture~\cite{Henzinger:OMv:2015}.
The following proposition extends the lower bound result from prior work~\cite{BerkholzKS18} to amortized update time. The adapted proof can be found in the technical report~(Proposition 10 in \cite{Trade_Offs_arxiv}).

\begin{prop}
\label{prop:lower_bound}
Given a $\dfw_1$-hierarchical query without repeating relation symbols, $\gamma >0$, 
and a database of size $N$, there is no algorithm that maintains the query with arbitrary preprocessing  time, $\bigO{N^{\frac{1}{2} - \gamma}}$ amortized update time, and $\bigO{N^{\frac{1}{2} - \gamma}}$ enumeration delay, unless the {Online Matrix-Vector Multiplication}
conjecture fails.
\end{prop}

\begin{wrapfigure}{l}{0.365\textwidth}
  \vspace{-0.36cm} 
   \begin{center}
   \tdplotsetmaincoords{73}{170}
  
\scalebox{0.8}{
    \begin{tikzpicture}[xscale=1.6, yscale=1.2,tdplot_main_coords]
  
      \coordinate (O) at (0,0,0);


      \draw[thick,dotted,color=gray] (0,0,0) -- (0.5,0,0);
      \draw[thick,->] (0.5,0,0) -- (1.6,0,0) node[anchor=north]{\small $\log_{N}${delay}};
      \draw[thick,dotted,color=gray] (0,0,0) -- (0,4.5,0);
      \draw[thick,->] (0,4.5,0) -- (0,5,0) node[anchor=north]{\small $\log_{N}${preprocessing time}};
      \draw[thick,dotted,color=gray] (0,0,0) -- (0,0,0.5);
      \draw[thick,->] (0,0,0.5) -- (0,0,1.3) node[anchor=south]{\small $\log_{N}${update time}};

      \node[color=black] at (-0.1,-0.1,0) () {\small $0$};
      \coordinate (P1) at (1,1,0);
      \node at (1,1,-0.2) () {\small $(1,0,1)$};
      \coordinate (P1x) at (1,0,0);
      \node at (0.95,-0.6,0.0) () {\small $1$};

    \node[] at (-0.5,1,0.8) () {\small $(2,1,0)$};
    
      \coordinate (P1y) at (0,1,0);

      \coordinate (P2) at (0,2,1);
      \coordinate (P2xy) at (0,2,0);
      \coordinate (P2y) at (0,2,0);
      
      \coordinate (P2z) at (0,0,1);

      \coordinate (P3) at (0,2,2);
      \coordinate (P3z) at (0,0,2);

      \draw[dotted, color=black, line width = 0.7pt] (P1x) -- (P1);
      \draw[dotted, color=black, line width = 0.7pt] (P1y) -- (P1);

      \node[color=black] at (-0.1,0.5,0) () {\footnotesize $1$};
      \node[color=black] at (-0.18,1.5,-0.1) () {\footnotesize $2$};

      \node[] at (0.25,0,1.05) () {\footnotesize $\dfw = 1$};

      \draw[dotted, color=black,line width = 0.7pt] (P2) -- (P2y);
      \draw[dotted, color=black,line width = 0.7pt] (P2) -- (P2z);

      \coordinate (X_half) at (0.5,0,0);
      \coordinate (Z_half) at (0,0,0.5);
      \coordinate (XZ_half) at (0.5,0,0.5);
      \fill[gray!60, opacity=0.4] (XZ_half) -- (Z_half) -- (0,4.5,0.5) -- (0.5,4.5,0.5) -- cycle;
      \fill[gray!60, opacity=0.4] (XZ_half) -- (X_half) -- (0.5,4.5,0) -- (0.5,4.5,0.5) -- cycle;
      \fill[gray!60, opacity=0.4] (0.5,4.5,0.5) -- (0.5,4.5,0) -- (0,4.5,0) -- (0,4.5,0.5) -- cycle;
      \draw[color=gray] (0.5,0,0) -- (0.5,4.5,0);
      \draw[color=gray] (0.5,0,0) -- (0.5,0,0.5);
      \draw[color=gray] (0.5,0,0.5) -- (0,0,0.5);
      \draw[color=gray] (0.5,0,0.5) -- (0.5,4.5,0.5);
      \draw[color=gray] (0.5,4.5,0.5) -- (0.5,4.5,0);
      \draw[color=gray] (0.5,4.5,0.5) -- (0,4.5,0.5);
      \draw[color=gray] (0.5,4.5,0) -- (0,4.5,0);
      \draw[color=gray] (0,4.5,0.5) -- (0,0,0.5);
      \draw[color=gray] (0,4.5,0) -- (0,4.5,0.5);

      \node at (0.5,-0.7,0) () {\small$\frac{1}{2}$};
      \node at (0.05,-0.25,0.6) () {\small$\frac{1}{2}$};
        
        \draw[dotted, color=gray, line width = 0.7pt] (0.5,1.5,0.5) -- (0.5,1.5,0);
        \draw[dotted, color=gray,  line width = 0.7pt] (0.5,1.5,0) -- (0.0,1.5,0);
        
        \draw[color=blue, thick] (P1) -- (P2);
        \draw[->, thick, color=black, dotted] (0.8,1.0,-0.5) -- (0.51,1.55,0.45);
        \node at (0.8,1.0,-0.75) () {\color{blue}\small $(\frac{3}{2},\frac{1}{2},\frac{1}{2})$};
        
        \node at (01.1,1.1,-1.05) () {\color{blue} \small weakly Pareto optimal};
        
    \end{tikzpicture}
    }
       \end{center}
  \label{fig:optimality}
  \vspace*{-1em}
\end{wrapfigure}
The blue line connecting the points $(1,0,1)$ and $(2,1,0)$ in the left figure 
visualizes the trade-offs 
of our approach for $\dfw_1$-hierarchical queries. 
The gray cuboid is infinite in the dimension of preprocessing time.
Each point strictly included in the gray cuboid corresponds  
to a combination of some preprocessing time and
$\bigO{N^{\frac{1}{2}-\gamma}}$ amortized update time
and delay for $\gamma >0$. 
Following Proposition~\ref{prop:lower_bound}, this is not attainable, 
unless the {Online Matrix-Vector Multiplication} conjecture fails.
Each point on the surface of the cuboid corresponds to Pareto 
worst-case optimality in the update-delay trade-off space. 
For $\eps = \frac{1}{2}$,
our approach needs $\bigO{N^{\frac{1}{2}}}$
amortized update time and delay, which is 
weakly Pareto worst-case optimal: there can be no 
tighter upper bounds for {\em both} the update time
and delay. Since $\fw\in\{1,2\}$ for $\dfw_1$-hierarchical queries, 
the preprocessing time is $O(N^{\frac{3}{2}})$.

\section{{Examples Showcasing Our Approach}}
\label{sec:examples}
We exemplify our approach for the static and dynamic evaluation of two $\dfw_1$-hierarchical queries.
We start with the query from Example~\ref{ex:static_overall_time}.

\begin{exa}\label{ex:intro}
\rm
Consider the $\dfw_1$-hierarchical and non-free-connex query $Q(A,C)$ $=$ $R(A,B),$ $S(B,C)$
from Example~\ref{ex:static_overall_time} whose relations have size at most $N$. 
We partition $R$ and $S$ on $B$: A $B$-value $b$ is {\em light} in $R$ if $|\{ a \mid (a,b)\in R\}| \leq N^\eps$ and {\em heavy} otherwise (similar for $S$). Since each heavy $B$-value is paired with at least $N^\eps$ $A$-values in $R$, there are at most $N^{1-\eps}$ heavy $B$-values. 
There are four cases to consider: $B$ is either light or heavy in each of $R$ and $S$. We can reduce them to two cases: either $B$ is light in both relations, or $B$ is heavy in at least one of them. We keep the light/heavy information in two indicator views: $L_B(B) = R^B(A,B),S^B(B,C)$, where $R^B$ and $S^B$ are the light parts of $R$ and respectively $S$; and $H_B(B) = All_B(B),\nexists L_B(B)$, where $All_B(B) = R(A,B),S(B,C)$.
The $\exists$ operator before indicators denotes their use with set semantics, i.e.,  
the tuple multiplicities are $0$ or $1$. The $\nexists$ operator flips the multiplicity.

Figure~\ref{fig:intro-example} gives the evaluation and maintenance strategies for our query. A strategy is depicted by a view tree, with one view per node such that the head of the view is depicted at the node and its body is the join of its children.

To support light/heavy partitions, we need to keep the degree information of the $B$-values in the two relations. The light/heavy indicators can be computed in linear time, e.g., for $L_B$ we start with the light parts of $R$ and $S$, aggregate away $A$ and respectively $C$ and then join them on $B$.

If $B$ is light, we compute the view $V_B(A,C)$ in time $\bigO{N^{1+\eps}}$: We iterate over $S^B$ and for each of its tuples $(b,c)$, we  fetch the $A$-values in $R^B$ paired with $b$ in $R$. The iteration over $S^B$ takes linear time and for each $b$ there are at most $N^\eps$ $A$-values in $R$. The view $V_B(A,C)$ is a subset of $Q$'s result. 

\begin{figure}[t]
  \centering
  \begin{minipage}[b]{0.32\linewidth}
    \small
    \centering
    \begin{tikzpicture}[xscale=1.4, yscale=1]
      \node at (0.0, 0.0) (All_{B}) {{\color{black} $All_B({ B})$}};
      \node at (-0.6, -0.8) (V_{B}) {{\color{black} $All_{A}({ B})$}} edge[-] (All_{B});
      \node at (0.6, -0.8) (V_{C}) {{\color{black} $All_{C}({ B})$}} edge[-] (All_{B});
      \node at (-0.6, -1.6) (R) {{\color{black} $R(\underline{A},B)$}} edge[-] (V_{B});
      \node at (0.6, -1.6) (S) {{\color{black} $S(B,\underline{C})$}} edge[-] (V_{C});
    \end{tikzpicture}
  \end{minipage}
  \begin{minipage}[b]{0.32\linewidth}
    \small
    \centering
    \begin{tikzpicture}[xscale=1.4, yscale=1]
      \node at (0.0, 0.0) (L_{B}) {{\color{black} $L_{B}({ B})$}};
      \node at (-0.6, -0.8) (V_{B}) {{\color{black} $L_{A}({ B})$}} edge[-] (L_{B});
      \node at (0.6, -0.8) (V_{C}) {{\color{black} $L_{C}({ B})$}} edge[-] (L_{B});
      \node at (-0.6, -1.6) (R_l) {{\color{black} $R^B(\underline{A},B)$}} edge[-] (V_{B});
      \node at (0.6, -1.6) (S_l) {{\color{black} $S^B(B,\underline{C})$}} edge[-] (V_{C});
    \end{tikzpicture}
  \end{minipage}
  \begin{minipage}[b]{0.32\linewidth}
    \small
    \centering
    \begin{tikzpicture}[xscale=1.4, yscale=1]
      \node at (0.0, 0.0) (H_{B}) {{\color{black} $H_{B}({ B})$}};
      \node at (-0.6, -1.6) (All_{B}) {{\color{black} $All_B({ B})$}} edge[-] (H_{B});
      \node at (0.6, -1.6) (L_{B}) {{\color{black} $\nexists L_{B}({ B})$}} edge[-] (H_{B});
    \end{tikzpicture}
  \end{minipage}

  \medskip


  \begin{minipage}[b]{0.49\linewidth}
    \centering
    \begin{tikzpicture}[xscale=1, yscale=1]
      \node at (0.0, 0.0) (V_{B}) {{\color{black} $V_{B}({ \underline{A}, \underline{C} })$}};
      \node at (-1.2, -1.8) (R) {{\color{black} $\color{black} R^B(\underline{A}, B)$}} edge[-] (V_{B});
      \node at (1.2, -1.8) (S) {{\color{black} $\color{black} S^B(B, \underline{C})$}} edge[-] (V_{B});
    \end{tikzpicture}
  \end{minipage}
  \begin{minipage}[b]{0.49\linewidth}
    \small
    \centering
    \begin{tikzpicture}[xscale=1, yscale=1]
      \node at (0.0, 0.0) (V_{B}) {{\color{black} $V_{B}({ {\color{black} B}})$}};
      \node at (-1.3, -1.0) (H_{B}) {{\color{black} $\exists H_{B}({ B})$}} edge[-] (V_{B});
      \draw[black, dashed] (-0.55,-0.75) rectangle (0.55,-1.25);
      \node at (0.0, -1.0) (V_{A}) {{\color{black} $R'({ B})$}} edge[-] (V_{B});
      \node at (0.0, -1.8) (R) {{\color{black} $R(\underline{A}, B)$}} edge[-] (V_{A});
      \draw[black, dashed] (0.75,-0.75) rectangle (1.85,-1.25);
      \node at (1.3, -1.0) (V_{C}) {{\color{black} $S'({ B})$}} edge[-] (V_{B});
      \node at (1.3, -1.8) (S) {{\color{black} $S(B, \underline{C})$}} edge[-] (V_{C});
    \end{tikzpicture}
  \end{minipage}
  \caption{The view trees for $Q(A,C) = R(A,B), S(B,C)$ in Example~\ref{ex:intro}. The dashed boxes enclose views that are only needed in the dynamic case.}
  \label{fig:intro-example}
\end{figure}

If $B$ is heavy, we construct the view $V_B(B)$ with up to $N^{1-\eps}$ heavy $B$-values. For each such value $b$, we can enumerate the distinct tuples $(a,c)$ such that $R(a,b)$ and $S(b,c)$ hold. Distinct $B$-values may, however, have the same tuple $(a,c)$. Therefore, if we were to enumerate such tuples for one $B$-value after those for another $B$-value, the same tuple $(a,c)$ may be output several times, which violates the enumeration constraint.
To address this challenge, we use the union algorithm~\cite{Durand:CSL:11}. We use the $N^{1-\eps}$ buckets of $(a,c)$ tuples, one for each heavy $B$-value, and an extra bucket $V_B(A,C)$ constructed in the light case. From each bucket of a $B$-value, we can enumerate the distinct $(a,c)$ tuples with constant delay by looking up into $R$ and $S$. The tuples in the materialized view $V_B(A,C)$ can be enumerated with constant delay. We then use the union algorithm to enumerate the distinct $(a,c)$ tuples with delay given by the sum of the delays of the buckets. For each such tuple, we sum up the positive multiplicities of its occurrences in the buckets. This yields an overall $\bigO{N^{1-\eps}}$ delay for the enumeration of the distinct tuples in the result of $Q$.

We now turn to the dynamic case. The preprocessing time and delay remain the same as in the static case, while each single-tuple update can be processed in $\bigO{N^\eps}$ amortized time.
To support updates, we need to maintain tuple multiplicities in addition to the degree information of the $B$-values in the two relations. The multiplicity of a result tuple is the sum of the multiplicities of its duplicates across the  $\bigO{N^{1-\eps}}$ buckets.
We also need two views to support efficient updates to $R$ and $S$; these are marked with the dashed boxes in Figure~\ref{fig:intro-example}.
For simplicity, we assume that each view tree maintains copies of its base relations. 

Consider a single-tuple update $\delta R = \{(a,b) \rightarrow m \}$ to relation $R$. 
We maintain each view affected by $\delta R$ using the hierarchy of materialized views from Figure~\ref{fig:intro-example}.
The changes in those views are expressed using the classical delta rules~\cite{Chirkova:Views:2012:FTD}.
We update the views $R'(B)$ and $V_B(B)$ in the bottom-right tree with $\delta R'(b) = \delta R(a,b)$ and $\delta V_B(b) = \exists H_B(b), \delta R'(b), S'(b)$ in constant time; the same holds for updating the views $All_A(B)$, $All_B(B)$, and $H_B(B)$.

The update $\delta R$ affects the light part $R^B$ of $R$ if the $B$-value $b$ already exists among the $B$-values in $R^B$ or does not exist in $R$.
For such change $\delta R^B$, we update $V_B(A,C)$ with $\delta V_B(a,C)$ $= \delta R^B(a,b), S^B(b,C)$ in time $\bigO{N^\eps}$ since $b$ is light in $S^B$; updating $L_B(B)$ and $H_B(B)$ takes constant time.

The update $\delta R$ may trigger a new single-tuple change in $\exists H_B$, affecting $V_B(B)$. 
The change $\delta (\exists H_B)$ is non-empty only when the multiplicity $H_B(b)$ changes from 0 to non-zero or vice versa.
For such change $\delta (\exists H_B)$, we update $V_B(B)$ via constant-time lookups in $R'(b)$ and $S'(b)$.

The update $\delta R$ may change the degree of $b$ in $R$ from light to heavy or vice versa. 
In such cases, we need to rebalance the partitioning of $R$ and possibly recompute some of the views.
Although such rebalancing steps may take time more than $\bigO{N^\eps}$, they happen periodically and their amortized cost remains the same as for a single-tuple update (Section~\ref{sec:updates}).\qed
\end{exa}

\begin{figure}[t]
  \centering

  \begin{minipage}[b]{0.32\linewidth}
    \small
    \centering
    \begin{tikzpicture}[xscale=1, yscale=1]
      \node at (0.0, 0.0) (All_{B}) {{\color{black} $All_{B}({ B})$}};
      \node at (-0.6, -0.8) (V_{B}) {{\color{black} $All_{A}({ B})$}} edge[-] (All_{B});
      \node at (-0.6, -1.6) (R) {{\color{black} $R(\underline{A},B)$}} edge[-] (V_{B});
      \node at (0.6, -1.6) (S) {{\color{black} $S(B)$}} edge[-] (All_{B});
    \end{tikzpicture}
  \end{minipage}
  \begin{minipage}[b]{0.32\linewidth}
    \small
    \centering
    \begin{tikzpicture}[xscale=1.4, yscale=1]
      \node at (0.0, 0.0) (L_{B}) {{\color{black} $L_{B}({ B})$}};
      \node at (-0.6, -0.8) (V_{B}) {{\color{black} $L_{A}({ B})$}} edge[-] (L_{B});
      \node at (-0.6, -1.6) (R_l) {{\color{black} $R^{{B}}(\underline{A},B)$}} edge[-] (V_{B});
      \node at (0.6, -1.6) (S_l) {{\color{black} $S^{{B}}(B)$}} edge[-] (L_{B});
    \end{tikzpicture}
  \end{minipage}
  \begin{minipage}[b]{0.32\linewidth}
    \small
    \centering
    \begin{tikzpicture}[xscale=1.4, yscale=1]
      \node at (0.0, 0.0) (H_{B}) {{\color{black} $H_{B}({ B})$}};
      \node at (-0.6, -1.6) (All_{B}) {{\color{black} $All_B({ B})$}} edge[-] (H_{B});
      \node at (0.6, -1.6) (L_{B}) {{\color{black} $\nexists L_{B}({ B})$}} edge[-] (H_{B});
    \end{tikzpicture}
  \end{minipage}

  \medskip

  \begin{minipage}[b]{0.32\linewidth}
    \small
    \centering    
    \begin{tikzpicture}[xscale=1.4, yscale=1]
      \node at (0.0, 0.0) (All_{B}) {{\color{black} $V_{B}({ A})$}};
      \node at (-0.6, -1.8) (R) {{\color{black} $R(\underline{A},B)$}} edge[-] (All_{B});
      \node at (0.6, -1.8) (S) {{\color{black} $S(B)$}} edge[-] (All_{B});
      
    \end{tikzpicture}
  \end{minipage}
  \begin{minipage}[b]{0.32\linewidth}
    \small
    \centering
    \hspace*{-0.35cm}
    \begin{tikzpicture}[xscale=1.4, yscale=1]
      \node at (0.0, 0.0) (V_{B}) {{\color{black} $V_{B}({ \underline{A}})$}};
      \node at (-0.6, -1.8) (R) {{\color{black} $R^{B}(\underline{A},B)$}} edge[-] (V_{B});
      \node at (0.6, -1.8) (S) {{\color{black} $S^{B}(B)$}} edge[-] (V_{B});
      
    \end{tikzpicture}
  \end{minipage}
  \begin{minipage}[b]{0.32\linewidth}
    \small
    \centering
    \hspace*{-0.3cm}
    \begin{tikzpicture}[xscale=1.4, yscale=1]
      \node at (0.0, 0.0) (V_{B}) {{\color{black} $V_{B}({ {\color{black} B}})$}};
      \node at (-1.1, -1.0) (H_{B}) {{\color{black} $\exists H_{B}({ B})$}} edge[-] (V_{B});
      \node at (0.0, -1.0) (V_{A}) {{\color{black} $R'({ B})$}} edge[-] (V_{B});
      \node at (0.0, -1.8) (R) {{\color{black} $R(\underline{A},B)$}} edge[-] (V_{A});
      \node at (1.1, -1.8) (S) {{\color{black} $S(B)$}} edge[-] (V_{B});
    \end{tikzpicture}
  \end{minipage}
  
  \caption{The view trees for $Q(A) = R(A,B), S(B)$ in Example~\ref{ex:intro2}. The bottom-left view tree is the only one needed in the static case, all others are needed in the dynamic case.}
  \label{fig:intro-example2}
\end{figure}

Next, we demonstrate our approach for the $\dfw_1$-hierarchical query
from Example~\ref{ex:dynamic_overall_time}.
 
\begin{exa}\label{ex:intro2}
\rm
Consider the $\dfw_1$-hierarchical free-connex query $Q(A) = R(A,B), S(B)$
from Example~\ref{ex:dynamic_overall_time}  whose relations have size at most $N$. 
Figure~\ref{fig:intro-example2} shows the single view tree (bottom-left) that our approach constructs in the static case, and the other five view trees needed in the dynamic case. 
In the static case, since $Q$ is free-connex, its result  can be computed in $\bigO{N}$ time and then its tuples can be enumerated with $\bigO{1}$ delay. Our approach does not partition the relations in the static case. 
We compute the view $V_B(A)$ in time $\bigO{N}$ by iterating over the tuples in $R$ and looking up for each tuple $(a,b)$ in $R$ the multiplicity of $b$ in $S$ in $\bigO{1}$ time. The result can be enumerated from the view $V_B(A)$ with $\bigO{1}$ delay. 

In the dynamic case, we partition relations $R$ and $S$ on the bound join variable $B$ and create the indicators $L_B(B)$ and $H_B(B)$ as in Figure~\ref{fig:intro-example2}. 
In the light case, we compute the view $V_B(A)$ in $\bigO{N}$ time: For each $(a,b)$ in the light part $R^B$ of $R$, we check the multiplicity of $b$ in the light part $S^B$ of $S$ using a constant-time lookup.
In the heavy case, we compute the view $V_B(B)$ in $\bigO{N}$ time using the heavy indicator $\exists H_B$, the input relation $S$, and the projection $R'(B)$ of $R$ on $B$.

We can enumerate the tuples in the query result with $\bigO{N^{1-\eps}}$ delay: Since  there are at most $N^{1-\eps}$ heavy $B$-values in $V_B(B)$, each with its own list of $A$-values in $R$, we need $\bigO{N^{1-\eps}}$ delay to enumerate the distinct $A$-values paired with the heavy $B$-values. In addition, we can enumerate from the view $V_B(A)$ created for the light $B$-values with constant delay. To obtain the multiplicity of each output tuple, we sum up the positive multiplicities of the duplicates of the tuple across the  $\bigO{N^{1-\eps}}$ buckets.

A single-tuple update to $R$ triggers constant-time updates to all views. A single-tuple update to $S$ triggers constant-time updates to the indicators and $V_B(B)$. In the light case, the update to $V_B(A)$ is given by $\delta V_B(A) = R^B(A,b), \delta S^B(b)$, which 
requires $\bigO{N^\eps}$ time since $b$ is light in $R^B$. We may need to rebalance the partitions, which gives an amortized update time of $\bigO{N^\eps}$. 
\qed
\end{exa}

Since both queries in Examples~\ref{ex:intro} and \ref{ex:intro2} are $\dfw_1$-hierarchi\-cal and do not have repeating relation symbols, there is no algorithm that can maintain them under single-tuple updates with $\bigO{N^{\frac{1}{2}-\gamma}}$ amortized update time and $\bigO{N^{\frac{1}{2}-\gamma}}$ delay  for $\gamma>0$ unless the {Online Matrix-Vector Multiplication} conjecture fails (Proposition~\ref{prop:lower_bound}). Our approach meets this lower bound for $\eps=\frac{1}{2}$.

\section{Conclusion and Future Work}
\label{sec:conclusion}
This paper investigates the preprocessing-update-delay trade-off for hierarchical queries and introduces an approach that recovers a number of prior results when restricted to hierarchical queries.
There are several lines of future work. Of paramount importance is the generalization of our trade-off from hierarchical to conjunctive queries. The results of this paper can be immediately extended to hierarchical queries with group-by aggregates and order-by clauses. In particular, this extension would capture the prior result on constant-delay enumeration for such queries in the context of factorized databases~\cite{OlteanuS:SIGREC16}. An open problem is to find lower bounds for $\dfw_i$-hierarchical queries for $i>1$. We conjecture our update/delay upper bounds $\bigO{N^{i\eps}}/ \bigO{N^{1-\eps}}$ are worst-case optimal,  as it is the case for $i=0$ with $\eps=1$~\cite{BerkholzKS17} and $i=1$ with $\eps=\frac{1}{2}$ (Proposition~\ref{prop:lower_bound}).

\section*{Acknowledgment}
  \noindent This project has received funding from the European Union’s Horizon 2020 research and innovation programme under grant agreement No 682588.

  
\bibliographystyle{alphaurl}
\bibliography{bibliography}

\appendix
\section{Proofs of the Results in Section~\ref{sec:trade_offs_static_evaluation}}  
\label{appendix:main_results_static}
\begin{citedthm}[\ref{thm:main_static}]
  Given a hierarchical query with static width $\fw$, a database  
  of size $N$, and $\eps \in [0,1]$, the query result can be enumerated with $\bigO{N^{1-\eps}}$ delay after $\bigO{N^{1 + (\fw -1)\eps}}$ 
  preprocessing time.
\end{citedthm}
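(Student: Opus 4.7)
The plan is to observe that Theorem~\ref{thm:main_static} is essentially a corollary of the three propositions already established about the construction $\tau(\omega,\calF)$: namely Proposition~\ref{prop:equivalence} (correctness), Proposition~\ref{prop:preproc_time} (preprocessing time), and Proposition~\ref{prop:enumeration} (enumeration delay). So the ``proof'' is really a matter of tying these together correctly and handling a few edge cases.

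First I would fix the evaluation mode to \emph{static} and pick a canonical variable order $\omega$ for the given hierarchical query $Q(\calF)$; such an order always exists since hierarchical queries admit canonical variable orders (unique up to sibling ordering among variables with the same atom set). If the hypergraph of $Q$ has several connected components, I would apply the construction separately to each component, exactly as the paper stipulates at the start of Section~\ref{sec:preprocessing}, and then enumerate the Cartesian product of the per-component results using the \textsc{Product} algorithm with an empty context; since the number of components depends only on $Q$, this adds only a constant factor in data complexity.

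Next I would invoke $\tau(\omega,\calF)$ in the static mode to produce a set of view trees $\{T_1,\dots,T_k\}$, materialize all views in these trees, and appeal to Proposition~\ref{prop:preproc_time} to conclude that the preprocessing time is $\bigO{N^{1+(\fw-1)\eps}}$. The correctness of this data structure as a representation of the query result follows from Proposition~\ref{prop:equivalence}, which shows $Q(\calF) \equiv \bigcup_{i\in[k]} Q^{(i)}(\calF)$, where each $Q^{(i)}$ is defined by the leaves of $T_i$. For enumeration, I would then invoke the $open/next/close$ iterator model described in Section~\ref{sec:enumeration}, feeding $\{T_1,\dots,T_k\}$ to the \textsc{Union} algorithm of Figure~\ref{fig:enumeration-union}; Proposition~\ref{prop:enumeration} gives the $\bigO{N^{1-\eps}}$ delay bound, with the multiplicity of each output tuple obtained by summing multiplicities across the (at most $\bigO{N^{1-\eps}}$) buckets, all of which are strictly positive so every yielded tuple truly belongs to the result.

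The only delicate point, and arguably the main obstacle if one wanted to redo the analysis from scratch rather than cite the propositions, is showing that the blow-up from grounding heavy indicators stays within the promised bounds: each heavy indicator has size $\bigO{N^{1-\eps}}$, and even though such indicators may be nested along a root-to-leaf path in a view tree whenever several bound join variables are present, the hierarchical property ensures that partitioning each relation jointly on a bound variable $X$ together with $\anc(X)$ keeps the total number of heavy tuples needed at enumeration time linear in a single indicator's size, rather than multiplicative across indicators. This is precisely the structural fact exploited by Section~\ref{sec:skew_aware_trees} and Proposition~\ref{prop:enumeration}, and it is what allows the overall delay to remain $\bigO{N^{1-\eps}}$ rather than an exponential product of $N^{1-\eps}$ terms. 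With these observations, Theorem~\ref{thm:main_static} follows by combining the three propositions.
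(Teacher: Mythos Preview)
Your proposal is correct and follows essentially the same route as the paper's own proof: both derive Theorem~\ref{thm:main_static} by fixing a canonical variable order, invoking $\tau(\omega,\calF)$ in static mode, and then combining Propositions~\ref{prop:equivalence}, \ref{prop:preproc_time}, and \ref{prop:enumeration}, with the multi-component case handled via the \textsc{Product} algorithm. Your additional paragraph on the nesting of heavy indicators is helpful intuition but is already subsumed by Proposition~\ref{prop:enumeration}, so it is not needed for the formal argument.
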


The theorem follows from Propositions~\ref{prop:equivalence}, \ref{prop:preproc_time}, and \ref{prop:enumeration}. 
Let $Q(\calF)$ be a hierarchical query
and $\omega$ an arbitrary canonical 
variable order for $Q$. 
Without loss of generality, assume that $\omega$ 
consists of a single tree.    
The preprocessing stage materializes the views in the view trees $\{T_1, \ldots , T_k\}$ returned by 
$\tau(\omega, \calF)$ from Figure~\ref{fig:view_forest_main_algo} in the static mode. 
By Proposition~\ref{prop:preproc_time}, these views can be materialized 
in $\bigO{N^{1 + (\fw -1)\eps}}$. 
By Proposition~\ref{prop:equivalence},
$Q(\mathcal{F})$ is equivalent to $\bigcup_{i\in[k]} Q_i(\mathcal{F})$, where 
$Q_i(\mathcal{F})$ is the query defined by the 
join of the leaves in $T_i$.
By Proposition~\ref{prop:enumeration},  
the result of $Q(\calF)$ can be enumerated from these materialized views with delay $\bigO{N^{1-\eps}}$. 

If the canonical variable order for $Q$ consists of several trees 
$\omega_1, \dots , \omega_m$, we 
construct a set $\calT_i$ of view trees for each $\omega_i$, where $i \in [m]$. 
The result of the query is the Cartesian product of the tuple sets obtained from each $\calT_i$.
Given that each set $\calT_i$ of view trees admits 
$\bigO{N^{1-\eps}}$ enumeration delay,  
the tuples in the Cartesian product can be enumerated with the same delay
using the \textsc{Product} algorithm (Figure~\ref{fig:enumeration-product}), 
since $m$ is independent of the database size $N$.

\section{Proofs of the Results in Section~\ref{sec:trade_offs_dynamic_evaluation}}  
\label{appendix:main_results_dynamic}
\begin{citedthm}[\ref{thm:main_dynamic}]
  Given a hierarchical query with static width $\fw$ and dynamic width $\dfw$, a database of size $N$, and $\eps \in [0,1]$, 
  the query result can be enumerated with $\bigO{N^{1-\eps}}$ delay after $\bigO{N^{1 + (\fw -1)\eps}}$ preprocessing time and 
  $\bigO{N^{\dfw\eps}}$ amortized update time for single-tuple updates.
\end{citedthm}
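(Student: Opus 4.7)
The plan is to assemble the theorem from the four key components that the paper has already laid out, closely following the template used for Theorem~\ref{thm:main_static}. Let $Q(\calF)$ be the given hierarchical query, and without loss of generality assume that its canonical variable order $\omega$ consists of a single tree (the disconnected case is handled at the end by taking a Cartesian product of the per-component enumerators using the \textsc{Product} routine, whose delay equals the maximum child delay since the number of components is independent of $N$). We run the algorithm $\tau(\omega,\calF)$ of Figure~\ref{fig:view_forest_main_algo} in the \emph{dynamic} mode to obtain a set of view trees $\calT = \{T_1, \ldots, T_k\}$ together with the associated triples of indicator trees $\calT_{Ind}$.

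For the preprocessing bound, I would invoke Proposition~\ref{prop:preproc_time} directly: materializing all views in $\tau(\omega,\calF)$ takes $\bigO{N^{1+(\fw-1)\eps}}$ time, and in the dynamic mode this additionally includes the auxiliary views introduced by \textsc{AuxView} (which only aggregate away a single variable from a sibling and thus do not increase the asymptotic cost). For the enumeration delay, Proposition~\ref{prop:equivalence} gives $Q(\calF) \equiv \bigcup_{i\in[k]} Q^{(i)}(\calF)$ where $Q^{(i)}$ is the join of the leaves of $T_i$; applying the \textsc{Union} procedure from Figure~\ref{fig:enumeration-union} on top of per-tree enumerators yields the claimed $\bigO{N^{1-\eps}}$ delay by Proposition~\ref{prop:enumeration}, since $k$ is independent of $N$ and the dominant cost comes from grounding heavy indicators of size $\bigO{N^{1-\eps}}$.

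The main obstacle, and the crux of the theorem, is the amortized update bound, which I would reduce to Proposition~\ref{prop:amortized_update_time}. Concretely, a single-tuple update $\delta R$ is processed via \textsc{OnUpdate} (Figure~\ref{fig:on_update}), which either (i) propagates $\delta R$ through $\calT$ and $\calT_{Ind}$ via \textsc{UpdateTrees} at worst-case cost $\bigO{N^{\dfw\eps}}$ by Proposition~\ref{prop:update_time}; (ii) performs a \textsc{MinorRebalancing} at worst-case cost $\bigO{N^{(\dfw+1)\eps}}$ by Proposition~\ref{prop:minor_time}, but only when the light/heavy thresholds of some bucket are breached, an event that requires $\Omega(M^{\eps})$ updates to trigger because of the $\tfrac{1}{2}$-to-$\tfrac{3}{2}$ gap between the heavy and light conditions in Definition~\ref{def:partition}; or (iii) performs a \textsc{MajorRebalancing} at cost $\bigO{N^{1+(\fw-1)\eps}}$ by Proposition~\ref{prop:major_time}, which by the size invariant $\floor{\tfrac{1}{4}M} \le N < M$ is triggered only every $\Omega(M)$ updates, and hence (using $\dfw \in \{\fw,\fw-1\}$ from Proposition~\ref{prop:width_delta_inequal}) amortizes to $\bigO{N^{(\fw-1)\eps}} \le \bigO{N^{\dfw\eps}}$.

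Putting these three contributions together gives an amortized update time of $\bigO{N^{\dfw\eps}}$. The subtlety to be careful about is that after a rebalancing step the invariant $\floor{\tfrac{1}{4}M} \le N < M$ is restored with enough slack (as noted in Section~\ref{app:rebalancing}) to guarantee that at least $\tfrac{1}{4}M$ further updates must occur before another major rebalancing, and that rebalancing the partition of a single bucket does not cascade into other buckets beyond the cost already charged; these are precisely the ingredients of Proposition~\ref{prop:amortized_update_time}, which supplies the final bound and completes the proof.
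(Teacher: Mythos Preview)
Your proposal is correct and takes essentially the same approach as the paper: both reduce the theorem to Propositions~\ref{prop:equivalence}, \ref{prop:preproc_time}, \ref{prop:enumeration}, \ref{prop:update_time}, and \ref{prop:amortized_update_time}, under the same single-tree WLOG with the \textsc{Product} wrapper for multiple components. The only difference is that you unpack the amortization argument behind Proposition~\ref{prop:amortized_update_time} (the three cases for \textsc{UpdateTrees}, minor, and major rebalancing, together with Proposition~\ref{prop:width_delta_inequal}), whereas the paper simply cites that proposition and defers those details to its own proof; this is the same content, just inlined.
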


The theorem follows from 
Propositions~\ref{prop:equivalence}, \ref{prop:preproc_time}, \ref{prop:enumeration}, \ref{prop:update_time}, and
\ref{prop:amortized_update_time}. 
Let $Q(\calF)$ be a hierarchical query
and $\omega$ an arbitrary canonical variable order
for $Q$.
Without loss of generality, assume that 
$\omega$ consists of a single tree. 
The preprocessing stage materializes 
the views in the set of view trees returned by $\tau(\omega,\calF)$ from Figure~\ref{fig:view_forest_main_algo} in the dynamic mode.
The preprocessing time $\bigO{N^{1 + (\fw -1)\eps}}$ follows from Proposition~\ref{prop:preproc_time}, 
which captures both the static and dynamic modes (see the proof). 
The equivalence between the constructed view trees and the query 
follows from Proposition~\ref{prop:equivalence}.
The delay $\bigO{N^{1-\eps}}$ needed when enumerating the query result 
 from these materialized views
follows from Proposition~\ref{prop:enumeration}.
The time $\bigO{N^{\dfw\eps}}$
to maintain these materialized views
under a single-tuple update 
follows from Proposition~\ref{prop:update_time}.
By Proposition~\ref{prop:amortized_update_time},
the amortized maintenance time 
under a sequence of single-tuple updates is 
$\bigO{N^{\dfw\eps}}$.
  
If the canonical variable order consists of several view trees,
the reasoning is analogous to the proof of Theorem~\ref{thm:main_static}.

\section{Proofs of the Results in Section~\ref{sec:preprocessing}}
\label{appendix:preprocessing}
\subsection{Proof of Proposition~\ref{prop:equivalence}}

\begin{citedprop}[\ref{prop:equivalence}]
Let $\{T_1,\ldots,T_k\} = \tau(\omega,\calF)$ 
be the set of view trees constructed by the algorithm in Figure~\ref{fig:view_forest_main_algo} for a given hierarchical query $Q(\mathcal{F})$ and a canonical variable order $\omega$ for $Q$.
Let $Q^{(i)}(\mathcal{F})$ be the query defined by the conjunction of the leaf atoms in $T_i$, $\forall i\in[k]$. Then, $Q(\mathcal{F}) \equiv \bigcup_{i\in[k]} Q^{(i)}(\mathcal{F})$.
\end{citedprop}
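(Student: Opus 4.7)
The plan is to prove the equivalence by structural induction on the canonical variable order $\omega$, matching the recursion of $\tau(\omega,\calF)$. Writing $Q_\omega$ for the conjunctive query obtained from the natural join of the atoms at the leaves of $\omega$ with appropriate free variables, the inductive claim is that $\bigcup_{T \in \tau(\omega,\calF)} Q^T \equiv Q_\omega$. The base case, where $\omega$ is a single atom $R(\calY)$, is immediate: $\tau$ returns $\{R(\calY)\}$ and its associated query is exactly $R(\calY)$.

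For the inductive step, let $X$ be the root of $\omega$ with subtrees $\omega_1,\ldots,\omega_k$. When Lines 5--7 fire, $\tau$ returns a single view tree built by \textsc{BuildVT}, whose leaves are exactly the atoms of $\omega$, so equivalence is immediate. When $X$ is free (Lines 8--11), $\tau$ enumerates every combination of the recursively produced view trees; applying the induction hypothesis on each $\omega_i$ and distributing natural join over union yields
\[
\bigcup_{(T_1,\ldots,T_k)} \bigl(Q^{T_1} \bowtie \cdots \bowtie Q^{T_k}\bigr) \;\equiv\; Q_{\omega_1} \bowtie \cdots \bowtie Q_{\omega_k} \;\equiv\; Q_\omega.
\]
The auxiliary views added by \textsc{AuxView} in dynamic mode only aggregate away non-free variables and so do not alter the query semantics.

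The main obstacle is the bound case (Lines 12--17), where the algorithm partitions the result according to whether the tuple on $keys = \anc(X) \cup \{X\}$ is heavy or light in the relations of $\omega$. The crucial lemma I will establish is the multi-set identity
\[
Q_\omega \;\equiv\; \bigl(Q_\omega \bowtie \exists H_X\bigr) \;\cup\; \bigl(\mathop{\bowtie}\limits_{R(\calY) \in \atoms(\omega)} R^{keys}(\calY)\bigr),
\]
which behaves as a disjoint union on $keys$-values. This identity rests on two facts drawn from the construction of \textsc{IndicatorVTs} and from Definition~\ref{def:partition}: first, $\pi_{keys} H_X$ and $\pi_{keys} L_X$ partition $\pi_{keys}(\bowtie_R R)$; second, whenever a $keys$-tuple $\tup{t}$ lies in $\pi_{keys} L_X$ it is light in every atom of $\omega$, forcing $\sigma_{keys=\tup{t}}R = \sigma_{keys=\tup{t}}R^{keys}$ for each such $R$. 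A pointwise case analysis on whether $\tup{x}[keys]$ lies in $\pi_{keys} H_X$, in $\pi_{keys} L_X$, or in neither then verifies the identity. Combining this lemma with the induction hypothesis applied to each $\omega_i$ (for the heavy branches, after the same distributivity argument used in the free case) and to the light-parts variant $\omega^{keys}$ (for the light branch) completes the induction.
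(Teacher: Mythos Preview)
Your proposal is correct and follows the same structural induction as the paper: identical base case, the same three-way split at the inductive step, the same distributivity-of-join-over-union argument for the free case, and your key lemma for the bound case is precisely the paper's displayed equivalence (derived there in one line from $All_X = L_X \cup H_X$), which you justify more explicitly via the domain-partition condition of Definition~\ref{def:partition}. One small correction: the light branch (Line~16) calls \textsc{BuildVT} on $\omega^{keys}$, not $\tau$, so what closes that branch is not the induction hypothesis but the observation you already invoked for Lines~5--7, namely that \textsc{BuildVT} preserves the set of leaf atoms.
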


We use two observations.
(1) The procedure $\textsc{BuildVT}$ constructs a view tree whose leaf atoms are exactly the same as the leaf atoms of the input variable order.
(2) Each of the procedures $\textsc{NewVT}$ and $\textsc{AuxView}$ 
constructs a view tree whose set of leaf atoms is the union of the sets of leaf atoms of the input trees. 
For a variable order or view tree $T$ and schema a set $\calS$ of variables occurring in
$T$, we define $Q_{T}(\calS) = \JOIN_{R(\calX) \in\atoms(T)} R(\calX)$.

The proof is by induction over the structure of $\omega$. We show that for any subtree $\omega'$
of $\omega$, it holds:
\begin{align}
\label{equ:equivalence}
Q_{\omega'}(\mathcal{F}\cap \vars(\omega')) \equiv 
\bigcup_{T \in \tau(\omega',\calF)} Q_T(\mathcal{F}\cap \vars(\omega')).
\end{align}
\begin{itemize}[label={}, leftmargin=0pt]
  
  \setlength\itemsep{0.4em}
  
	\item {\em Base case}:
If $\omega'$ is an atom, the procedure $\tau$
returns it and the base case holds trivially. 

	\item {\em Inductive step}:
Assume that $\omega'$ has root variable $X$ and subtrees
$\omega'_1, \ldots , \omega'_k$.
Let $keys = \anc(X) \cup \{X\}$, $\calF_X = \anc(X) \cup (\calF 
\cap \vars(\omega'))$, 
and
$Q_X(\mathcal{F}_X) = \JOIN_{R(\calX)\atoms(\omega')} R(\calX)$.
The procedure $\tau$ distinguishes the following cases: 

	\item \hspace{.5em} {\em Case 1: (mode = \text{`static'} $\land$ $Q_X(\mathcal{F}_X)$ \text{ is free-connex})\,$\lor$ (mode = \text{`dynamic'} $\land$ $Q_X(\mathcal{F}_X)$ \text{is $\dfw_0$-hierarchical})}.
The procedure $\tau$ returns a view tree $T$ constructed by the procedure 
$\textsc{BuildVT}(\cdot,\omega',\calF_X)$. The leaves of $T$ 
are the  atoms of $\omega'$. This implies 
Equivalence~\ref{equ:equivalence}.

	\item \hspace{.5em} {\em Case 1 does not hold and $X\in \calF$:\,\,} 
	The set of view trees $\tau(\omega',\calF)$ is defined as follows:
for each set $\{T_i\}_{i \in [k]}$ with $T_i \in \tau(\omega'_i,\calF)$, 
the set $\tau(\omega',\calF)$ contains the
view tree  $\textsc{NewVT}(\cdot,keys,\{\hat{T}_i\}_{i \in [k]})$
where $\hat{T}_i = \textsc{AuxView}(\text{root of } \omega_i',T_i)$
  for $i\in[k]$.
  
  Using the induction hypothesis, we rewrite as follows:
\begin{align*}
& Q_{\omega'}(\calF\cap \vars(\omega')) =   \JOIN_{i\in[k]} Q_{\omega_i'}(\calF\cap \vars(\omega_i')) 
 \overset{\text{IH}}{\equiv}    \JOIN_{i\in[k]} \Big(\bigcup_{T \in \tau(\omega'_i,\calF)} Q_T(\mathcal{F}\cap \vars(\omega'_i))\Big) \\[-0.1cm]
\equiv & \,  
\bigcup_{\forall i\in[k]: T_i \in \tau(\omega'_i,\calF)}  
\hspace{-0.3cm} \JOIN_{i\in[k]}  Q_{T_i}(\mathcal{F}\cap \vars(\omega'_i))
 \equiv \hspace{-0.1cm} \,   \bigcup_{\forall i\in[k]: T_i \in \tau(\omega'_i,\calF)}  
\hspace{-0.3cm}  Q_{\textsc{NewVT}(\cdot,keys,\{\hat{T}_i\}_{i \in [k]})}(\mathcal{F}\cap \vars(\omega')) \\[-0.1cm]
= &  \, 
\bigcup_{T \in \tau(\omega',\calF)} Q_T(\mathcal{F}\cap \vars(\omega')).
\end{align*}

	\item \hspace{.5em} {\em Case 1 does not hold and $X\not\in \calF$:\,\,}
The procedure $\tau$ creates 
the views
$All_X(keys) = \JOIN_{R(\calX) \in \atoms(\omega')} R(\calX)$,   
$L_X(keys) = \JOIN_{R(\calX) \in \atoms(\omega')} R^{keys}(\calX)$, and
$H_X(keys)$ = $All_X(keys)$ $\Join$ $\nexists L_X(keys)$.
 It then returns the view trees $\{ltree\} \cup htrees$ defined as
 follows:
 \begin{itemize}
  \item $ltree = \textsc{BuildVT}(\cdot, \omega^{keys}, \calF)$,
  where $\omega^{keys}$ has the same structure as $\omega'$ but each 
  atom is replaced by its light part;
  
  \item   for each set $\{T_i\}_{i \in [k]}$ with $T_i \in \tau(\omega'_i,\calF)$,
   $htrees$ contains the view tree 
$\textsc{NewVT}(\cdot,keys,$ $\{\exists H_X\}$ $\cup$ $\{\hat{T}_i\}_{i \in [k]})$
where $\hat{T}_i = \textsc{AuxView}(\text{root of } \omega_i',T_i)$
  for $i\in[k]$.
 \end{itemize}
From  
$ALL_X(keys) = L_X(keys) \cup H_X(keys)$, we derive the following
equivalence. 
For simplicity, we skip the schemas of queries:
\begin{align}
\label{eq:heavy_light}
& \bigcup_{\forall i\in[k]: T_i \in \tau(\omega'_i,\calF)}  
\hspace{-0.5cm} \JOIN_{i\in[k]}\  Q_{T_i}  \equiv Q_{ltree} 
\cup \bigcup_{\forall i\in[k]: T_i \in \tau(\omega'_i,\calF)}  
\hspace{-0.5cm} Q_{\textsc{NewVT}(\cdot,keys,\{\exists H_X\} \cup \{\hat{T}_i\}_{i \in [k]})}
\end{align}  

Using Equivalence \eqref{eq:heavy_light} and
the induction hypothesis, we obtain:
  
\begin{align*}
 Q_{\omega'}   =  & \JOIN_{i\in[k]} Q_{\omega_i'} 
 \overset{\text{IH}}{\equiv}   
\JOIN_{i\in[k]} \Big(\bigcup_{T \in \tau(\omega'_i,\calF)} Q_T \Big) 
 \equiv
\bigcup_{\forall i\in[k]: T_i \in \tau(\omega'_i,\calF)}  
\hspace{-0.3cm} \JOIN_{i\in[k]}  Q_{T_i} \\
  \overset{\eqref{eq:heavy_light}}{\equiv} &  
Q_{ltree} \cup \bigcup_{\forall i\in[k]: T_i \in \tau(\omega'_i,\calF)}  
\hspace{-0.5cm} Q_{\textsc{NewVT}(\cdot,keys,\{\exists H_X\} \cup \{\hat{T}_i\}_{i \in [k]})} \\
   = & \ 
Q_{ltree} \cup \bigcup_{T \in htrees} Q_{T} = \bigcup_{T \in \tau(\omega',\calF)} Q_T
\end{align*} 
\end{itemize}

\subsection{Proof of Proposition~\ref{prop:preproc_time}}

\begin{citedprop}[\ref{prop:preproc_time}]
Given a hierarchical query $Q(\calF)$ with
 static width $\fw$, a canonical variable order 
$\omega$ for $Q$,
a database  of size $N$, and $\eps \in [0,1]$, 
the views in the set of view trees $\tau(\omega,\calF)$
can be materialized in $\bigO{N^{1+(\fw-1)\eps}}$ time.
\end{citedprop}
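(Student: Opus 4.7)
The plan is to argue, by induction on the canonical variable order $\omega$, that every view materialized during $\tau(\omega,\calF)$ has size $\bigO{N^{1+(\fw-1)\eps}}$ and can be produced within the same time bound given its children. Because the number of view trees returned by $\tau$ and the number of nodes in each such tree depend only on $Q$ and not on $N$, summing these per-view bounds then yields the advertised overall preprocessing cost.

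The inductive step dissects the three branches of $\tau$ in Figure~\ref{fig:view_forest_main_algo}. In the free-connex (static) or $\dfw_0$-hierarchical (dynamic) branch, $\textsc{BuildVT}$ produces views whose schemas sit inside the schema of a single atom, so each view has size $\bigO{N}$ and can be materialized in $\bigO{N}$ time by a Yannakakis-style top-down pass. When $X$ is a bound join variable at which we split, the heavy indicator $\exists H_X$ has at most $\bigO{N^{1-\eps}}$ tuples, because every heavy $keys$-value has degree at least $\tfrac{1}{2}N^\eps$ in some relation; propagating this factor up the heavy branch, together with the inductively bounded sizes of the child trees, keeps each heavy-branch view within the claimed bound. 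For the light branch, after fixing an $\anc(X)$-value (at most $N$ choices), every atom of $\omega^{keys}$ contributes at most $N^\eps$ tuples because of strict light partitioning; the AGM inequality then bounds the restricted join on $\vars(\omega_X)$ by $(N^\eps)^{\rho^{\ast}(\vars(\omega_X)\setminus\{X\})}$, giving the overall size $\bigO{N^{1+\rho^{\ast}(\vars(\omega_X)\setminus\{X\})\cdot\eps}}$. Finally, the indicator view trees built by $\textsc{IndicatorVTs}$ compute $\dfw_0$-hierarchical subqueries over fully partitioned atoms, so each of their views has $\bigO{N}$ size and $\bigO{N}$ computation time, contributing strictly below the dominant term.

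The main obstacle is to relate the fractional edge cover numbers arising in the light-branch analysis to the static width $\fw$, which is defined over \emph{free-top} variable orders rather than the canonical $\omega$. The strategy is to pass to the free-top order $\omega^f = \freetop(\omega)$ constructed in Section~\ref{sec:canonical_to_free_top}, whose validity is given by Lemma~\ref{lem:canonical_to_free_top}, and then to invoke Remark~\ref{rem:properties_free_top} together with Lemma~\ref{lem:rho_rhostar} to identify, for each splitting node $X$ of $\tau$, a node $Y$ of $\omega^f$ such that $\vars(\omega_X) \subseteq \{Y\}\cup\dep_{\omega^f}(Y)$; the definition of static width then forces $\rho^{\ast}(\vars(\omega_X)) \leq \fw$, and once the $+1$ arising from the outer iteration over $\anc(X)$ is extracted, the exponent of $N^\eps$ drops to $\fw-1$. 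Combining these per-view bounds, with the branching of $\tau$ contributing only a data-independent constant factor, produces the claimed $\bigO{N^{1+(\fw-1)\eps}}$ preprocessing time.
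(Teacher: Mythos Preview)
Your light-branch analysis and the bridge to $\fw$ both contain a genuine gap. You bound the full join over $\vars(\omega_X)\setminus\{X\}$ via AGM and then assert that for some node $Y$ of $\omega^f$ one has $\vars(\omega_X)\subseteq\{Y\}\cup\dep_{\omega^f}(Y)$, hence $\rho^{\ast}(\vars(\omega_X))\leq\fw$. This is false already for the running query $Q(C,D,E,F)=R(A,B,D),S(A,B,E),T(A,C,F),U(A,C,G)$ of Figure~\ref{fig:view_tree_example_21}: with $X=A$ (the root), $\vars(\omega_A)$ is all seven variables, and since $D,E,F,G$ each occur in a single distinct atom, $\rho^{\ast}(\vars(\omega_A))=\rho^{\ast}(\vars(\omega_A)\setminus\{A\})=4$, whereas $\fw=3$. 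Your bound would give $\bigO{N^{1+3\eps}}$ at best, while the paper (and the claim) need $\bigO{N^{1+2\eps}}$. No choice of $Y$ in $\omega^f$ has $\{Y\}\cup\dep_{\omega^f}(Y)$ containing all seven variables, so the proposed inclusion cannot hold.

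The missing idea is that in the light branch one does \emph{not} materialize the full join: one first aggregates away, in linear time, every bound variable that is not an ancestor of a free variable (here $G$), and only then computes the residual join. The relevant cover number is therefore $\rho^{\ast}(\vars(\omega_X)\cap\calF)$, over the \emph{free} variables in the subtree, not over all of $\vars(\omega_X)$. This is precisely the measure $\xi$ in the paper's Lemma~\ref{lem:fact_view_tree_light_case} and Lemma~\ref{lem:preproc_rho}, and the bridge in the proof of Proposition~\ref{prop:preproc_time} shows $\rho^{\ast}(\vars(\omega_X)\cap\calF)\leq\fw$ by observing that every variable in $\vars(\omega_X)\cap\calF$ depends on the bound variable $X$ and hence lies in $\dep_{\omega^f}(X)$ for \emph{any} free-top order $\omega^f$. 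A secondary imprecision: the light partition is on $keys=\anc(X)\cup\{X\}$, so fixing only an $\anc(X)$-value does not cap each atom at $N^\eps$ tuples; the paper's argument iterates over one full relation (providing the factor $N$ and fixing all of $keys$) and then pays $N^\eps$ for each of the remaining $\rho^{\ast}(\vars(\omega_X)\cap\calF)-1$ atoms in an integral cover (Lemma~\ref{lem:rho_rhostar}).
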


We analyze the procedure $\tau$ from 
Figure \ref{fig:view_forest_main_algo}
for both of the cases
mode = \text{`static'}  and mode = \text{`dynamic'}.
We show that in both cases the time to materialize 
the set of view trees $\tau(\omega,\calF)$ is 
$\bigO{N^{1+(\fw-1)\eps}}$.

We explain the 
intuition behind the complexity analysis. 
If the procedure $\tau$ runs in \text{`static'} mode  
and $Q$ is free-connex,
or  it runs in  \text{`dynamic'} mode   
and $Q$ is $\dfw_0$-hierarchical, the procedure  
constructs a view tree that can be materialized 
in $\bigO{N}$ time.
Otherwise, there must be at least one bound variable $X$ in $\omega$
such that the subtree $\omega_X$ rooted at $X$ contains 
free variables.
In this case, the algorithm partitions the relations at the leaves   
of $\omega_X$ into heavy and light parts  and 
creates view trees for computing parts of the query. 
The time to materialize the views of the view trees where at least one leaf 
relation is
heavy is $\bigO{N}$. 
The overall time to materialize the view trees $\tau(\omega,\calF)$ 
is dominated by the time 
to materialize the views of the view trees where all 
leaf relations are light.
In the worst case, the root variable of $\omega$ is bound
and we need to materialize a view that joins the light parts of all leaf relations 
in $\omega$ and has the entire set $\calF$ as free variables.
We can compute such a view $V(\calF)$ as follows.
We first aggregate away all bound variables that are not ancestors 
of free variables in $\omega$. By using the algorithm  
InsideOut~\cite{FAQ:PODS:2016}, this can be done 
in $\bigO{N}$ time. Then, we choose one atom to iterate over 
the tuples of its relation (outer loop of the evaluation). For each such tuple, 
  we iterate over
 the matching tuples 
in the relations of the other atoms (inner loops of the evaluation). 
To decide which atom to take for the outer loop 
and which ones for the inner loops of our evaluation strategy,
we use an optimal
 integral edge cover $\boldsymbol{\lambda}$ of 
 $\calF$. The schema of each atom that is 
 mapped to $0$ by $\boldsymbol{\lambda}$ must be subsumed 
 by the schema of an atom mapped to $1$.  
 Hence, we can take one of the atoms mapped to 1 
 to do the outer loop. The other atoms that are mapped to 1
 are used for the inner loops. For the atoms that are 
 mapped to 0, it suffices to do constant-time lookups
 during the  iteration over the tuples of the other atoms.
 By exploiting the degree constraints on light relation parts,
 the  view $V(\calF)$ can be materialized in  
 $\bigO{N^{1+ (\rho(\calF)-1)\eps}}$ time.
 By Proposition~\ref{prop:rho_rhostar}, 
 $\rho(\calF) = \rho^{\ast}(\calF)$.
 Considering the time needed to aggregate  
away bound variables before computing $V(\calF)$, we get 
 $\bigO{N^{\max\{1,1+ (\rho^{\ast}(\calF)-1)\eps\}}}$
 overall time complexity.
We show that 
$\max\{1,1+ (\rho^{\ast}(\calF)-1)\eps\}$ is upper-bounded
by $1+ (\fw(Q)-1)\eps$.

The proof is structured following the basic building blocks 
of the procedure $\tau$. 
Lemmas \ref{lem:new_view_tree}-\ref{lem:indicators}
give upper bounds on the times to materialize the views 
in the view trees returned by the procedures 
\textsc{NewVT} (Figure~\ref{fig:view_creation}),
\textsc{AuxView} (Figure~\ref{fig:aux_view_creation}) 
\textsc{BuildVT} (Figure~\ref{fig:factorized_view_tree_algo}), and 
\textsc{IndicatorVTs} (Figure~\ref{fig:skew_aware_views}).
Lemma~\ref{lem:preproc_rho} states the complexity of the procedure 
 $\tau$ based on a measure $\xi$ defined over canonical
 variable orders. 
 The section closes with the proof of 
 Proposition~\ref{prop:preproc_time} that bridges the measure  $\xi$ 
 and the static width of hierarchical queries. 
 
We introduce the measure  $\xi$. 
Let $\omega$ be a canonical variable order, $\calF \subseteq \vars(\omega)$, 
and $X$ a variable or atom in $\omega$.
We denote by $\omega_X$ the subtree of $\omega$ rooted at $X$ and by 
$Q_X$ a query that joins the atoms at the leaves of $\omega_X$. 
We define 
$$\xi(\omega,X,\calF) = \max_{\substack{Y \in \vars(\omega_X)\\ 
(\anc(Y) \cup \{Y\}) \not\subseteq \calF}}
\{\rho_{Q_X}^{\ast}(\vars(\omega_Y) \cap \calF)\}.$$
If $\omega_X$ does not contain a variable $Y$ with
 $(\anc(Y) \cup \{Y\}) \not\subseteq \calF$, then 
 $\xi(\omega,X,\calF) = 0$. If $X$ has children $X_1, \ldots, X_k$, then 
\begin{align}
\label{eq:subtrees}
\xi(\omega, X,\calF)\geq \max_{i \in [k]} \{\xi(\omega, X_i,\calF)\}.
\end{align}

We start with an observation that each view $V$ constructed 
by the procedures 
\textsc{BuildVT} (Figure~\ref{fig:factorized_view_tree_algo}), 
\textsc{NewVT} (Figure~\ref{fig:view_creation}), 
\textsc{AuxView} (Figure~\ref{fig:aux_view_creation}),  
  \textsc{IndicatorVTs} (Figure~\ref{fig:skew_aware_views}), and 
$\tau$ (Figure~\ref{fig:view_forest_main_algo})
at some node $X$ of a 
variable order $\omega$ contains in its schema 
all variables 
in the root path of $X$ and no variables which are not in 
$\omega_X$. Moreover, $V$ results from the join of its 
child views. 
This can be shown by a straightforward 
 induction over the structure of $\omega$.
 
\begin{obs}
\label{obs:view_schemas}
Let $\omega$ be a canonical variable order
and $V(\calF)$ a view constructed at some node $X$
of $\omega$ by one of the procedures
$\textsc{BuildVT}$, 
$\textsc{NewVT}$,
\textsc{AuxView}, 
\textsc{IndicatorVTs}, and
$\tau$.
It holds 
\begin{enumerate}
\item $\anc(X) \subseteq \calF \subseteq \anc(X) \cup \vars(\omega)$.
\item If 
$V_1(\calF_1), \ldots , V_k(\calF_k)$
are the child views of $V(\calF)$, then
$V(\calF) = V_1(\calF_1), \ldots , V_k(\calF_k)$.
\end{enumerate}
\end{obs}

The next lemma gives a bound on the 
time to materialize the views in a view
tree returned by the
procedure \textsc{NewVT} in Figure~\ref{fig:view_creation}.

\begin{lem}
\label{lem:new_view_tree}
Given a set $\{T_i\}_{i \in [k]}$ of view trees 
with root views $\{V_i(\calS_i)\}_{i \in [k]}$,
let  $M_i$ be the time
to materialize the views in $T_i$,
for $i \in [k]$. 
If the query $V(\calS) = V_1(\calS_1), \ldots , V_k(\calS_k)$
with $\calS \subseteq \bigcap_{i \in [k]} \calS_i$
is $\dfw_0$-hierarchical, the views in the view tree 
$\textsc{NewVT}(\cdot,\calS, \{T_i\}_{i \in [k]})$
 can be materialized in 
$\bigO{\max_{i \in [k]}\{M_i\}}$ time.
\end{lem}

\begin{proof}
The procedure $\textsc{NewVT}$ 
defines the view 
$V(\calS)\allowbreak = V_1(\calS_1),\allowbreak \ldots ,\allowbreak 
V_k(\calS_k)$ (Line~2).
The view tree $T$ returned by $\textsc{NewVT}$ is 
defined as follows (Line~4): 
If $k = 1$ and $\calS = \calS_1$, then $T = T_1$;
otherwise, $T$ is the view tree that
has root $V(\calS)$ and subtrees
$T_1, \ldots, T_k$.  
By assumption, 
the time to materialize the views in the trees
$T_1, \ldots, T_k$ is 
$\bigO{\max_{i \in [k]}\{M_i\}}$.
Hence, the sizes of the materialized root views
$V_1(\calS_1), \ldots, V_k(\calS_k)$ 
must be
$\bigO{\max_{i \in [k]}\{M_i\}}$.
Assume that the query defining 
$V(\calS)$ is $\dfw_0$-hierarchical.  
Hence,  we can construct a free-top 
canonical variable order for the query. 
We materialize $V(\calS)$ as follows. 
Traversing the variable order bottom-up,
we aggregate away all bound variables 
using the InsideOut 
algorithm  \cite{FAQ:PODS:2016}.
Since  the query defining $V(\calS)$ is $\alpha$-acyclic, this aggregation 
phase can be done in time linear in the size 
of the views $V_1(\calS_1), \ldots, V_k(\calS_k)$.  
Thus, the aggregation phase requires 
$\bigO{\max_{i \in [k]}\{M_i\}}$ time. 
It follows that the time  to materialize
the views in the tree returned by \textsc{NewVT}
is $\bigO{\max_{i \in [k]}\{M_i\}}$.
\end{proof}

We proceed with a lemma 
that gives a bound on the 
time to materialize the views in the view
tree returned by the
procedure 
\textsc{AuxView} in Figure~\ref{fig:aux_view_creation}.

\begin{lem}
\label{lem:auxiliary_view}
Let $T$ be  a view tree 
and $M$ 
the time to materialize the views in $T$. 
The views in the view tree $\textsc{AuxView}(\cdot,T)$ can be materialized in time 
$\bigO{M}$.
\end{lem}

\begin{proof}
Assume that the parameters of the procedure 
$\textsc{AuxView}$ are
$\calZ$ and $T$.
Let $V(\calS)$ be the root of $T$.
If  the condition in Line~3 of the 
 procedure \textsc{AuxView} 
 does not hold, 
the procedure
returns $T$ (Line~5).
Otherwise, 
it returns a view tree $T'$ that results from 
$T$ by adding  a view 
$V'(\anc(Z))$ on top of $V(\calS)$ (Line~{4}).
Since $\anc(Z) \subset \calS$, 
$V'(\anc(Z))$ results from $V(\calS)$
by aggregating away 
the variables in $\calS - \anc(Z)$.
Since  the size of $V(\calS)$ must be $\bigO{M}$
and the variables can be aggregated away in time 
linear in the size of $V(\calS)$, the overall time to materialize 
the views in the output tree $T'$ is $\bigO{M}$.
\end{proof}

 The following lemma says that if the input to 
the procedure 
 \textsc{BuildVT} in Figure~\ref{fig:factorized_view_tree_algo}
 represents a free-connex query, 
the procedure outputs a view tree
 whose views can be materialized in time linear
 in the database size.

\begin{lem}
\label{lem:fact_view_tree}
Let $\omega$ be a canonical variable order, 
$X$ a node in $\omega$, $N$ the size of the leaf relations 
of $\omega$, and $\calF$ a set of variables. 
If the query 
$Q_X(\calF') = \text{ join of }\atoms(\omega_X)$ 
with $\calF' = \calF \cap (\anc(X) \cup \vars(\omega_X))$
is free-connex, the views in the view tree
$\textsc{BuildVT}(\cdot,\omega_X, \calF)$ 
 can be materialized in $\bigO{N}$ time. 
\end{lem}

\begin{proof} 
The proof is by induction over the structure of the variable order 
$\omega_X$. 

\smallskip 
{\em Base case}:
Assume that $X$ is a single atom $R(\calX)$. In this case,
the procedure $\textsc{BuildVT}$ returns this atom, which can obviously be
materialized in $\bigO{N}$ time.    

\medskip 
{\em Inductive step}:
Assume that $X$ is a variable with child nodes $X_1, \ldots, X_k$
and $Q_X(\calF') = \text{ join of }\atoms(\omega_X)$ a free-connex query. 
Let $\calF_i = 
\calF \cap(\anc(X_i) \cup \vars(\omega_{X_i}))$, for $i\in [k]$.

\smallskip
We first show that for each $i \in [k]$:  
\begin{align}
\label{subtree_freeconnex}
Q_{X_i}(\calF_i) = \text{join of }\atoms(\omega_{X_i})
\text{ is free-connex.} 
\end{align}

\noindent
 An $\alpha$-acyclic query is free-connex  
if and only if after adding an atom $R(\calX)$, where $\calX$ is the set of free
variables, the query 
remains $\alpha$-acyclic~\cite{BraultPhD13}.
A query is $\alpha$-acyclic if it has a (not necessarily free-top) variable order 
with static width $1$~\cite{OlteanuZ15,BeeriFMY83}. 
Let $Q_{X}'$ be the query that results from 
$Q_{X}$ by adding a new atom $R(\calF')$. 
Likewise, let $Q_{X_i}'$ be the query that we obtain 
from $Q_{X_i}$ by adding a new atom $R_i(\calF_i)$,
for $i \in [k]$.
Since $Q_{X}$ is free-connex, there 
must be a variable order $\omega' = (T_{\omega'}, \dep_{\omega'})$ 
for $Q_{X}'$, 
such that $\fw(\omega') = 1$. 
In the following, we turn $\omega'$ into a variable 
order $\omega_{i}' = (T_{\omega_i'}, \dep_{\omega_i'})$ 
for $Q_{X_i}'$ with $\fw(\omega_{i}') = 1$,
for $i \in [k]$. From this, it follows 
that $Q_{X_i}$ is free-connex, for each $i \in [k]$.
To obtain $\omega_{i}'$, we traverse 
$\omega'$ bottom-up and
eliminate all variables and atoms (including
$R(\calF')$) that
do not occur in $Q_{X_i}'$.
When eliminating a node $Y$ with a parent node
 $Z$, we append the children of $Y$ to $Z$. If $Y$ does 
 not have any parent node, 
the subtrees rooted at its children nodes  become independent.
 Finally, we append $R_i(\calF_i)$ under the lowest 
 variable $Y$ in the obtained variable order such that 
 $Y$ is included in $\calF_i$.  In the following we show
 that for $i \in [k]$:
 
 \begin{enumerate}
 \item $\omega_i'$ is a valid variable order for $Q_{X_i}'$.
 \item $\fw(\omega_{i}') = 1$.
 \end{enumerate}
 
 \textit{(1) $\omega_i'$ is a valid variable order for $Q_{X_i}'$,
 for $i \in [k]$}:
 The following property follows from the 
 construction of the variable order $\omega_i'$:
 
 \smallskip
 \noindent
 $(*)$:  Any two variables
 in $\vars(Q_{X_i}')$ that are on the same 
  root-to-leaf path in $\omega'$ remain on the same 
 root-to-leaf path in $\omega_{i}'$.

 \smallskip
 \noindent
Each atom $K(\calX)$ in $\atoms(Q_{X_i}') - \{R_i(\calF_i)\}$
is also an atom in $\atoms(Q_{X}')$. Hence, the variables in 
$\calX$ must be on the same root-to-leaf path in
$\omega'$. Due to $(*)$, they also 
must be on the same root-to-leaf path in 
$\omega_{i}'$. 
It remains to show that all variables 
in $\calF_i$ are on the same root-to-leaf path
in $\omega_{i}'$. 
In the canonical variable order $\omega$, each
variable in $\calF' -\{X\}$ is either above or below $X$. Hence, 
$X$ depends on all variables 
in $\calF'-\{X\}$, which means that all variables in $\calF'$ 
must be on the same root-to-leaf path 
in $\omega'$. Due to $\calF_i \subseteq \{X\} \cup \calF'$ and 
Property ($*$), all variables in $\calF_i$ 
must be  on the same  root-to-leaf path in 
$\omega_{i}'$.

\smallskip
\textit{(2) $\fw(\omega_{i}') = 1$, for $i \in [k]$}:
Let $Y \in \vars(\omega_i')$ for some $i \in [k]$.
We need to show that 
$\rho^{*}_{Q_{X_i}'}(\{Y\} \cup \dep_{\omega_i'}(Y)) = 1$.
By Proposition~\ref{prop:rho_rhostar}, it suffices to show that 
$Q_{X_i}'$ contains an atom that covers 
$\{Y\} \cup \dep_{\omega_{i}'}(Y)$, i.e., whose schema includes 
the latter set.
By construction, $Y$ must be included in $\omega'$.
First, observe that any two variables that are not 
dependent in $Q_{X}'$ cannot be dependent in 
$Q_{X_i}'$. Moreover, each variable $Z$ included 
in the root path 
of $Y$ in $\omega_i'$, is included in the
root path of $Y$ in $\omega'$. Hence: 

\smallskip 
\noindent
($**$) $\{Y\} \cup \dep_{\omega_i'}(Y) 
\subseteq (\{Y\} \cup \dep_{\omega'}(Y)) \cap \vars(Q_{X_i}')$.

\smallskip 
\noindent
Due to  $\fw(\omega') = 1$ and Proposition~\ref{prop:rho_rhostar}, 
there must be an atom 
$K(\calX) \in \atoms(Q_X')$ such that 
$\{Y\} \cup \dep_{\omega'}(Y) \subseteq \calX$.
First, assume that $K(\calX) \neq R(\calF')$. 
Since $\calX$ includes $Y$, the atom $K(\calX)$ must be under 
the variable $Y$ in $\omega_{X_i}$, which means that 
$\atoms(Q_{X_i}')$ includes $K(\calX)$. Due to Property 
($**$), $K(\calX)$ covers 
$\{Y\} \cup \dep_{\omega_i'}(Y)$.
Now assume that $K(\calX) = R(\calF')$.
This means that  $\{Y\} \cup \dep_{\omega'}(Y) \subseteq \calF'$.
By Property ($**$), 
$\{Y\} \cup \dep_{\omega_{i}'}(Y) \subseteq
(\{Y\} \cup \dep_{\omega'}(Y)) \cap  \vars(Q_{X_i}') \subseteq 
\calF' \cap \vars(Q_{X_i}')$. 
Since  
$\calF_i = \calF' \cap \vars(Q_{X_i}')$, 
$R_i(\calF_i)$ covers 
$\{Y\} \cup \dep_{\omega_i'}(Y)$.

\smallskip 
This completes the proof of \eqref{subtree_freeconnex}.

\smallskip
Let $T = \textsc{BuildVT}(\cdot,\omega_X, \calF)$.
To construct  the view tree $T$, 
the procedure \textsc{BuildVT} first constructs the view trees  
$\{T_i\}_{i \in [k]}$ with $T_i = \textsc{BuildVT}(\cdot ,\omega_{X_i}, \calF)$
for each $i \in [k]$ (Line~2).
By Property~\eqref{subtree_freeconnex} and the induction hypothesis, 
the views in each view tree $T_i$
can be materialized in $\bigO{N}$ time.  
In the following we show that all views in $T$ can be materialized 
in $\bigO{N}$ time.
We distinguish whether $\anc(X) \cup \{X\}$ is 
included in $\calF$ (Lines~4-7)
or not (Lines~8-10): 

\smallskip
\textit{Case $\anc(X) \cup \{X\} \subseteq \calF$:}
In this case,  it holds $T = \textsc{NewVT}(\cdot,\calF_X, subtrees)$, where 
$subtrees = \{\, \textsc{AuxView}(X_i, T_i) \,\}_{i\in[k]}$
and $\calF_X = \anc(X) \cup \{X\}$.
The procedures   
\textsc{NewVT} and 
\textsc{AuxView} are given in 
Figures~\ref{fig:view_creation} and 
\ref{fig:aux_view_creation}, respectively. 
By the induction hypothesis and 
Lemma~\ref{lem:auxiliary_view}, the views
in $subtrees$ can be materialized in  
$\bigO{N}$ time. 
Let $V_1'(\calF_1'), \ldots , V_k(\calF_k')$ be the roots
of the trees in $subtrees$. 
The overall size of these root views 
must be $\bigO{N}$.
Observation \ref{obs:view_schemas}.(1) implies that 
for any $i,j \in [k]$ with $i \neq j$, it holds 
$\calF_i'  \cap \calF_i' = \calF_X$. Hence, 
the query 
$V_X(\calF_X) = V_1(\calF_1'), \ldots , V_k(\calF_k')$
is $\dfw_0$-hierarchical.  
Since $\calF_X = \anc(X) \cup \{X\} \subseteq \bigcap_{i \in [k]}\calF_i'$,
it follows from 
Lemma~\ref{lem:new_view_tree} that the views in $T$
can be materialized in $\bigO{N}$ time.

\smallskip
\textit{Case $\anc(X) \cup \{X\} \not\subseteq \calF$}:
n this case, we have $T = \textsc{NewVT}(\cdot,\calF_X, subtrees)$,
where $\calF_X = \anc(X) \cup (\calF \cap \vars(\omega_X))$ and 
$subtrees = \{T_i\}_{i \in [k]}$.
Let $V_i'(\calF_i')$ be the root of $T_i$,
for $i \in [k]$. By the definition of the procedure 
\textsc{NewVT}, the tree $T$ results from the trees 
 $\{T_i\}_{i \in [k]}$ by adding a new root view
 defined by 
 $V_X(\calF_X) = V_1'(\calF_1'), \ldots , V_k'(\calF_k')$.
It follows from  Observation \ref{obs:view_schemas}.(2),
that $V_X(\calF_X)$ can be rewritten 
as 
$V_X(\calF_X) = \text{ join of } \atoms(\omega_X)$.
  We show that the view $V_X(\calF_X)$ 
 can be materialized in  
$\bigO{N}$ time. 
The set
$\atoms(\omega_X)$ must contain an atom $R(\calY)$ with 
$\calF_X \subseteq \calY$ (Lemma 35 in~\cite{Trade_Offs_arxiv}).
Hence, we can easily materialize 
the view $V_X(\calF_X)$ by 
using the InsideOut algorithm 
\cite{FAQ:PODS:2016}
to aggregate away all variables that are not included in 
$\calF_X$. Since
the query defining the view $V_X(\calF_X)$ is ($\alpha$-)acyclic, 
the whole computation takes 
$\bigO{N}$ time.
\end{proof}

The next lemma upper bounds the time to materialize the views constructed by the procedure \textsc{BuildVT} in Figure~\ref{fig:factorized_view_tree_algo} for a variable order 
$\omega^{keys}_X$. This variable order has the same structure as $\omega_X$ yet each atom $R(\calY)$ is replaced
by the light part $R^{keys}(\calY)$ of relation $R$ partitioned 
on the variable set $keys$ (cf.\@~Section~\ref{sec:skew_aware_trees}).

\begin{lem}
\label{lem:fact_view_tree_light_case}
Given a canonical variable order $\omega$, 
a node $X$ in $\omega$, the size $N$ of the leaf relations
in $\omega$, $keys = \anc(X) \cup \{X\}$, 
$\calF \subseteq \vars(\omega)$, and 
$\eps \in [0,1]$. 
The view tree     
$\textsc{BuildVT}(\cdot,\omega^{keys}_X, \calF)$ 
can be materialized in 
$\bigO{N^{\max\{1, 1+(\xi(\omega^{keys}, X,\calF)-1)\eps\}}}$ time. 
\end{lem}

\begin{proof}
For a node $X$ in $\omega^{keys}$, we set  
$$m_X = \max\{1, 1+(\xi(\omega^{keys}, X,\calF)-1)\eps\}.$$
The proof is by induction on the structure of $\omega^{keys}_X$.

\begin{itemize}[label={}, leftmargin=0pt]
  
  \setlength\itemsep{0.4em}

  \item \hspace{1em} {\em Base case}: 
  If $\omega^{keys}_X$ is a 
   single atom $R(\calX)$,
   the procedure $\textsc{BuildVT}$
    returns this atom, which 
    can be materialized in 
  $\bigO{N}$ time. Since $m_X \geq 1$,
  this completes the base case.

  \item \hspace{1em} {\em Inductive step}:
  Assume $X\in\vars(\omega^{keys})$ and has child nodes 
  $X_1, \ldots , X_k$.
  The procedure first calls 
  $\textsc{BuildVT}(\cdot,\omega^{keys}_{X_i}, \calF)$
  for each $i \in [k]$ and produces the view trees
  $\{T_i\}_{i \in [k]}$ (Line~2). By induction hypothesis,
  we need $\bigO{N^{m_{X_i}}}$
  time to materialize the views in each view tree $T_i$
  with $i \in [k]$. 
  The procedure $\textsc{BuildVT}$
   distinguishes whether 
   $(\anc(X) \cup \{X\}) \subseteq \calF$ (Lines~4-7) or not
   (Lines~8-10).

   \item \hspace{1em} \textit{Case $(\anc(X) \cup \{X\}) \subseteq \calF$:}
   The view tree $T$ returned by  the procedure 
   $\textsc{BuildVT}$
   is $\textsc{NewVT}(\cdot,\calF_X,$ $subtrees)$, where 
   $subtrees$ is defined as 
   $\{\, \textsc{AuxView}(X_i, T_i) \,\}_{i\in[k]}$
   and $\calF_X = \anc(X) \cup \{X\}$.
   By induction hypothesis
   and Inequality \eqref{eq:subtrees}, the overall time 
   to materialize the views in $\{T_i\}_{i \in [k]}$ 
   is $\bigO{N^{m_X}}$. For each view tree $T_i$, 
     $\textsc{AuxView}(X_i, T_i)$
    adds at most one view with schema 
    $\anc(X_i)$ on top of the root view of $T_i$. 
    Then,  $\anc(X_i)$ is 
   a subset of the schema of the root view of $T_i$.
   Since the size of the root view of $T_i$ must be bounded by
   $\bigO{N^{m_X}}$, the view added by 
   $\textsc{AuxView}$ can be materialized in 
   $\bigO{N^{m_X}}$ time.
   Assume that  $V_1(\calF_1), \ldots , V_k(\calF_k)$ are the roots
   of the view trees in $subtrees$. In case $k= 1$ and $\calF = \calF_i$,
   $\textsc{NewVT}(V_X,\calF_X, subtrees)$
   returns $V_1(\calF_1)$; otherwise, it returns 
   a view tree that has $V_X(\calF_X) = V_1(\calF_1), \ldots , V_k(\calF_k)$
   as root view and $subtrees$ as subtrees. 
   By the definition of 
   \textsc{AuxView},
   it holds $\calF_i \cap \calF_j = (\anc(X) \cup \{X\}) = \calF_X$
   for any  $i,j \in [k]$.
   Hence, the view $V_X(\calF_X)$ can be computed by iterating over 
   the tuples  in a view $V_i(\calF_i)$ with $i \in [k]$  
   and filtering out those tuples that do not have matching tuples 
   in all views
   $V_j(\calF_j)$ with $j \in [k]-\{i\}$. 
   Since the size of $V_i(\calF_i)$ is   
   $\bigO{N^{m_X}}$ and materialized views allow constant time lookups, 
   the view $V_X(\calF_X)$ can be computed 
   in $\bigO{N^{m_X}}$ time. It follows that    
   the view tree $T$ 
   returned by $\textsc{BuildVT}$ can
   be materialized in $\bigO{N^{m_X}}$ time.
   This completes the inductive step for this case.   
   
   \item \hspace{1em} \textit{Case $(\anc(X) \cup \{X\}) \not\subseteq \calF$:}
   The procedure $\textsc{BuildVT}$ sets 
    $\calF_X =\anc(X) \cup (\calF \cap \vars(\omega_X^{keys}))$ and
   $subtrees$ $= \{T_i\}_{i \in [k]}$. 
   The view tree $T$ returned by the procedure $\textsc{BuildVT}$
    is $\textsc{NewVT}(\cdot, \calF_X, subtrees)$. 
    We show that all views in the view tree  $T$ 
    can be materialized 
   in $\bigO{N^{m_X}}$ time.
   We analyze 
   the steps in $\textsc{NewVT}$. In case $subtrees$
   consists of a single tree $T'$ such that the schema 
   of the root view of $T'$ is $\calF_X$, the procedure 
   $\textsc{NewVT}$ returns the view tree $T'$. By induction hypothesis
   and Inequality \eqref{eq:subtrees}, 
   the views in $T = T'$ can be materialized in $\bigO{N^{m_X}}$ time.
   Otherwise, let  
   $V(\calF_i)$ be the root view of $T_i$, for $i \in [k]$. 
   The tree $T$ returned by $\textsc{NewVT}$ consists
   of the root  view 
   $$V_X(\calF_X) = V_1(\calF_1), \ldots ,V_k(\calF_k)$$
   with subtrees $\{T_i\}_{i \in [k]}$. By induction hypothesis
   and Inequality \eqref{eq:subtrees}, 
   the views in the trees $\{T_i\}_{i \in [k]}$ can be materialized 
   in $\bigO{N^{m_X}}$ time.  
   It suffices to show that 
   $V_X(\calF_X)$ can be materialized in $\bigO{N^{m_X}}$ time. 
   Using Observation \ref{obs:view_schemas}.(2),
   we rewrite the view $V_X(\calF_X)$ using the leaf atoms of $\omega^{keys}_X$:
   $$V_X(\calF_X) = \text{ join of }\atoms(\omega^{keys}_X).$$
   We materialize the view $V_X(\calF_X)$ as follows. 
   Using the Inside\-Out algorithm
   \cite{FAQ:PODS:2016},
   we first aggregate away 
   all variables  in  
   $\vars(\omega^{keys}_X) - \calF_X$ 
   that are not above a variable from $\calF_X$. Since the view $V_X$ is defined by an $\alpha$-acyclic 
   query, the time required by this step  
   is $\bigO{N}$.  
   Let 
   $V_X'(\calF_X) = R_1(\calF_1), \ldots ,R_k(\calF_k)$
   be the resulting query.
   We distinguish between two subcases.

   \item \hspace{1em} \textit{Subcase 1:
For all $R_i(\calF_i)$, it holds  $\calF_i \cap \vars(\omega^{keys}_X) = \emptyset$  
} 

This means that 
$\calF_X$ and each $\calF_i$ are contained 
in $\anc(X) \cup \{X\}$. 
Since $\omega^{keys}$ is canonical, the inner nodes 
of each root-to-leaf path are the variables of an atom. Hence,  
there is an $R_i(\calF_i)$ with $i \in [k]$
such that $\calF_i$ subsumes $\calF_X$ and each $\calF_j$ with 
$j \in [k]$.  
Thus, we can  materialize the result of 
$V_X'(\calF_X)$ in $\bigO{N}$ time 
by iterating over the tuples in 
$R_i$ and doing 
constant-time lookups in the other relations.   

\item \hspace{1em} \textit{Subcase 2:
There is an $R_i(\calF_i)$ with $\calF_i \cap \vars(\omega^{keys}_X) \neq \emptyset$}
 
 Let $\boldsymbol{\lambda} = (\lambda_{R_i(\calF_i)})_{i \in [k]}$
be an edge cover 
of $\calF_X \cap  \vars(\omega^{keys}_X)$
with $\sum_{i \in [k]} \lambda_{R_i(\calF_i)}
= \rho_{V_X'}^{\ast}(\calF_X \cap \vars(\omega^{keys}_X))$.
Since $V_X'$ is hierarchical, 
we can assume that 
each $\lambda_{R_i(\calF_i)}$ is either $0$ or $1$
(Proposition~\ref{prop:rho_rhostar}).  
There must be at least 
one $R_i(\calF_i)$ with $\lambda_{R_i(\calF_i)} = 1$, otherwise 
there cannot be any variable from $\calF_X$ in $\omega^{keys}_X$ and 
we fall back to Subcase 1. 
Since $\omega_X^{keys}$ is canonical,
for each atom $R_i(\calF_i)$ with 
$\lambda_{R_i(\calF_i)} = 0$, there must be 
a {\em witness atom} $R_j(\calF_j)$ such that 
$\lambda_{R_j(\calF_j)} = 1$ and
$\calF_i \subseteq \calF_j$.
The atoms $R_1(\calF_1), \ldots , R_k(\calF_k)$
can still contain variables not included in $\calF_X$.
Each such variable appears above at least one 
variable from $\calF_X$ in $\omega^{keys}_X$.
We first compute the
result of the view 
$V_X''(\bigcup_{i \in [k]}\calF_k) = R_1(\calF_1), \ldots , R_k(\calF_k)$
 as follows. 
We choose 
an arbitrary atom 
 $R_i(\calF_i)$ with
$\lambda_{R_i(\calF_i)} = 1$
 and iterate over the tuples in $R_i$. 
 For each such tuple, we iterate over the matching tuples in the other atoms mapped 
to $1$ by $\boldsymbol{\lambda}$.
For atoms that are not mapped to $1$, it suffices
to do constant-time lookups while iterating over
one of their witnesses.   
To obtain the result of $V_X'$ from $V_X''$, we 
aggregate away all variables 
not included in $\calF_X$.
Recall that for each atom $R_i(\calF_i)$, there is an
atom in $\atoms(\omega^{keys}_X)$ that is the light part 
of a relation partitioned on $keys= \anc(X) \cup \{X\}$.
Hence, each tuple in the relation of an atom 
mapped to $1$ by   
$\boldsymbol{\lambda}$ 
can be paired with 
 $\bigO{N^{\eps}}$ tuples
in the relation of any other atom mapped to $1$.
This means that the time to materialize 
$V_X''$ and hence $V_X'$ is
$\bigO{N^{m'}}$ where 
$m' = 1+(\rho_{V_X'}^{\ast}(\calF_X \cap \vars(\omega^{keys}_X))-1)\eps$.
Since $V_X'$ results from 
$V_X$ by aggregating away variables in  
$\vars(\omega^{keys}_X) -\calF_X$, 
we have 
$\rho_{V_X'}^{\ast}(\calF_X \cap \vars(\omega^{keys}_X))
=
\rho_{V_X}^{\ast}(\calF_X \cap \vars(\omega^{keys}_X))$.
It follows from  $\anc(X) \cup \{X\} \not\subseteq \calF$
that    
$\rho_{V_X}^{\ast}(\calF_X \cap \vars(\omega^{keys}_X))$ $=$
$\xi(\omega^{keys}, X,\calF)$. Hence, the view $V_X'$ can be materialized in 
$\bigO{N^{1+(\xi(\omega^{keys}, X,\calF)-1)\eps}}$
time.

\end{itemize}

We sum up the analysis for the case 
$(\anc(X) \cup \{X\}) \not\subseteq \calF$: the initial aggregation step
and the computation in Subcase 1 take $\bigO{N}$ time;
the computation in Subcase 2  takes
$\bigO{N^{1+(\xi(\omega^{keys}, X,\calF)-1)\eps}}$ time.
Thus, given $m_X = \max\{1, 1+(\xi(\omega^{keys}, X,\calF)-1)\eps\}$,
the time to materialize the result of $V_X$
is $\bigO{N^{m_X}}$.
This completes the inductive step in case 
$(\anc(X) \cup \{X\}) \not\subseteq \calF$.
\end{proof}

The next lemma states that the view trees 
returned by the procedure
\textsc{IndicatorVTs} from Figure \ref{fig:skew_aware_views}
can be materialized in time linear in the database size. 

\begin{lem}
\label{lem:indicators}
Let $\omega$ be a canonical variable order, 
$X$ a variable in $\omega$, and 
$N$ the size of the leaf relations in the variable order $\omega$. 
The views in the view trees returned by 
$\textsc{IndicatorVTs}(\omega_X)$ 
can be materialized in  $\bigO{N}$ time.
 \end{lem}

\begin{proof}
In Lines~3 and 4, the procedure constructs 
the view tree 
 $alltree$, which is  defined by $\textsc{BuildVT}({\text{``All\,''}},\omega_X,keys)$ 
and the view tree
 $ltree = \textsc{BuildVT}({\text{``\,L\,''}},\omega^{keys}_X,keys)$,
where  $keys$ consists of the set $\anc(X) \cup \{X\}$. The variable order
$\omega^{keys}_X$ results from $\omega_X$
by replacing each atom $R(\calX)$ by the atom 
$R^{keys}(\calX)$, 
which denotes the light part of relation $R$ partitioned on $keys$.
These light parts can be computed in $\bigO{N}$ time.
The queries 
$Q_X(keys) = \text{ join of }\atoms(\omega_X)$
and
$Q_X^{keys}(keys) = \text{ join of }\atoms(\omega_X)$
are free-connex. 
By using Lemma~\ref{lem:fact_view_tree}, we derive that 
the views in $alltree$ and $ltree$
can be materialized in 
$\bigO{N}$ time. Hence, the roots $allroot$
and $lroot$ of $alltree$ and $ltree$, respectively,
can be materialized in $\bigO{N}$ time as well.
It remains to analyze the time to materialize the views 
in the view tree $htree = \textsc{NewVT}(\cdot,\calF,\{allroot,\neg lroot\})$ (Line~7).
It follows from Observation \ref{obs:view_schemas}.(1) that
$V(\calF) = allroot, \nexists lroot$ is $\dfw_0$-hierarchical.
By using 
 Lemma \ref{lem:new_view_tree},
 we derive that the views in $htree$ 
   can be materialized in $\bigO{N}$ time. 
 Overall, all  views in the view trees 
 $(alltree, ltree,htree)$
  can be materialized in 
 $\bigO{N}$ time.
\end{proof}

We use Lemmas 
\ref{lem:new_view_tree}-\ref{lem:indicators}
to show an upper bound 
on the time to materialize the views 
in any tree produced by
the procedure $\tau$ in Figure~\ref{fig:view_forest_main_algo}.

\begin{lem}
\label{lem:preproc_rho}
Let $\omega$ be a canonical variable order, 
$X$ a node in $\omega$, $\calF \subseteq \vars(\omega)$, 
$N$ the size of the leaf relations in $\omega$,
and 
$\eps \in [0,1]$.
The views in the trees 
returned by  
$\tau(\omega_X, \calF)$
can be materialized in 
$\bigO{N^{\max\{1, 1+(\xi(\omega, X,\calF)-1)\eps\}}}$
time.
\end{lem}

\begin{proof}
For simplicity, we set
$$m = \max\{1, 1+(\xi(\omega, X,\calF)-1)\eps\}.$$
 The proof is by induction on the structure of 
$\omega_X$.

\medskip 
{\em Base case}:
Assume that $\omega_X$ is a 
 single atom $R(\calX)$. In this case, the procedure $\tau$ returns 
 this atom (Line~1).  
The atom can obviously be materialized in 
$\bigO{N}$ time. It holds 
 $\xi(\omega,X,\calF) = 0$, since $\omega_X$ does not contain 
any node which is a variable.
This means that  
$m= 1$. Then, the 
statement in the lemma holds for the base case.

\medskip 
{\em Inductive step}: 
 Assume that $X$ is a variable with children 
 nodes $X_1, \ldots , X_k$.
 Let  $keys = \anc(X) \cup \{X\}$, 
 $\calF_X = \anc(X) \cup (\calF \cap \vars(\omega_X))$, and
 $Q_X(\mathcal{F}_X) = \text{join of } \atoms(\omega)$.
 Following the control flow in $\tau(\omega_X,\calF)$, we make
 a case distinction.

\smallskip
\textit{Case 1:  
mode = `static' $\wedge$ $Q_X(\calF_X)$ is free-connex or mode = `dynamic' $\wedge$ $Q_X(\calF_X)$ is $\dfw_0$-hierarchical (Lines~5-7)}:

The procedure $\tau$ returns the view tree $\textsc{BuildVT}({\text{``\,V\,''}},\omega_X, \calF_X)$ (Line~7).
Since $\dfw_0$-hierarchical queries are in particular free-connex, 
it follows from  Lemma \ref{lem:fact_view_tree} that
$\textsc{BuildVT}({\text{``\,V\,''}},\omega_X, \calF_X)$  
 can be materialized 
in time $\bigO{N}$. This completes the inductive step 
for Case 1.

\smallskip
\textit{Case 2:  Case 1 does not hold and $X \in \calF$ (Lines~8-11)}:
\newline
The set of view trees $\tau(\omega_X,\calF)$ is defined as follows:
for each set $\{T_i\}_{i\in [k]}$ with $T_i \in \tau(\omega_{X_i},\calF)$, 
the set $\tau(\omega_X,\calF)$ contains 
the view tree  
$\textsc{NewVT}(\cdot, keys, \{\hat{T}_i\}_{i \in [k]})$, 
where $\hat{T}_i = \textsc{AuxView}(X_i,T_i)$ for each  
$i \in [k]$.
We consider one such set
$\{T_i\}_{i \in [k]}$ of view trees.  
By induction hypothesis, the views in each 
$T_i$ can be materialized in
$\bigO{N^{\max\{1, 1+(\xi(\omega, X_i,\calF)-1)\eps\}}}$ 
time.
It follows from Inequality \eqref{eq:subtrees}, that 
the overall time to materialize 
the views in these view trees is 
$\bigO{N^{m}}$.
By Lemma~\ref{lem:auxiliary_view}, 
the views in each view tree $\hat{T}_i$ with $i \in [k]$
can be materialized in 
$\bigO{N^{m}}$ time. 
Let $V_i(\calF_i)$ be the root view of $\hat{T}_i$,
for $i\in [k]$.
 It follows from Observation \ref{obs:view_schemas}.(1) that  
$keys$ is included in each $\calF_i$ 
and the query 
$V_X(keys) = V_1(\calF_1), \ldots , V_k(\calF_k)$
is $\dfw_0$-hierarchical.
Hence, it follows from Lemma~\ref{lem:new_view_tree} 
that the views in the view tree $\textsc{NewVT}(\cdot, keys, \{\hat{T}_i\}_{i \in [k]})$  
can be materialized in time 
$\bigO{N^{m}}$. 
This completes the inductive step in this case.

\smallskip
\textit{Case 3:  Case 1 does not hold and $X \not\in \calF$ (Lines~12-17)}:
\newline
The procedure $\tau$ first calls 
 $\textsc{IndicatorVTs}(\omega_X)$ (Line~12) given in Figure 
 \ref{fig:skew_aware_views}, which constructs  the indicator view trees
 $alltree$, $ltree$, and $htree$. 
By Lemma \ref{lem:indicators}, the views in these view trees 
 can be materialized in $\bigO{N}$ time.
 Let $H_X$ be the root of $htree$.
 The only difference between the construction of the view 
 trees returned  in \textit{Case~2} above and the view trees 
 in the set $htrees$ defined in Lines~13-15 is that 
 the roots of the view trees in the latter set have 
 $\exists H_X$ as additional child view.  
 By the same argumentation
 as in \textit{Case 2}, it follows 
 that the views in $htrees$ can be materialized
 in $\bigO{N^{m}}$ time. 
 Let $ltree = \textsc{BuildVT}({\text{``\,V\,''}},\omega_X^{keys}, \calF_X)$
 as defined in Line~16, where 
$\omega_X^{keys}$ shares the same structure as $\omega_X$,
but each atom $R(\calX)$ is replaced with $R^{keys}(\calX)$
denoting the light part of relation $R$ partitioned on $keys$.
It follows from Lemma~\ref{lem:fact_view_tree_light_case}
that the views in the view tree $ltree$ can be materialized in $\bigO{N^m}$
time. Thus, all views of the views trees in the set
$htrees \cup \{ltree\}$ can be materialized in 
$\bigO{N^m}$ time. This completes the inductive
step for \textit{Case~3}. 
\end{proof}


Using Lemma~\ref{lem:preproc_rho}, we prove Proposition \ref{prop:preproc_time}.
%
Without loss of generality, assume that  $\omega$ consists of a single connected component. 
Otherwise, we apply the same reasoning for each 
connected component. 
We also assume that
$Q$ contains at least one atom with non-empty schema. 
Otherwise, $\tau(\omega,\emptyset)$ returns 
a single atom with empty schema, which can obviously be materialized in constant time. 

By Lemma~\ref{lem:preproc_rho}, 
the view trees generated by 
$\tau(\omega,\calF)$ can be materialized in time
$\bigO{N^{\max\{1, 1+ (\xi(\omega, X,\calF)-1) \eps\}}}$,
where $X$ is the root variable of $\omega$.
It remains to show:
\begin{align}
\label{eq:bridge} 
\max\{1, 1+ (\xi(\omega, X,\calF)-1) \eps\} \leq 1+ (\fw-1) \eps. 
\end{align} 
First, assume  that 
$\xi(\omega, X,\calF) =0$.
This means that $\max\{1, 1+ (\xi(\omega, X,\calF)-1) \eps\} = 1$.
Since $Q$ contains at least one atom 
with non-empty schema, we have $\fw \geq 1$.
Thus, Inequality~\eqref{eq:bridge} 
holds. 
Now, let  $\xi(\omega, X,\calF) = \ell \geq 1$. We show that 
$\fw \geq \ell$.
It follows from $\xi(\omega, X,\calF) = \ell$ that $\omega$
contains 
a bound variable $Y$ such that 
$\rho^{\ast}_{Q}(B) = \ell$, where
$B = \vars(\omega_Y) \cap \calF$.
The inner nodes of each root-to-leaf path of a canonical variable order 
are the variables of an atom. 
Hence, for each variable $Z \in B$, there 
must be an atom in $Q$ that contains both $Y$ and $Z$.
This means that $Y$ and $Z$ depend on each other.
Let $\omega' = (T,\dep_{\omega'})$ be an arbitrary free-top variable order
for $Q$. Since all variables in $B$ depend on $Y$,
each of them must be on a root-to-leaf path with $Y$. Since 
$Y$ is bound and the variables in $B$ are free, the set 
$B$ must be included in $\anc(Y)$.
Hence,  $B \subseteq \dep_{\omega'}(Y)$. 
This means $\rho^{\ast}_{Q}(\{Y\} \cup \dep_{\omega'}(Y)) \geq \ell$,
which implies $\fw(\omega') \geq \ell$.
It follows $\fw \geq \ell$. 

\section{Proofs of the Results in Section~\ref{sec:enumeration}}
\label{appendix:enumeration}

\begin{citedprop}[\ref{prop:enumeration}]
The tuples in the result of a hierarchical query $Q(\calF)$ over a database of size $N$ can be enumerated with $\bigO{N^{1-\eps}}$ delay using the view trees constructed by $\tau(\omega,\calF)$ for a canonical variable order $\omega$ for $Q$.
\end{citedprop}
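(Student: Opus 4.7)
\bigskip

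The plan is to prove the proposition by structural induction on the canonical variable order $\omega$, mirroring the case analysis of the procedure $\tau$ in Figure~\ref{fig:view_forest_main_algo}, and then to combine the resulting per-tree enumeration bounds using the \textsc{Union} and \textsc{Product} algorithms from Section~\ref{sec:union_product}. First I would state the invariant: for any subtree $\omega'$ of $\omega$ and any view tree $T \in \tau(\omega',\calF)$, a call $T.\mathit{open}(\tup{ctx})$ followed by successive calls $T.\mathit{next}()$ enumerates the tuples (over the free variables in $T$) that agree with $\tup{ctx}$ with delay $\bigO{N^{1-\eps}}$. Granted this invariant, the global enumeration is obtained by running \textsc{Union} on the set $\tau(\omega,\calF)$ (whose cardinality depends only on $Q$, not on $N$), and by Proposition~\ref{prop:equivalence} this produces exactly the distinct tuples of $Q(\calF)$ with multiplicities.

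The base case is immediate: for a leaf atom, the hash/index data structures of Section~\ref{sec:preliminaries} support lookup, enumeration and multiplicity retrieval in constant time. For the inductive step I would inspect the three branches of $\tau$. In the free-connex/$\dfw_0$-hierarchical branch (Lines 6--7), the root view of the tree constructed by \textsc{BuildVT} already covers all free variables in $T$ (Lemma~\ref{lem:fact_view_tree}), so the \textsc{next} call simply walks the iterator of that materialized view with constant delay. In the ``$X$ free'' branch (Lines 8--11), the root has constantly many children (in data complexity) each yielding a view tree for a sub-variable-order; the \textsc{Product} algorithm then achieves delay equal to the sum of the child delays, which is $\bigO{N^{1-\eps}}$ by the induction hypothesis.

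The main obstacle, and the only place where the factor $N^{1-\eps}$ is genuinely incurred, is the ``$X$ bound'' branch where one child of the root is the heavy indicator $\exists H_X$. Here I would argue as follows. The partition bound gives $|\pi_{keys} H_X|\leq N/\theta = N^{1-\eps}$, so by the grounding step of Figure~\ref{fig:enumeration-open} the $\mathit{open}$ call creates at most $N^{1-\eps}$ shallow copies of $T$, and each grounding copy restricts the residual view tree to a single value of $keys$. Each grounded copy has all of $X$'s ancestors and $X$ itself fixed, so by induction its \textsc{next} runs with $\bigO{N^{1-\eps}}$ delay; \textsc{Union} over the $\bigO{N^{1-\eps}}$ groundings then yields total delay $\bigO{N^{1-\eps}}\cdot\bigO{1}=\bigO{N^{1-\eps}}$ per call, since the per-grounding contribution to a single \textsc{next} iteration is in fact constant (lookup plus one iterator step) by the constant-time lookup property of the materialized siblings. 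Finally, combining with the light view tree $\mathit{ltree}$ (whose root enumerates with constant delay by the free-connex case applied to light parts) via \textsc{Union}, and lifting to the top level via one more application of \textsc{Union} over $\tau(\omega,\calF)$ and of \textsc{Product} across connected components of $\omega$, preserves the $\bigO{N^{1-\eps}}$ delay bound because only constantly many view trees and components are involved. Summing the multiplicities of duplicates across groundings is handled within the \textsc{Union} algorithm in time proportional to the number of inputs, which has already been charged; this gives the claimed delay.
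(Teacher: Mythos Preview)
Your inductive invariant does not close in the bound-variable branch when heavy indicators are \emph{nested}. You first invoke the induction hypothesis to say that each grounded copy $T^{(h)}$ has \textsc{next}-delay $\bigO{N^{1-\eps}}$, and then immediately assert that ``the per-grounding contribution to a single \textsc{next} iteration is in fact constant''. These two claims contradict each other: if the grounded subtree itself contains a further heavy indicator (as happens, for instance, with $\exists H_B(A,B)$ nested under $\exists H_A(A)$ in Figure~\ref{fig:view_tree_example_21}), then the inner \textsc{next} call is \emph{not} a single iterator step plus lookup --- it is another \textsc{Union} over up to $N^{1-\eps}$ inner groundings. Plugging the induction hypothesis honestly into the \textsc{Union} delay bound gives $\bigO{N^{1-\eps}}\cdot\bigO{N^{1-\eps}}=\bigO{N^{2(1-\eps)}}$, and in general $\bigO{N^{p(1-\eps)}}$ for $p$ nested heavy indicators along a root-to-leaf path.

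What is missing is the structural argument the paper supplies: for heavy indicators $\exists H_1(\calX_1),\ldots,\exists H_p(\calX_p)$ nested along one path, the schemas form a chain $\calX_1\subset\cdots\subset\calX_p$, and the parent views $V_i(\calX_i)$ satisfy $V_i\subseteq\pi_{\calX_i}V_{i+1}$ and $|V_i|\leq|\exists H_i|=\bigO{N^{1-\eps}}$. Because $\calX_j$ already contains all the partitioning variables of $\calX_1,\ldots,\calX_{j-1}$, the \emph{total} number of groundings at level $j$ (summed over all outer groundings) is bounded by $|V_j|=\bigO{N^{1-\eps}}$, not by a product. The overall \textsc{Union} delay is then $\sum_{j\leq p} n_j=\bigO{p\cdot N^{1-\eps}}=\bigO{N^{1-\eps}}$. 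This argument does not fit into a simple per-subtree induction with invariant ``delay $\bigO{N^{1-\eps}}$''; it requires tracking the total grounding count along a path. Your proof sketch is fine for the base case, the free-connex branch, the free-variable branch, the light tree, and the outermost \textsc{Union}/\textsc{Product}, but the nested-indicator case needs this additional ingredient.
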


Following Proposition~\ref{prop:equivalence}, the union of queries defined by the set of view trees constructed by $\tau(\omega,\calF)$ is equivalent $Q(\calF)$. We enumerate the tuples over $\calF$ from this set of view trees using the $\mathit{next}$ calls of these trees in the set.

We first discuss the case of one view tree. In case there are no indicator views, then the view tree consisting of a hierarchy of views admits constant delay~\cite{OlteanuZ15}. In the static case, this holds for free-connex hierarchical queries; in the dynamic case, this holds for $\dfw_0$-hierarchical queries (Section~\ref{sec:preprocessing-factorization}). 

The view subtrees constructed over the light parts of input relations only do not bring additional difficulty. 
By construction (Section~\ref{sec:preprocessing}), the root view $V$ of such a subtree $T$ contains all the free variables that are present in $T$. 
In this case, the $\mathit{open}$ and $\mathit{next}$ calls stop at $V$ and do not explore the children of $V$.
This means that for enumeration purposes, we can discard the descendants of $V$.

By grounding the heavy indicators in $T$, we obtain instances of $T$ that may represent overlapping relations. We next analyze the enumeration delay in the presence of heavy indicators as a function of the view tree instances of a view tree created for $Q$.

Consider one heavy indicator. Since its size is $\bigO{N^{1-\eps}}$, it may lead to that many view tree instances. From each instance, we can enumerate with constant delay, and we can also look up a tuple with schema $\calS$ in constant time. Given there are  $\bigO{N^{1-\eps}}$ such tuples, we can enumerate from $T$ with $\bigO{N^{1-\eps}}$ delay.

Consider $p$ heavy indicators $\exists H_1(\mathcal{X}_1),\ldots,$ $\exists H_p(\mathcal{X}_p)$ whose parents $V_1(\mathcal{X}_1),\ldots,V_p(\mathcal{X}_p)$ are along the same path in the view tree. Let us assume $V_i$ is an ancestor of $V_j$ for $i<j$. By construction, there is a total strict inclusion order on their sets of variables, with the indicator above having less variables than at a lower depth: $\mathcal{X}_1\subset\cdots\subset\mathcal{X}_p$. Each indicator draws its tuples from the input relations whose schemas include that of the indicator. There is also an inclusion between the parent views: $V_i \subseteq \pi_{\mathcal{X}_i} V_{i+1}, \forall i\in[p-1]$. This holds since $V_i$ is defined by the join of the leaves underneath, so the view $V_j$ that is a descendant of $V_i$ is used to define $V_i$ in joins with other views or relations.
The size of $V_i$ is at most that of $\exists H_i$ since they both have the same schema and the former is defined by the join of the latter with other views. Since the size of $\exists H_i$ is $\bigO{N^{1-\eps}}$, it follows that the size of $V_i$ is also $\bigO{N^{1-\eps}}$. 
When grounding $\exists H_i$, we create an instance for each tuple $t$ that is in both $\exists H_i$ and $V_i$: If $t$ were not in $V_i$, then there would be at least one sibling of $\exists H_i$ that does not have it. When opening the descendants of $V_i$ before enumeration, only these tuples in $V_i$ that also occur in $\exists H_i$ and in all its siblings can be extended at the descendant views, including all views $V_j$ for $j>i$.
The overall number of groundings for the $h$ heavy indicators is therefore $\bigO{N^{1-\eps}}$. Let $n_i$ be the number of instances of $\exists H_i$. Then, the delay for enumerating from the union of $\exists H_i$ instances is $\sum_{i\leq j\leq p} n_j$ using the \textsc{Union} algorithm, which also accounts for the delay incurred for enumeration from unions at instances of all $\exists H_j$ that are descendants of $\exists H_i$. The overall delay is that for the union of instances for $\exists H_1$: $\sum_{1\leq j\leq p} n_j =  \bigO{p\times N^{1-\eps}} = \bigO{N^{1-\eps}}$.

Consider again the $p$ heavy indicators, but this time their parents $V_1,\ldots,V_p$ are {\it not} all along the same path in the view tree. Each path is treated as in the previous case. 
We distinguish two cases. In the first case, there is no parent $V_i$ that is an ancestor of several other parents in our list. Let $W$ be a common ancestor of several parents. Then, the enumeration algorithm uses each tuple of $W$ (possibly extended by descendant views) as context for the instances of these parents. A next tuple is produced {\em in sequence} at each of these parents over their corresponding schemas. These tuples are then composed into a larger tuple over a larger schema at their common ancestor using the \textsc{Product} algorithm.
The number of branches is bounded by the number of atoms in the query, which means that the overall delay remains $\bigO{N^{1-\eps}}$. In the second case, a parent $V_i$ is a common ancestor of  several other parents in our list. We reason similarly to the one-path case and obtain that the overall delay is $\bigO{p\times N^{1-\eps}}=\bigO{N^{1-\eps}}$.

So far we discussed the case of enumerating from one view tree. In case of a set of view trees we use the \textsc{Union} algorithm to enumerate the distinct tuples.
In case the query has several connected components, i.e., it is a Cartesian product of hierarchical queries, we use the \textsc{Product} algorithm.

\section{Proofs of the Results in Section~\ref{sec:updates}}
\label{appendix:updates}

\subsection{Proof of Proposition~\ref{prop:update_time}}\label{appendix:update_time}

\begin{citedprop}[\ref{prop:update_time}]
Given a hierarchical query $Q(\calF)$ with dynamic width $\dfw$, a canonical variable order $\omega$ for $Q$,
a database of size $N$, and $\eps \in [0,1]$, 
maintaining the views in the set of view trees $\tau(\omega,\calF)$ under a single-tuple update to any input relation takes $\bigO{N^{\dfw\eps}}$ time.
\end{citedprop}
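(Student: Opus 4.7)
The plan is to analyze the cost of \textsc{UpdateTrees} (Figure~\ref{fig:update_view_trees}) on a single-tuple update $\delta R=\{\tup{x}\to m\}$. An update can affect three kinds of objects in $\tau(\omega,\calF)$: (i) plain view trees constructed by \textsc{BuildVT} for residual free-connex/$\dfw_0$-hierarchical subqueries (Line~7 of $\tau$), (ii) skew-aware view trees whose leaves are either the original relations (in the heavy case, with an $\exists H_X$ sibling) or their light parts $R^{keys}$ (in the light case), and (iii) indicator view trees returned by \textsc{IndicatorVTs}. I will bound the cost of propagating $\delta R$ through each class and show each is $\bigO{N^{\dfw\eps}}$.

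For (i) and (iii), the residual queries are $\dfw_0$-hierarchical by construction, so they admit canonical free-top variable orders; the auxiliary views added by \textsc{AuxView} make every sibling of the updated child share its parent's schema $\anc(X)\cup\{X\}$, turning sibling contributions into constant-time lookups. The delta at each ancestor view is then a single-tuple change, so the total propagation cost through such a tree is constant. This matches the desired $\bigO{N^{\dfw\eps}}$ bound trivially (and tightly when $\dfw=0$).

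The main work is case (ii). Here the challenge is to show that when the update reaches a leaf $R^{keys}(\calY)$ of a light-case view tree rooted at a bound variable $X$ (with $keys=\anc(X)\cup\{X\}$), and we propagate up to the root view $V_X$ whose schema is $\calF_X=\anc(X)\cup(\calF\cap\vars(\omega_X))$, the size of every intermediate $\delta V$ is $\bigO{N^{\dfw\eps}}$. The update fixes the variables of $\calY$ to $\tup{x}[\calY]$, so the residual delta computation is a join of the other light leaves with those variables instantiated. The free variables in $\calF_X$ not covered by $\calY$ must therefore be enumerated; by the same schema-manipulation argument used in the proof of Proposition~\ref{prop:preproc_time} (transforming the canonical $\omega$ into the free-top variable order $\freetop(\omega)$ of Section~\ref{sec:canonical_to_free_top} and invoking Remark~\ref{rem:properties_free_top}), these variables are exactly the ones measured by the $\dfw$ quantity $\rho^*((\{X\}\cup\dep_{\freetop(\omega)}(X))-\calY)$. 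By Lemma~\ref{lem:rho_rhostar}, this fractional cover is realized by an integral cover of $\dfw(\omega)$ light atoms, each contributing at most $N^\eps$ tuples per fixed $keys$-value; the AGM inequality on light parts then gives a delta of size at most $N^{\dfw(\omega)\eps}\le N^{\dfw\eps}$. Materializing this delta takes the same time by a standard worst-case optimal join on the light parts, exactly as in Subcase~2 of Lemma~\ref{lem:fact_view_tree_light_case} but restricted by the update tuple.

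Finally, I will account for the cascading effect of indicator changes: a single-tuple $\delta R$ can flip at most one value in each affected $H_X$ (Figure~\ref{fig:update_indicator_tree} shows $|\delta(\exists H)|\le 1$), and the subsequent \textsc{Apply} call at Line~9 and Line~14 of Figure~\ref{fig:update_view_trees} propagates this unit change through views whose other children (made uniform in schema by \textsc{AuxView}) again allow constant-time lookups. Since the number of indicator triples is independent of $N$, this contribution is also $\bigO{N^{\dfw\eps}}$. The hard part is the width argument in case (ii)—specifically, verifying that the canonical-to-free-top transformation preserves the set of atoms in each $\omega_X$ (Property $P_2$ in the proof of Lemma~\ref{lem:low_delta_hierarchical_low_delta width}) so that the integral cover used to bound the delta in $\omega$ is the same one that witnesses the dynamic width in $\freetop(\omega)$; once this is in place, summing over the constantly many levels and view trees yields the claimed $\bigO{N^{\dfw\eps}}$ per single-tuple update.
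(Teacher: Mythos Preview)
Your proposal is correct and mirrors the paper's proof almost exactly: the same decomposition into (i) $\dfw_0$-hierarchical subtrees with constant-time propagation via \textsc{AuxView} (paper's Lemma~\ref{lem:fact_view_tree_update}), (ii) light-case subtrees where the delta size is governed by a fractional edge cover of $\calF_X-\calY$, and (iii) constant-time indicator maintenance (paper's Lemma~\ref{lem:update_ind_tree}). The only packaging difference is that the paper names the cover quantity $\kappa(\omega,\calF)$ and bounds it by $\dfw$ via a universal argument over \emph{all} free-top variable orders (Lemma~\ref{lem:kappa_dynamic}), whereas you go through the specific order $\freetop(\omega)$; your route still works because $\dfw(\freetop(\omega))=\dfw$ (implicit in the proof of Lemma~\ref{lem:low_delta_hierarchical_low_delta width}), but you should write $\dfw(\freetop(\omega))$ rather than the undefined ``$\dfw(\omega)$'' and make that equality explicit.
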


We first give the maintenance time for the views constructed by \textsc{BuildVT} given a $\dfw_0$-hierarchical query (Lemma~\ref{lem:fact_view_tree_update}).
We then show the maintenance time for the views constructed by $\tau$ given a hierarchical query (Lemma~\ref{lem:fact_view_forest_update}). The maintenance time uses a new measure, which we relate to dynamic width (Lemma~\ref{lem:kappa_dynamic}).
We finally show the running times of \textsc{UpdateIndTree} and \textsc{UpdateTrees}.

\begin{lem}
\label{lem:fact_view_tree_update}
Given a $\dfw_0$-hierarchical query $Q(\calF)$, a canonical variable order $\omega$ for $Q$, and 
a database of size $N$, the views constructed by $\textsc{BuildVT}(\cdot, \omega, \calF)$ from Figure~\ref{fig:factorized_view_tree_algo} in the dynamic mode can be 
maintained under a single-tuple update to any input relation in $\bigO{1}$ time.
\end{lem}
\begin{proof}
At each node $X$ of a canonical variable order $\omega$ for a $\dfw_0$-hierarchical query, the set $\calF_X$ of free variables is either $\anc(X) \cup \{X\}$ if $X$ is free, or 
$\anc(X)$ if $X$ is bound because the set $\calF \cap \vars(\omega_X)$ of free variables in $\omega_X$ is empty for $\dfw_0$-hierarchical queries. 
The functions \textsc{AuxView} and \textsc{NewVT} maintain 
the following invariant for $\dfw_0$-hierarchical queries  in the dynamic mode: 
If $X$ has a sibling node in $\omega$, then the view created at node $X$ has 
$\anc(X)$ as free variables. If $X$ is bound, then already $\calF = \anc(X)$; otherwise, \textsc{AuxView} constructs an extra view with $\anc(X)$ as free variables.

Now consider an update $\delta{R}$ to a relation $R$. Due to the hierarchical property of the input query, the update $\delta{R}$ fixes the values of all variables on the path from the leaf $R$ to the root to constants. While propagating an update through the view tree, the delta at each node $X$ requires joining with the views constructed for the siblings of $X$. Each of the sibling views has $\anc(X)$ as free variables, as discussed above. Thus, computing the delta at each node makes only constant-time lookups in the sibling views. 
Overall, propagating the update through the view tree constructed for a $\dfw_0$-hierarchical query using $\textsc{BuildVT}$ takes constant time.
\end{proof}

Consider now a canonical variable order $\omega$ for a hierarchical query and a set $\calF$ of free variables.
Given a node $X$ in $\omega$, let $Q_X$ denote the join of $\atoms(\omega_X)$.
We define $\kappa(\omega,\calF)$ as:
$$\max_{X \in \vars(\omega)-\calF}\ \max_{R(\calY) \in \atoms(\omega_X)}
\{\rho_{Q_X}^{\ast}((\vars(\omega_X)\cap\calF)-\calY)\},$$
The measure $\kappa(\omega,\calF)$ is the maximal fractional edge cover number of $Q_X$ over the
free variables occurring in the subtree $\omega_X$ of $\omega$ rooted at a bound variable $X$, when the variables of one atom $R(\calY)$ in $\omega_X$ are excluded. 

\begin{lem}
\label{lem:fact_view_forest_update}
Given a hierarchical query $Q(\mathcal{F})$, a canonical variable order $\omega$ for $Q$,
a database of size $N$, and $\eps \in [0,1]$, the views constructed by $\tau(\omega, \mathcal{F})$ from Figure~\ref{fig:view_forest_main_algo} in the dynamic mode can be maintained under a single-tuple update in $\bigO{N^{\kappa(\omega,\calF)\eps}}$ time.
\end{lem}
\begin{proof}
If $Q$ is $\dfw_0$-hierarchical, the function $\tau$ returns a view tree for $Q$ that admits $\bigO{1}$ update time, per Lemma~\ref{lem:fact_view_tree_update}.

Consider now a view tree created by $\tau$ for a non-$\dfw_0$-hierarchical query.
Let us restrict this view tree such that the views created in the light case are treated as leaf views. This restricted view tree encodes the result of a $\dfw_0$-hierarchical query! As the procedure $\tau$ traverses the variable order in a top-down manner, every bound variable $X$ with a free variable below is replaced by a set of view trees where $X$ is free (heavy case) and by a view tree whose root view aggregates away $X$ and includes only free variables (light case). 
Thus, single-tuple updates to the leaves of this restricted view tree take constant time. 
That is, updates to the relations that are not part of the views materialized in the light case are constant. 

However, updates to the relations that are part of the views materialized in the light case might not be constant. 
The view tree $ltree$ constructed by $\textsc{BuildVT}$ at a bound variable $X$ is defined over the light parts of relations partitioned on $keys = \anc(X)\cup\{X\}$ (Line 16 in Figure~\ref{fig:view_forest_main_algo}).
Each view $V_Z$ in $ltree$ constructed at a variable $Z$ includes all the free variables in $\omega_Z$.
A single-tuple update $\delta{R}$ to any relation $R$ in $ltree$ fixes the values of the variables $keys$, thus reducing the size of other relations in $ltree$ to $\bigO{N^\eps}$.
The maintenance cost for $V_Z$ under the update $\delta{R}$ with schema $\calY$ is $\bigO{N^{m_Z\eps}}$, where $m_Z=\rho^*_{Q_Z}{((\vars(\omega_Z)\cap \calF) - \calY)}$. 
The maintenance cost for $ltree$ is dominated by the maintenance cost for its root $V_X$.

The change computed at $V_X$ for the single-tuple update consist of 
$\bigO{N^{m_X\eps}}$ tuples and needs to be propagated further up in the tree. 
Because there are no further light cases on the path from $X$ to the root, the propagation cost is constant per tuple. 
The overall time needed to maintain $V_X$ and propagate the change at $V_X$ up to the root is $\bigO{N^{m'_X\eps}}$, where 
$m'_X =\max_{R(\calY) \in \atoms(\omega_X)}\{\rho_{Q_X}^{\ast}((\vars(\omega_X)\cap\calF)-\calY)\}$.
In the worst case, the root variable of $\omega$ is bound;
then, maintaining the root view and its descendants takes $\bigO{N^{\kappa(\omega,\calF)\eps}}$ time. 

The views constructed by $\tau$ in the light cases thus determine the overall maintenance $\bigO{N^{\kappa(\omega,\calF)\eps}}$ time.
\end{proof}

We next relate the measure $\kappa(\omega,\calF)$ to dynamic width.

\begin{lem}
\label{lem:kappa_dynamic}
Given a canonical variable order $\omega$ for a hierarchical query $Q(\mathcal{F})$ with dynamic width $\dfw$, it holds that $\kappa(\omega,\calF) \leq \dfw$.
\end{lem}

\begin{proof}
Given any variable order $\omega'$ for $Q$ and a variable 
$X$ in $\omega'$, we denote by $Q_X^{\omega'}$ the query 
that joins the atoms in $\atoms(\omega_X')$.
 To prove $\kappa(\omega, \calF) \leq \dfw$, we need to show that
\begin{align}
\label{eq:1}
\kappa(\omega, \calF) \leq \dfw(\omega^f)
\end{align}
for any free-top variable order  $\omega^f$ for $Q$.
It follows from the definition of $\kappa(\omega,\calF)$
that  $\omega$ has a bound variable 
$X$ and an atom $R(\calY) \in \atoms(Q_X^{\omega})$ such that 
\begin{align*}
 \kappa(\omega, \calF) = \rho_{Q_X^{\omega}}^*(\calB) 
\end{align*}
where $\calB = (\vars(\omega_X) \cap \calF) - \calY$.
Since $\omega$ is canonical, it holds: 
\begin{itemize}[leftmargin=*]
\item[] $(*)$ Each atom in $Q$ containing a variable 
from $\calB$ must contain $X$.
\end{itemize}
Let $\omega^f = (T,\dep_{\omega^f})$ be a free-top 
variable order for $Q$. 
Property $(*)$ implies that the variables 
in $\calB$ depend on $X$.
Since $X$ is bound and 
all variables in $\calB$ are free, 
the latter variables 
cannot be below $X$ in $\omega^f$. 
Hence, $\calB \subseteq \dep_{\omega^f}(X)$. 
Since $R(\calY)$ contains $X$,
it must be included in $\atoms(\omega^f_X)$.  
To prove Inequality \eqref{eq:1}, 
it thus suffices to show:
\begin{align}
\label{eq:toShow2}
\rho_Q^{\ast}(\calB) \geq \rho_{Q_X^{\omega}}^*(\calB).
\end{align}
 By Property $(*)$, each atom in $Q$ covering a variable from $\calB$
  contains $X$. Hence, all such atoms are contained in 
 $\atoms(Q_X^{\omega})$. 
This implies that  
any fractional edge cover 
$\boldsymbol{\lambda}'$ of $\calB$ using atoms in 
$Q$ can be turned into 
a fractional edge cover 
$\boldsymbol{\lambda}$ of $B$ using atoms in 
$Q_X^{\omega}$ such that 
$\sum_{\lambda \in \boldsymbol{\lambda}} \lambda
 \leq \sum_{\lambda' \in \boldsymbol{\lambda}'} \lambda'$.
This implies Inequality \eqref{eq:toShow2} and hence 
Inequality \eqref{eq:1}.
\end{proof}

\begin{lem}
\label{lem:update_ind_tree}
Given an indicator tree $T_{Ind}$ constructed by \textsc{IndicatorVTs} from Figure~\ref{fig:skew_aware_views} and a single-tuple update $\delta{R}$, \textsc{UpdateIndTree} from Figure~\ref{fig:update_indicator_tree} runs in $\bigO{1}$ time.
\end{lem}
\begin{proof}
The tree $T_{Ind}$ encodes the result of a $\dfw_0$-hierarchical query and admits constant-time updates per Lemma~\ref{lem:fact_view_tree_update}. 
The remaining operations in \textsc{UpdateIndTree} also take constant time. 
\end{proof}

We next analyze the procedure \textsc{UpdateTrees} from Figure~\ref{fig:update_view_trees} under a single-tuple update.
Applying the update to each view tree from $\mathcal{T}$ (Line 1) takes $\bigO{N^{\dfw\eps}}$ time, per Lemmas~\ref{lem:fact_view_forest_update} and \ref{lem:kappa_dynamic}.
We then apply the update to each triple $(T_{All}, T_L, T_H)$ of indicator view trees. The tree $T_{All}$ is a view tree of a $\dfw_0$-hierarchical query, thus updating it takes constant time (Line 6). The tree $T_L$ is updated using $\textsc{UpdateIndTree}$ in constant time (Line 12), per Lemma~\ref{lem:update_ind_tree}. Both of these changes may trigger a change in $\exists T_H$, and 
propagating $\delta({\exists H})$ through each view tree from $\mathcal{T}$ (Lines 9 and 14) takes constant time since this change does not affect any view materialized in the light case. 
Updating each light part of relation $R$ and the affected view trees (Line 11) takes $\bigO{N^{\dfw\eps}}$ time, per Lemmas~\ref{lem:fact_view_forest_update} and \ref{lem:kappa_dynamic}. 

Overall, the procedure \textsc{UpdateTrees} maintains the views constructed by $\tau$ under a single-tuple update in $\bigO{N^{\dfw\eps}}$ time.

\subsection{Proof of Proposition~\ref{prop:major_time}}

\begin{citedprop}[\ref{prop:major_time}]
Given a hierarchical query $Q(\calF)$ with static width $\fw$, a canonical variable order 
$\omega$ for $Q$,
a database of size $N$, and $\eps \in [0,1]$, 
major rebalancing of the views in the set of view trees $\tau(\omega,\calF)$ takes $\bigO{N^{1+(\fw-1)\eps}}$ time.
\end{citedprop}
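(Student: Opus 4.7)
The plan is to bound separately the three kinds of work performed by \textsc{MajorRebalancing} in Figure~\ref{fig:major_rebalancing}: (i) recomputing the light parts $R^{\mathcal{F}}$ of every base relation $R$ appearing in some indicator triple; (ii) recomputing all views in the indicator view trees $T_L$ and $T_H$; and (iii) recomputing all views in the set of view trees $\mathcal{T} = \tau(\omega,\calF)$. Since major rebalancing is triggered exactly when $M$ is doubled or halved after the size invariant $\floor{\tfrac{1}{4}M}\le N < M$ is violated, at the moment of rebalancing the new threshold $M^{\eps}$ is within a constant factor of $N^{\eps}$, so all later bounds can be stated in terms of $N^{\eps}$.

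For (i), for each indicator triple $(T_{All},T_L,T_H)$ and each atom $R$ appearing in $T_{All}$, we compute $R^{\mathcal{F}}$ by a single linear scan of $R$: we first build, in $\bigO{|R|}$ time, a hash-based index that maps each $\mathcal{F}$-value $key$ to $|\sigma_{\mathcal{F}=key}R|$, and then emit the tuples whose $key$ has degree strictly below $M^{\eps}$. Since the number of indicator triples and atoms is a function of the query only, the total cost is $\bigO{N}$, which is dominated by the target bound.

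For (iii), the views in $\tau(\omega,\calF)$ are exactly those materialized during preprocessing, and by Proposition~\ref{prop:preproc_time} they can be computed from scratch in $\bigO{N^{1+(\fw-1)\eps}}$ time; invoking $\textsc{Recompute}$ on each $T\in\mathcal{T}$ just re-runs that construction on the freshly partitioned relations. For (ii), Lemma~\ref{lem:indicators} gives $\bigO{N}$ time to materialize all views in the triple $(T_{All},T_L,T_H)$ for any indicator produced by \textsc{IndicatorVTs}, and the number of indicator triples is bounded by the number of bound join variables in $\omega$, which is query-dependent only. Summing (i), (ii), (iii) yields the claimed $\bigO{N^{1+(\fw-1)\eps}}$ bound, with (iii) being the dominant term.

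I do not anticipate a real obstacle: the only subtle point is verifying that the invocation of \textsc{Recompute} on each $T\in\mathcal{T}$ and on the indicator trees is faithfully captured by Proposition~\ref{prop:preproc_time} and Lemma~\ref{lem:indicators} respectively, which requires noting that these lemmas bound the \emph{materialization} cost of the view trees constructed by $\tau$, \textsc{BuildVT}, and \textsc{IndicatorVTs}, and that this is exactly the work \textsc{Recompute} performs on the updated light parts. Once this correspondence is stated, the bound follows by taking the maximum of the three contributions.
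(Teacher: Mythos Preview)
Your proposal is correct and follows essentially the same approach as the paper's proof, which simply observes that the light relation parts can be computed in $\bigO{N}$ time and then invokes Proposition~\ref{prop:preproc_time} to bound the recomputation of all affected views by $\bigO{N^{1+(\fw-1)\eps}}$. Your decomposition into steps (i)--(iii) and your separate appeal to Lemma~\ref{lem:indicators} for the indicator trees make explicit what the paper folds into the single phrase ``affected views,'' but the underlying argument is the same.
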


Consider the major rebalancing procedure from Figure~\ref{fig:major_rebalancing}. 
The light relation parts can be computed in $\bigO{N}$ time. 
Proposition~\ref{prop:preproc_time} implies that the affected views can be 
recomputed in time $\bigO{N^{1+(\fw-1)\eps}}$.

\subsection{Proof of Proposition~\ref{prop:minor_time}}

\begin{citedprop}[\ref{prop:minor_time}]
Given a hierarchical query $Q(\calF)$ with dynamic width $\dfw$, a canonical variable order 
$\omega$ for $Q$,
a database of size $N$, and $\eps \in [0,1]$, 
minor rebalancing of the views in the set of view trees $\tau(\omega,\calF)$ takes $\bigO{N^{(\dfw+1)\eps}}$ time.
\end{citedprop}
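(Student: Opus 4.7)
The proof plan is to factor the total work of \textsc{MinorRebalancing} (Figure~\ref{fig:minor_rebalancing}) as the product of the number of loop iterations and the per-iteration cost of propagating a single-tuple change to the light part.

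First, I would bound the number of iterations of the \FOREACH loop in Line~2, i.e.\@ the size of $\sigma_{\mathcal{F}=key}R$. Minor rebalancing is triggered from the calling procedure \textsc{OnUpdate} (Figure~\ref{fig:on_update}) in two situations. In the insert branch, the guard $|\sigma_{\mathcal{F}=key}R| < \tfrac{1}{2}M^\eps$ immediately bounds the loop by $\bigO{M^\eps}$ iterations. In the delete branch, the guard $|\sigma_{\mathcal{F}=key}R^{\mathcal{F}}| \geq \tfrac{3}{2}M^\eps$ has just been crossed by a single-tuple update; by the domain partition condition of Definition~\ref{def:partition}, all tuples of $R$ agreeing with $key$ on $\mathcal{F}$ lie in $R^{\mathcal{F}}$, so $\sigma_{\mathcal{F}=key}R = \sigma_{\mathcal{F}=key}R^{\mathcal{F}}$, which has size at most $\tfrac{3}{2}M^\eps + 1 = \bigO{M^\eps}$. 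Since $\floor{\tfrac{1}{4}M} \leq N < M$ always holds, $M^\eps = \Theta(N^\eps)$, so in both cases the loop runs $\bigO{N^\eps}$ times.

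Second, I would bound the cost of one loop iteration. Each iteration performs (i) propagating the single-tuple change $\delta R^{\mathcal{F}} = \{\tup{x} \rightarrow cnt\}$ through every view tree $T \in \mathcal{T}$ via \textsc{Apply}; (ii) two \textsc{UpdateIndTree} calls on the indicator trees $T_L$ and $T_H$; and (iii) propagating the resulting indicator delta $\delta(\exists H)$ back through every $T \in \mathcal{T}$. Step (i) is exactly the cost of applying a single-tuple update to the light part of a relation, which by Proposition~\ref{prop:update_time} (or, more precisely, its proof via Lemmas~\ref{lem:fact_view_forest_update} and \ref{lem:kappa_dynamic} in Appendix~\ref{appendix:update_time}) takes $\bigO{N^{\dfw\eps}}$ time. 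Step (ii) takes $\bigO{1}$ time by Lemma~\ref{lem:update_ind_tree}. Step (iii) propagates a single-tuple indicator delta whose associated view is defined over light-partition keys only, so it takes constant time per view tree, and $|\mathcal{T}|$ is independent of $N$. Thus each iteration costs $\bigO{N^{\dfw\eps}}$.

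Multiplying gives total time $\bigO{N^\eps \cdot N^{\dfw\eps}} = \bigO{N^{(\dfw+1)\eps}}$, as required. The main obstacle I anticipate is step (i) of the per-iteration analysis: one must verify that the ``light-part update'' cost $\bigO{N^{\dfw\eps}}$ from Proposition~\ref{prop:update_time} actually applies to each of the $\bigO{N^\eps}$ tuples produced here, even though these tuples all share the same $\mathcal{F}$-value $key$ and are issued back-to-back rather than interleaved with arbitrary updates; this follows because the per-tuple analysis in Lemma~\ref{lem:fact_view_forest_update} is worst-case over all possible single-tuple updates to the light part and does not rely on properties of the update sequence. A minor secondary point is that in the delete case the very first deletion might be applied while $R^{\mathcal{F}}$ momentarily exceeds the light threshold, but this only affects constants hidden in the $\bigO{\cdot}$.
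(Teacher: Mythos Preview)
Your proposal is correct and mirrors the paper's own proof almost exactly: both factor the cost as $\bigO{M^\eps}$ loop iterations times $\bigO{N^{\dfw\eps}}$ per iteration, invoking Lemmas~\ref{lem:fact_view_forest_update} and~\ref{lem:kappa_dynamic} for the light-part propagation, Lemma~\ref{lem:update_ind_tree} for the indicator trees, and the constant-time argument for $\delta(\exists H)$ propagation. Your handling of the iteration bound in the delete branch is slightly more explicit (via the domain partition condition) than the paper's, but the substance is identical.
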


Figure~\ref{fig:minor_rebalancing} shows the procedure for minor rebalancing of the tuples with the partitioning value $key$ in the light part $R^{\calS}$ of relation $R$.  
Minor rebalancing either inserts fewer than $\frac{1}{2}M^\eps$ tuples into $R^{\calS}$ (heavy to light) or deletes at most $\frac{3}{2}M^\eps$ tuples from $R^{\calS}$ (light to heavy). 
Each action updates the indicator trees $T_L$ and $T_H$ in constant time (lines 5 and 6), per Lemma~\ref{lem:update_ind_tree}. 
Propagating the update to the light part of relation $R$ through each view tree from $\mathcal{T}$ (line 4) takes $\bigO{N^{\dfw\eps}}$ time, per Lemmas~\ref{lem:fact_view_forest_update} and \ref{lem:kappa_dynamic}. 
Propagating the change $\delta(\exists{H})$ through each view tree from $\mathcal{T}$ takes constant time (line 7), as 
discussed in the proof of Proposition~\ref{prop:update_time}.
Since there are $\bigO{M^\eps}$ such operations and the size invariant $\floor{\frac{1}{4}M} \leq N < M$ holds, the total time is $\bigO{N^{(\dfw+1)\eps}}$.
\end{document}